\newcommand{\mnoteOK}[1]{} %
\newcommand{\nnoteOK}[1]{} %
\newcommand{\ynoteOK}[1]{} %
\title{Optimal Erasure Codes and Codes on Graphs} %
\newcommand{\Omit}[1]{} \renewcommand{\Omit}[1]{#1}
\author{} \Omit{ \author{{\sc Yeyuan Chen, Mahdi Cheraghchi, Nikhil
\thanks{%
      Emails: \{yeyuanch, mahdich, nshagri\}@umich.edu. }\\
\date{} %
\newcommand{\cC}{\mathcal{C}} %
\newcommand{\U}{\mathcal{U}} %
\newcommand{\F}{\mathds{F}} %
\newcommand{\R}{\mathbb{R}} %
\newcommand{\E}{\mathds{E}} %
\newcommand{\supp}{\mathsf{supp}} %
\newcommand{\eps}{\epsilon} %
\newtheorem{thm}{Theorem} %
\crefname{thm}{Theorem}{Theorems} 
\Crefname{thm}{Theorem}{Theorems} 
\newtheorem{prop}[thm]{Proposition} %
\crefname{prop}{Proposition}{Propositions} 
\Crefname{prop}{Proposition}{Propositions} 
\newtheorem{claim}[thm]{Claim} %
\crefname{claim}{Claim}{Claims} 
\Crefname{claim}{Claim}{Claims} 
\newtheorem{lem}[thm]{Lemma} %
\crefname{lem}{Lemma}{Lemmas} 
\Crefname{lem}{Lemma}{Lemmas} 
\crefname{conj}{Conjecture}{Conjectures} 
\Crefname{conj}{Conjecture}{Conjectures} 
\newtheorem{coro}[thm]{Corollary} %
\crefname{coro}{Corollary}{Corollaries} 
\Crefname{coro}{Corollary}{Corollaries} 
\theoremstyle{definition} %
\crefname{examp}{Example}{Examples} 
\Crefname{examp}{Example}{Examples} 
\newtheorem{defn}[thm]{Definition} %
\crefname{defn}{Definition}{Definitions} 
\Crefname{defn}{Definition}{Definitions} 
\newtheorem{remark}[thm]{Remark} %
\crefname{remark}{Remark}{Remarks} 
\Crefname{remark}{Remark}{Remarks} 
\crefname{constr}{Construction}{Constructions} 
\Crefname{constr}{Construction}{Constructions} 
\newtheorem*{prop*}{Proposition} %
\newcommand{\zo}{\{0,1\}} %
\newcommand{\poly}{\mathsf{poly}} %
\newcommand{\Ext}{{\mathsf{Ext}}} %
\newcommand{\Cond}{{\mathsf{Cond}}} %
\newcommand{\Co}{\cC_{\mathsf{out}}} %
\newcommand{\Ro}{R_{\mathsf{out}}} %
\newcommand{\Ri}{R_{\mathsf{in}}} %
\newcommand{\Ci}{\cC_{\mathsf{in}}} %
\newcommand{\ens}{\mathfrak{C}}
\newcommand{\ensIn}{\mathfrak{C}_{\mathsf{in}}}
\newcommand{\vecv}{\overline{v}}
\newcommand{\Crow}{\cC_{\mathsf{row}}}
\newcommand{\Rrow}{R_{\mathsf{row}}}
\newcommand{\drow}{\delta_{\mathsf{row}}}
\newcommand{\Rcol}{R_{\mathsf{col}}}
\newcommand{\dcol}{\delta_{\mathsf{col}}}
  \titleformat{\part} {\normalfont\fontsize{14}{15}\bfseries}{Part
    \thepart}{1em}{} %
\normalfont\fontsize{12}{15}\bfseries}{\thesection}{1em}{} %
\normalfont\fontsize{11}{15}\bfseries}{\thesubsection}{1em}{} %
\normalfont\fontsize{11}{15}\bfseries}{\thesubsubsection}{1em}{} %
\begin{document}

\pagenumbering{roman} %
\maketitle %


\begin{abstract}
  We construct constant-sized ensembles of linear error-correcting
  codes over any fixed alphabet that can correct a given fraction of
  adversarial erasures at rates approaching the Singleton bound
  arbitrarily closely.  We provide several applications of our
  results:
  \begin{enumerate}
  \item Explicit constructions of strong linear seeded symbol-fixing
    extractors and lossless condensers, over any fixed alphabet, with
    only a constant seed length and optimal output lengths;

  \item A strongly explicit construction of erasure codes on bipartite
    graphs (more generally, linear codes on matrices of arbitrary
    dimensions) with optimal rate and erasure-correction trade-offs;

  \item A strongly explicit construction of erasure codes on
    non-bipartite graphs (more generally, linear codes on symmetric
    square matrices) achieving improved rates;

  \item A strongly explicit construction of linear nearly-MDS codes
    over constant-sized alphabets that can be encoded and decoded in
    quasi-linear time.
  \end{enumerate}
\end{abstract}

\providecommand{\keywords}[1] { {\small \textbf{\textit{Keywords---}}
    #1} } \keywords{Error-Correcting Codes; Erasure Codes; Codes on
  Graphs; Matrix Codes; MDS Codes; Code Ensembles; Randomness
  Extractors; Bit-Fixing Extractors.}

\pagenumbering{arabic} %

{\small \tableofcontents}

\section{Introduction} \label{sec:intro}

\subsection{Background}

Erasure-correcting codes are among the most fundamental subjects of
study in classical coding theory. The erasure correction of a code is
exactly determined by its minimum distance. Over sufficiently large
alphabets, Reed-Solomon codes (more generally, MDS codes) are able to
recover from any $\delta \in [0,1)$ fraction of erasures at a rate
above $1-\delta$. This trade-off, known as the Singleton bound, is the
best to hope for \cite{MS77}. Over fixed alphabets, however, the bound
is unattainable by numerous known rate upper bounds such as the
Plotkin bound and linear programming bounds (of particular interest is
the binary alphabet).

To address the shortcoming and go around the rate vs.\ distance
trade-off barrier, relaxations of erasure correction can be considered
while still allowing a capacity bound comparable to the Singleton
bound. One is to consider random erasures, and this is Shannon's model
for the erasure channel \cite{CT06}, for which the capacity is
$1-\delta$, where, this time, $\delta$ is the expected fraction of
erasures.  Alternatively, to maintain the adversarial nature of
erasures, a usual approach is to resort to list decoding and allowing
recovery up to a small list of potential codewords (cf.\
\cite{Gur04}*{Chapter~10}).

On the other hand, it is possible to introduce a slack parameter and
require approaching the Singleton bound arbitrarily closely. In this
case, the celebrated result of Alon, Edmonds, and Luby \cite{AEL95},
known as the AEL construction, achieves explicit codes that only
require an alphabet size depending on the gap to capacity.  Namely, to
achieve a rate of at least $1-\delta-\eta$, they achieve an alphabet
size of $\exp(\tilde{O}(1/\eta^4))$ and linear time encoding and
erasure decoding.  Algebraic geometry codes, on the other hand,
achieve an exponentially better alphabet size of
$O(1/\eta^2)$. However, they are technically much more involved, and
while polynomial time constructible, known constructions, encoding,
and erasure decoding algorithms for these codes are far from achieving
nearly-linear time.  The probabilistic method achieves the trade-off
given by the Gilbert-Varshamov bound, which translates to an alphabet
size of $\exp(O(1/\eta))$ for a gap to capacity of $\eta$.

When the alphabet size $q$ is a fixed parameter, such as $q=2$, an
alternative model to allow approaching the Singleton bound for worst
case erasures is to introduce randomness in the code itself;
equivalently, to consider a family of erasure codes such that any
adversarially picked erasure pattern consisting of up to $\delta$
fraction of the positions can be corrected by almost all codes in the
family\footnote{%
  We contrast this with the so-called Monte Carlo constructions of
  codes, which corresponds to a single explicit code with a randomized
  encoder (e.g., \cite{GS16} for the error model with bit
  flips). These constructions are generally allowed to use an ample
  (e.g., $\Omega(n)$) amount of randomness in the encoder, but, on
  other hand, in some cases allow the adversary to have a controlled
  form of dependence on this randomness. As such, this is an
  incomparable model with code ensembles.  }.  This model does not
appear to have been studied as extensively in the literature. The
probabilistic method for rate vs.\ distance tradeoffs of codes of
length $n$ can be recast in this model when the size of the family is
$\exp(O(\exp(n)))$ (for nonlinear codes) or $\exp(O(n^2))$ (for random
linear codes).  Classical code ensembles such as the Wozencraft
ensemble reduce the size of the family to $\exp(O(n))$
\cites{Mas63,Jus72}.  Cheraghchi \cite{Che09} (see also
\cite{Che10}*{Chapter~5}) establishes a connection between erasure
code families and randomness extractors and condensers to construct
families of polynomial size in $n$ of comparable quality.

Motivated by various applications (e.g., distributed storage systems,
fault-tolerant hardware, among others), different erasure models that
restrict the structure of erasures have been studied in the
literature. A recently studied model of codes on graphs has been
introduced by Yohananov and Yaakobi \cite{yy19} and Yohananov, Efron,
and Yaakobi \cite{yey20}. In this model, the codewords are undirected
unweighted graphs over $N$ vertices, and the adversary erases all
edges adjacent to any $\delta$ fraction of the vertices.  In terms of
the adjacency matrix, the goal is to design a code over symmetric and
zero-diagonal $N \times N$ matrices such that for any set
$S \subseteq [N]$ of size at most $\delta N$, all codewords can be
uniquely recovered even if an adversary erases all rows and columns of
the corresponding matrix that are picked by $S$. This turns out to be
a special case of a more generalized framework defined in
\cite{alo23}. The notion can naturally be extended to non-binary
alphabets as well, and of special interest are linear codes over
$\F_q^{N \times N}$ with such properties.

Interestingly, if the code is linear over $\F_2$ and contains the
all-ones codeword, this means that all other non-zero codewords define
Ramsey graphs (i.e., contain no cliques or independent sets of size at
least $\delta N$). Therefore, the problem can also be regarded as the
packing of an exponentially large collection of pairwise-distant
Ramsey graphs.

Over alphabets of size at least $N$, the tensor product of an MDS code
with itself can lead to such codes at a rate larger than
$(1-\delta)^2$, which is the best to hope for.  Similarly, for random
row and column erasures over any alphabet, including binary (e.g.,
Shannon's model), two copies of an off-the-shelf linear
capacity-achieving code can be tensored together to achieve the
desired erasure correction.  Unlike classical erasure codes, however,
this bound can also be attained arbitrarily closely for adversarial
erasure patterns over any fixed alphabet, including binary.  This can
be confirmed using the probabilistic method by analyzing random linear
graph codes \cite{KPS24}*{Proposition~3.1}.  Explicit constructions of
this quality, however, are much more challenging to achieve over small
alphabets. For the binary alphabet and constant erasure fraction
parameter $\delta \in [0,1)$, the state of the art trade-offs achieved
by explicit constructions are $R=1-2\delta$ (when $\delta < 1/2$)
\cite{yy19} and, incomparably, $R=(1-\delta^{1/3})^6 - o(1)$
\cite{KPS24} for all $\delta$. If we further require strong
explicitness\footnote{See \cref{sec:prelim} for the definition.}, the
best known construction only achieves $R=(1-\delta^{1/4})^8 - o(1)$
\cite{KPS24}. These constructions are significantly far from the
optimal trade-off of $R=(1-\delta)^2-o(1)$.

A closely related notion is when the code is over $M\times N$ matrices
over a $q$-ary alphabet and the requirement is recovery against any
$\drow$ fraction of row erasures and $\dcol$ fraction of column
erasures, for parameters $(\drow,\dcol)\in [0,1)^2$. In this case, $M$
and $N$ need not be equal and the set of row and column erasures can
be independent.  When $q=2$, this can be thought of as a bipartite
variation of graph codes with $M$ and $N$ vertices on either side. The
notion can also be captured by that of directed graph codes defined in
\cite{KPS24}*{Definition~4.2} and is related to the crisscross error
model on matrix codes, as studied in \cite{criss97}.  In this case,
the capacity of the model (i.e., the best rate to hope for) becomes
$(1-\drow)(1-\dcol)$, which can be attained arbitrarily closely over
any alphabet by random linear bipartite graph codes. Although this
model is slightly different from the non-bipartite graph codes
mentioned before, to the best of our knowledge, the best explicit
constructions of bipartite graph codes only attain the same trade-offs
as those for non-bipartite graph codes (in which case, $M=N$).

\subsection{Overview of the Results and
  Techniques}\label{sec:overview}

\paragraph{Linear Erasure Code Family.}
One of the main technical tools that we develop in this work is an
explicit construction of linear erasure code families of
\emph{constant size} over any fixed alphabet.  More precisely, for any
erasure parameter $\delta \in [0,1)$, any finite field $\F_q$, and
arbitrarily small slack parameter $\eta$ and error parameter $\eps>0$,
we construct a family of $\F_q$-linear codes of a desired length $N$
and near-optimal rate; namely, at least $1-\delta-\eta$, such that any
erasure pattern of up to $\delta N$ positions can be corrected by at
least $1-\eps$ fraction of the codes in the family.  Moreover, the
size of the code ensemble is independent of $N$ and only
(polynomially) depends on the parameters $1/\eta$ and $1/\eps$.  In
fact, our construction is \emph{strongly explicit} in that each entry
of a generator matrix for each code in the family can be computed in
polynomial time in $\log N$. Encoding and erasure decoding can both be
performed in quasi-linear (i.e., $\tilde{O}(N)$) time.  This is
achieved by a code concatenation technique and a randomness-efficient
permutation of the coordinates using a randomness extractor.  In
contrast, the result of \cite{Che09} constructs an erasure code family
of polynomial size in $N/\eps$ and exponential in $1/\eta$.  A
simplified statement of our result is recorded below.
\begin{thm}[\Cref{coro:explicit},
  Simplified] \label{thm:codes:simplified} For any $\delta \in [0,1)$,
  $\eta>0$, prime power $q$, and large enough $N$, there is a strongly
  explicit construction of an ensemble of linear codes of length $N$
  over $\F_q$ of rate at least $1-\delta-\eta$ such that any pattern
  of up to $\delta N$ erasures can be corrected by all but up to an
  $\eta$ fraction of the codes in the ensemble. The code ensemble is
  of size $\poly(1/\eta)$.
\end{thm}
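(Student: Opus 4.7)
The plan is to use an AEL-style code concatenation: an outer Reed-Solomon code over a large alphabet feeds into a small inner code over $\F_q$, with a coordinate-permutation step in between that spreads adversarial erasures approximately uniformly across inner blocks. The ensemble randomness comes from a small family of such permutations (derived via a randomness extractor) together with a small family of inner codes (obtained by brute-force search); both families are independent of $N$.

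\textbf{Ingredients.} Choose integer constants $m, k$ depending on $\eta$ and $q$, with $k/m \geq 1 - \delta - \eta/3$ and $m$ large enough for the concentration bounds below. Let $\Co$ be an outer Reed-Solomon code over $\F_Q$ with $Q = q^k$, block length $n := N/m$, and rate $\Ro = 1 - \eta/3$; it corrects any $\eta/3$ fraction of $\F_Q$-symbol erasures. By brute-force enumeration over linear codes of length $m$ and dimension $k$ over $\F_q$, fix an inner ensemble $\ensIn = \{\Ci^{(t)}\}_{t \in [T]}$ such that for every erasure pattern $E' \subseteq [m]$ of size at most $(\delta + \eta/6)m$, at least a $1-\eps_{\mathrm{in}}$ fraction of the $\Ci^{(t)}$ correct $E'$. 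A probabilistic argument over random $\F_q$-linear codes shows that such an ensemble exists with $T = \poly(1/\eps_{\mathrm{in}})$ and $m$ depending only on $\eta, q, 1/\eps_{\mathrm{in}}$. Lastly, let $\{\pi_s\}_{s \in [S]}$ be a strongly explicit family of permutations of $[N]$ arising from a strong seeded extractor via the extractor-to-sampler reduction, with the guarantee that for every $E \subseteq [N]$ of density $\delta$, at most an $\eta/3$ fraction of seeds $s$ are ``bad'', where $s$ is good if, under the partition of $[N]$ into $n$ length-$m$ blocks induced by $\pi_s$, at most an $\eta/6$ fraction of the blocks contain more than $(\delta + \eta/6)m$ positions of $\pi_s^{-1}(E)$.

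\textbf{The ensemble and its decoding.} For each $(s,t) \in [S]\times[T]$, define $\cC_{s,t}$ by outer-encoding via $\Co$, inner-encoding each $\F_Q$-symbol via $\Ci^{(t)}$ to obtain a word in $\F_q^N$ indexed by $[n]\times[m]$, and then permuting coordinates by $\pi_s$. Its rate is $\Ro \cdot \Ri \geq (1-\eta/3)(1-\delta-\eta/3) \geq 1-\delta-\eta$, and the ensemble size is $|S|\cdot|T| = \poly(1/\eta)$ once $\eps_{\mathrm{in}}$ is chosen as a small polynomial in $\eta$. For any adversarial erasure pattern $E$ of size $\delta N$: with probability at least $1 - \eta/3$ over $s$, at most an $\eta/6$ fraction of outer blocks are overloaded; conditioning on this, for each non-overloaded block the inner code $\Ci^{(t)}$ recovers it with probability at least $1 - \eps_{\mathrm{in}}$ over $t$, so by Markov with probability at least $1 - \eta/3$ over $t$ at most an additional $\eta/6$ fraction of outer positions fail. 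Hence with probability at least $1 - \eta$ over $(s,t)$, the outer code sees at most an $\eta/3$ fraction of $\F_Q$-symbol erasures, which is within its Reed-Solomon correction radius.

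\textbf{Main obstacle.} The technical heart of the proof is the construction of the permutation family: a strongly explicit ensemble of size $\poly(1/\eta)$ (independent of $N$) whose members are permutations of $[N]$, with a strong sampling/hitting guarantee for all dense sets $E$ simultaneously, and with each $\pi_s$ evaluable in $\poly(\log N)$ time. This is precisely the regime addressed by strong seeded extractors through the extractor/sampler correspondence, and setting up the parameters so that the final ensemble size depends only polynomially on $1/\eta$ is what makes the theorem non-trivial over small alphabets. The inner-code ensemble is a routine constant-size brute-force construction over codes of length $m = O_\eta(1)$, and the outer Reed-Solomon component is off-the-shelf.
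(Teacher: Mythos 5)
Your overall blueprint matches the paper's strategy at a high level: concatenate an outer code over a larger alphabet with a small inner code over $\F_q$, use a strong seeded extractor to spread adversarial erasures across inner blocks, and pick the inner code from a brute-forced constant-size ensemble. The paper's construction does essentially this, with the extractor acting as a ``shuffler'' and the ensemble indexed by pairs $(z, \Ci)$ of extractor seeds and inner codes.

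There is, however, a fatal gap in the outer-code choice. You set $Q = q^k$ with $k$ a \emph{constant} (depending only on $\eta$ and $q$, since the inner code dimension $k$ must be constant to allow brute-force search), and take $\Co$ to be a Reed--Solomon code over $\F_Q$ of block length $n = N/m$. But Reed--Solomon codes over $\F_Q$ have block length at most $Q$ (or $Q+1$). For $N$ large enough you need $n = N/m > q^k$, which is impossible. There is no way to simultaneously have the outer alphabet be a constant independent of $N$ (forced by the inner code having constant dimension) and use a vanilla Reed--Solomon code. This is precisely the tension the paper's construction is built around: the paper replaces Reed--Solomon with a near-MDS outer code over a \emph{constant} alphabet of size $q^\ell$ with $\ell = \poly(1/\eta)$, and achieving such a code with the right rate, explicitness, and quasi-linear decoding is itself a non-trivial ingredient (see the $(\alpha,\beta,\gamma)$-attainable framework and \cref{prop:outer}, using two-layered Justesen/Forney-style concatenation; later bootstrapped to $\alpha = 1$ via \cref{coro:alpha}). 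Your proof cannot go through without some substitute for this ingredient.

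Two secondary points. First, the paper does not route through a permutation family: it uses the extractor preimages $S^z_i = \Ext_z^{-1}(i)$ directly as a (slightly uneven) bucketing of $[N]$ into $M$ parts, and handles the imbalance by freezing overfull positions to zero and discarding symbols in underfull buckets. This sidesteps the ``extractor-to-permutation'' step you flag as the technical heart; it is cleaner to avoid constructing permutations at all, and the requisite balance properties follow from a Markov argument applied to the extractor guarantee (\cref{prop:balance,prop:balance:general}). Second, even granting the permutation family, your rate accounting implicitly assumes a lossless rearrangement of $N$ positions into $n$ blocks of exactly $m$; with the paper's bucketing this is only approximately true, and the small rate loss from imbalanced blocks needs to be charged against the slack $\eta$.
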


\paragraph{Extractors for Symbol-Fixing Sources.}
The correspondence between erasure code families and randomness
extractors in \cite{Che09} can turn the above construction into an
explicit, seeded, linear, and strong randomness extractor (or lossless
condenser) for (oblivious) symbol-fixing sources over $\F_q$ with only
a constant seed length (see \cref{sec:ext:facts} for
background). Namely, we prove the following.

\begin{coro}[\Cref{coro:explicit:bitfixing},
  Simplified] \label{coro:explicit:bitfixing:simplified} For any
  $\delta \in [0,1)$, $\eta>0$, prime power $q$, and large enough $N$,
  there are explicit constructions of functions
  $\Ext\colon \F_q^N \times \zo^d \to \F_q^{(\delta-\eta) N}$ and a
  $\Cond\colon \F_q^N \times \zo^d \to \F_q^{(\delta+\eta) N}$ where
  $d=O(\log(1/(\eps \eta))$.  The functions $\Ext$ and $\Cond$ are a
  strong linear $(\delta N, \eps)$-extractor and linear
  $(\leq \delta N, \eps)$-lossless condenser, respectively, for
  symbol-fixing sources.  \qedhere \qed
\end{coro}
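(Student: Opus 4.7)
The plan is to invoke the dictionary between linear erasure-code families and linear extractors/condensers for (oblivious) symbol-fixing sources developed in Cheraghchi~\cite{Che09}, using the code ensemble of \Cref{thm:codes:simplified} as the key ingredient. Under this correspondence, a matrix family $\{M_s\}$ serves as a symbol-fixing extractor when $M_s$ restricted to the random-position set is surjective, and as a lossless condenser when the same restriction is injective; both conditions translate directly into erasure-correction statements for an associated linear code.

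For the extractor, I would apply \Cref{thm:codes:simplified} with erasure fraction $1-\delta$ in place of $\delta$ and slack $\eta$, obtaining a strongly explicit ensemble $\{D_s\}_{s \in \zo^d}$ of $[N,(\delta-\eta)N]$ linear codes over $\F_q$ (shrinking the dimension if the theorem yields more) such that, for every fixed erasure pattern $T^c \subseteq [N]$ of size at most $(1-\delta)N$, at least a $(1-\eps)$-fraction of the codes correct it. Let $G_s \in \F_q^{(\delta-\eta)N \times N}$ be a generator matrix of $D_s$ and set $\Ext(x,s) := G_s x$. For a symbol-fixing source $X$ with random-position set $T$ of size $\geq \delta N$, the output splits as $G_s|_T X_T + G_s|_{T^c} X_{T^c}$ with the second summand constant; $G_s|_T$ has full row rank iff no nonzero codeword of $D_s$ is supported on $T^c$, i.e., iff $D_s$ corrects the erasure pattern $T^c$. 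The theorem supplies this for a $(1-\eps)$-fraction of seeds, giving an exactly uniform conditional output on that set of seeds and hence the strong $(\delta N,\eps)$-extractor guarantee.

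For the condenser, I would apply \Cref{thm:codes:simplified} directly with erasure fraction $\delta$ and slack $\eta$, obtaining an ensemble $\{E_s\}$ of $[N,(1-\delta-\eta)N]$ codes over $\F_q$ correcting any $\delta N$ erasure pattern for $(1-\eps)$-many seeds. Letting $H_s \in \F_q^{(\delta+\eta)N \times N}$ be the parity-check matrix of $E_s$, I define $\Cond(x,s) := H_s x$. For any $T$ with $|T| \leq \delta N$, the map $H_s|_T$ is injective iff the columns of $H_s$ indexed by $T$ are linearly independent, iff $E_s$ has no nonzero codeword supported within $T$, iff $E_s$ corrects the erasure pattern $T$. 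Whenever this holds, $H_s$ is injective on the support of $X$ and therefore preserves min-entropy exactly, yielding the required $(\leq \delta N,\eps)$-lossless condenser.

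In both cases the seed length is $d = \log_2(\text{ensemble size}) = O(\log(1/(\eps\eta)))$, relying on the refined form of \Cref{thm:codes:simplified} (namely \Cref{coro:explicit}) which carries a separate error parameter $\eps$ alongside the slack $\eta$; strong explicitness of the code ensemble transfers immediately to $\Ext$ and $\Cond$. I do not expect a substantive obstacle here: all the technical weight sits inside \Cref{thm:codes:simplified}, so the remaining work is careful bookkeeping — in each instantiation, deciding whether $\delta$ or $1-\delta$ plays the role of the erasure fraction, and whether the seed-indexed matrix is read as a generator or a parity-check.
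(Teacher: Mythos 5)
Your proposal is correct and follows essentially the same route as the paper: the paper obtains both objects from \Cref{coro:explicit} via \Cref{lem:ExtVsCodes} (the codes-vs-extractors dictionary) and \Cref{lem:duality} (the linear extractor--condenser duality), and your construction produces exactly those same matrices (generator matrices of an $[N,1-\delta,\eps]_q$-family for $\Ext$, parity-check matrices of an $[N,\delta,\eps]_q$-family for $\Cond$). The only difference is stylistic: you re-derive the full-row-rank and full-column-rank equivalences inline rather than citing \Cref{lem:ExtVsCodes} as a black box, and you prove the condenser directly from the parity-check-injectivity characterization rather than passing through the duality of \Cref{lem:duality}; this costs a few lines of re-derivation but avoids introducing the ``dual extractor'' intermediate step and is arguably the more transparent of the two presentations.
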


This is rather surprising, especially considering that for the
slightly more general class of affine sources, the probabilistic
method can only show the existence of strong seeded linear extractors
with seed length not much better than what extractors for general
sources can achieve.  Moreover, the probabilistic method shows that
there are seedless extractors for symbol-fixing and affine sources
that can extract almost all entropy. By now, several explicit
constructions approaching this goal are also known (e.g.,
\cites{CGL22,DF25,Li16,Rao09}, among many others).  However, it is
important to note that such functions are fundamentally nonlinear. In
particular, no seedless linear extractors can exist for affine
sources. Furthermore, seedless linear extractors for bit-fixing
sources are restricted by the rate vs.\ distance trade-offs of codes
and cannot extract all entropy. In fact, the Plotkin bound implies
that no fixed linear function can extract more than a constant
\emph{number of bits} from $q$-ary symbol-fixing symbols with an
entropy rate of at most $1/q$. Moreover, there is a positive entropy
rate (only depending on $q$) below which the only possible linear
symbol-fixing extractor simply adds up the input symbols over $\F_q$
(and thus can only extract one $\F_q$ symbol).

Symbol-fixing (with the important special case of bit-fixing)
extractors have been extensively studied in the literature.  Along
with the more general notion of affine extractors, they have versatile
applications in pseudorandomness (e.g., \cites{Gab10,KJS01,CZ19}),
information-theoretic cryptography (cf.\ \cite{Dod00}), complexity
theory (e.g., \cite{HIV22}), and algorithms (e.g., \cite{CI17}).
Several notions in various applications turn out to be either
equivalent or closely related to symbol-fixing extractors, such as
(exposure-) resilient functions (cf.\
\cites{Dod00,CGHFRS85,Fri92,Sti93,FT00}), all-or-nothing transforms
(AONT) \cites{Riv97,CDHKS00}, threshold secret sharing schemes (see
\cite{LCGSW19}), and wiretap codes \cite{CDS11}.  Several such
applications (e.g., \cites{CDS11,CI17,LCGSW19}) crucially require
affine and symbol-fixing extractors that are linear (possibly allowing
a seeded).

\paragraph{Optimal Bipartite Graph Codes.} We use our explicit
construction of erasure code families of optimal rate to provide a
strongly explicit construction of capacity-achieving erasure codes
over bipartite graphs (whether the number of left and right vertices
are equal or not). In other words, for any fixed prime power $q$, we
provide a strongly explicit construction of linear codes over
$\F_q^{M \times N}$ that can recover from any $\drow$ fraction of row
erasures and any $\dcol$ fraction of column erasures, achieving
optimal rate $(1-\drow)(1-\dcol)-o(1)$.  Techniques in prior work
\cites{yy19,KPS24} can be adapted to this case and provide rates
$1-\drow-\dcol$ \cite{yy19} and
$(1-\drow^{1/3})^3(1-\dcol^{1/3})^3-o(1)$ \cite{KPS24} for explicit
constructions, and $(1-\drow^{1/4})^4(1-\dcol^{1/4})^4-o(1)$ if strong
explicitness is desired \cite{KPS24}.  The following is a simplified
statement of our result.
\begin{coro}[\Cref{coro:matrix:explicit}, Simplified]
  \label{coro:matrix:explicit:simplified}
  For any $(\drow, \dcol) \in [0, 1)^2$, and large enough $N$ and $M$
  ($M$ being a power of two), there is a strongly explicit
  construction of a linear code over $\F_q^{M \times N}$ achieving
  rate at least $(1-\drow)(1-\dcol)-o(1)$ that can recover from any
  $\drow$ fraction of row erasures and any $\dcol$ fraction of column
  erasures.  The code can be encoded and erasure-decoded in
  quasi-linear time. \qedhere \qed
\end{coro}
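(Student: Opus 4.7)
The plan is classical code concatenation: an outer Reed--Solomon code over a large extension field $\F_{q^{K_r}}$ handles column erasures at the ``supersymbol'' level, while the constant-size inner ensemble from \Cref{thm:codes:simplified} handles row erasures column-by-column. Fix a small slack $\eta>0$ (to be sent to $0$). Invoking \Cref{thm:codes:simplified} with erasure fraction $\drow$, parameter $\eta/4$, and length $M$ (a power of two) yields a strongly explicit ensemble $\ens=\{C^{(1)},\dots,C^{(s)}\}$ of $\F_q$-linear codes of length $M$ and common dimension $K_r\geq(1-\drow-\eta/4)M$, of constant size $s=\poly(1/\eta)$, such that any row-erasure set of size $\leq\drow M$ is corrected by all but at most an $\eta/4$ fraction of the $C^{(i)}$. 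Let $\Co$ be a Reed--Solomon code of length $N$ and rate $R_c:=1-\dcol-\eta/2$ over an $\F_q$-linear identification $\F_q^{K_r}\cong\F_{q^{K_r}}$; the only alphabet constraint is $q^{K_r}\geq N$, which holds once $M\geq\Omega_{q,\drow,\eta}(\log N)$, a mild ``large enough $M$'' condition. Pick a balanced assignment $\sigma\colon[N]\to[s]$, say $\sigma(j)=1+(j\bmod s)$.

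\textbf{Encoding and decoding.} A message $u\in\F_q^{K_r\times R_cN}$ is first Reed--Solomon encoded ``horizontally'': view each of its $R_cN$ columns as an element of $\F_{q^{K_r}}$ and apply $\Co$ to obtain $v\in\F_q^{K_r\times N}$. Then encode each column $v_j$ using the inner code $C^{(\sigma(j))}$ to produce the codeword $X\in\F_q^{M\times N}$. The map $u\mapsto X$ is $\F_q$-linear and has rate $K_rR_c/M\geq(1-\drow-\eta/4)(1-\dcol-\eta/2)\geq(1-\drow)(1-\dcol)-O(\eta)$, which, upon choosing $\eta=o(1)$, matches the claimed rate bound. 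For decoding, given erased rows $I\subseteq[M]$ with $|I|\leq\drow M$ and erased columns $J\subseteq[N]$ with $|J|\leq\dcol N$, apply the erasure decoder of $C^{(\sigma(j))}$ to each surviving column $j\notin J$. The ensemble guarantee implies that for the specific row pattern $I$, at most $(\eta/4)s$ of the inner codes fail, so balancedness of $\sigma$ forces at most $(\eta/4)N+s$ columns of $[N]\setminus J$ where the inner decode fails. Combining with $|J|\leq\dcol N$ directly-erased columns, $\Co$ sees at most $(\dcol+\eta/4)N+s\leq(1-R_c)N$ supersymbol erasures (for $N$ large), so $\Co$ recovers $v$ by MDS erasure decoding and $u$ follows.

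\textbf{Explicitness, runtime, and the main obstacle.} Strong explicitness is inherited from each ingredient: \Cref{thm:codes:simplified} provides a strongly explicit ensemble, Reed--Solomon is strongly explicit over its alphabet, and $\sigma$ is computable in $\poly(\log N)$ time. The main technical point is end-to-end quasi-linear runtime: the inner ensemble admits $\tilde O(M)$ per-column encode/decode by \Cref{thm:codes:simplified}, contributing $\tilde O(MN)$ over all $N$ columns, while Reed--Solomon over $\F_{q^{K_r}}$ admits $\tilde O(N)$ field operations via FFT-based algorithms, each translating to $\tilde O(K_r)=\tilde O(M)$ operations over $\F_q$, for $\tilde O(MN)$ operations over $\F_q$ in total---matching the codeword length up to polylogarithmic factors. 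The only nontrivial bookkeeping is balancing the ensemble failure fraction against the outer rate slack and the RS alphabet requirement, all of which collapse into a single slack $\eta=o(1)$; this is where the ``large enough $N$ and $M$'' hypotheses are consumed.
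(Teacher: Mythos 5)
Your construction is essentially the paper's, transposed. The paper encodes each of the $M$ \emph{rows} by a (possibly repeated) member of the erasure code family---codes of length $N$, correcting \emph{column} erasures---and applies an MDS-type ``row code'' $\Crow$ over the $M$-axis to handle \emph{row} erasures. You instead encode each of the $N$ \emph{columns} by an ensemble member of length $M$ (correcting row erasures) and apply Reed--Solomon over the $N$-axis for column erasures. The mechanics are the same: a balanced assignment of ensemble members, a failure count of $O(\eta)$ fraction of surviving blocks, and outer MDS erasure decoding. Your rate, strong-explicitness, and runtime accounting all go through.

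The one genuine shortfall is the Reed--Solomon alphabet constraint $q^{K_r}\ge N$, which forces $M=\Omega_{q,\drow,\eta}(\log N)$. You describe this as ``mild,'' but it is precisely the dependence the paper takes pains to avoid: \cref{coro:matrix:explicit} guarantees thresholds $M_0,N_0=\poly(1/\eta)$ that are \emph{independent of each other}, so $M$ can be a fixed large constant while $N\to\infty$; your construction does not apply in that regime. The paper's orientation is the one that makes the MDS alphabet essentially free: its ensemble acts along the $N$-axis, so the ensemble dimension is $\Theta(N)$ and the row code's alphabet $q^{\Theta(N)}$ trivially accommodates length $M$. Even so the paper does not get Reed--Solomon for free when $\drow>0$: it first handles $\drow=0$ (where $M$ can be taken constant, so a constant-alphabet RS row code suffices after bundling), extracts the nearly-MDS code \cref{thm:almost:MDS} from that case, and only then uses it as the row code for $\drow>0$. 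To repair your argument you would likewise need a constant-alphabet near-MDS column code in place of Reed--Solomon, but the natural candidate, \cref{thm:almost:MDS}, is derived from the very statement you are proving (for $\drow=0$), so the bootstrap order matters. As written, your proof establishes the corollary only in the restricted regime $M=\Omega(\log_q N)$; this is adequate for the informal ``large enough $N$ and $M$'' phrasing but strictly weaker than \cref{coro:matrix:explicit}.
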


\paragraph{Nearly-MDS Codes over Constant Alphabet.}
As a consequence of our construction for the special case when no row
erasures can occur (and $M \ll N$), we obtain an AEL-type linear code
construction. Namely, for any fixed $\F_q$ and $\delta \in [0,1)$, and
for a gap parameter $\eta > 0$, we provide a strongly explicit
construction of $\F_q$-linear codes that can correct any $\delta$
fraction of erasures at a rate of at least $1-\delta-\eta$. As is the
case for the AEL construction \cite{AEL95}, our codes achieve a
constant alphabet size that only depends on the gap to capacity
$\eta$.  This is recorded below.

\begin{thm}[\Cref{thm:almost:MDS:improve},
  Simplified] \label{thm:almost:MDS:improve:simplified} For any
  $\delta \in [0,1)$, prime power $q$, parameter $\eta>0$, and large
  enough $N$, there is a strongly explicit construction of an
  $\F_q$-linear code achieving relative distance larger than $\delta$
  and rate at least $1-\delta-\eta$ over an alphabet of size
  $\exp(\poly(1/\eta))$.  Furthermore, the code can be encoded and
  erasure decoded in quasi-linear time.  \qedhere \qed
\end{thm}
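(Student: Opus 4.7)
The plan is to deduce this theorem from \Cref{coro:matrix:explicit:simplified} by specializing to $\drow = 0$ and re-reading the resulting matrix code as a code over the larger alphabet $\F_q^M$. The key observation is that an $\F_q$-linear matrix code $\cC \subseteq \F_q^{M \times N}$ that tolerates any $\delta$-fraction of column erasures is tautologically a length-$N$ $\F_q$-linear code over $\Sigma := \F_q^M$ that tolerates any $\delta$-fraction of symbol erasures, since the column erasures of the former are exactly the symbol erasures of the latter.

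Concretely, I would invoke \Cref{coro:matrix:explicit:simplified} with $(\drow, \dcol) = (0, \delta)$ and $M$ taken to be the smallest power of two exceeding $P(1/\eta)/\log q$ for a fixed polynomial $P$ chosen large enough that (i) the $o_N(1)$ slack in the rate guarantee falls below $\eta$ for all sufficiently large $N$, and (ii) $M$ meets the ``large enough'' threshold required by Corollary~2. This produces a strongly explicit $\F_q$-linear code $\cC \subseteq \F_q^{M \times N}$ of rate at least $1-\delta-\eta$ that recovers from any $\lfloor \delta N \rfloor$ column erasures, with quasi-linear encoding and erasure decoding.

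Viewing each column as a $\Sigma$-symbol gives alphabet size $|\Sigma| = q^M = \exp(\poly(1/\eta))$, as required. The $\F_q$-linearity transfers to the $\Sigma$-code since $\Sigma$ is an $\F_q$-vector space, and column erasures of $\cC$ coincide with $\Sigma$-symbol erasures. The erasure-correction property combined with linearity forces every nonzero codeword to have strictly more than $\delta N$ nonzero columns: otherwise a codeword of column-weight at most $\delta N$ would be indistinguishable from the zero codeword after erasing its support, contradicting unique recovery. Hence the $\Sigma$-relative distance exceeds $\delta$, while the $\Sigma$-rate equals $\dim_{\F_q}(\cC)/(MN) \geq 1-\delta-\eta$ and the running time remains $\tilde{O}(MN) = \tilde{O}(N)$ since $M$ is constant.

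The main obstacle lies not in this reduction, which is essentially a bookkeeping re-alphabetization, but upstream in \Cref{coro:matrix:explicit:simplified}: constructing bipartite matrix codes achieving the product capacity $(1-\drow)(1-\dcol)$ under strong explicitness. That construction combines \Cref{thm:codes:simplified} with a concatenation plus extractor-based shuffling scheme; once that machinery is in place, the nearly-MDS theorem above follows essentially by inspection.
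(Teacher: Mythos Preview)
Your proposal is correct and matches the paper's own derivation: the paper obtains this result exactly by specializing the bipartite matrix-code construction to $\drow=0$, setting $M$ to the minimum allowed value $M_0=\poly(1/\eta)$, and reinterpreting each column as a symbol over $\F_{q^M}$ (this is \Cref{thm:almost:MDS}). The full \Cref{thm:almost:MDS:improve} that the simplified statement points to adds two further optimizations---replacing the inner erasure-code family by a single MDS inner code, and collapsing the two Markov-averaging layers into one outer code over all $MD$ blocks---but these only sharpen the exponent of $1/\eta$ in the alphabet size and are not needed for the $\exp(\poly(1/\eta))$ bound stated here.
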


Using existing constructions of randomness extractors, the resulting
alphabet size is bounded by $\exp(\tilde{O}(1/\eta^8))$.  
Assuming nearly optimal extractors, this can be improved to 
$\exp(\tilde{O}(1/\eta^4))$, matching the alphabet size achieved 
by \cite{AEL95}.

Since \cite{AEL95} is based on expander codes \cites{SS96,Spi95} that
are defined either via the parity check matrix or a layered
construction in systematic form, we are unable to verify the strong
explicitness of this construction, whereas our codes are constructed
with strong explicitness in mind. We demonstrate that our paradigm of
using erasure code families to construct codes arbitrarily approaching
the Singleton bound can potentially achieve an alphabet size of
$\exp(O(1/\eta^2))$ assuming explicit construction of erasure code
family with optimal size; thus an exponent which is quadratically
better than what \cite{AEL95} can achieve even using optimal
(Ramanujan) expanders (see also \cref{sec:ext:facts} for a discussion
on algebraic geometry codes that exponentially outperform random
codes).

\paragraph{Non-Bipartite Graph Codes.}
Finally, we construct strongly explicit erasure codes over symmetric
square matrices over $\F_q$ with zero diagonals.  This corresponds to
erasure codes over non-bipartite graphs. More precisely, we obtain the
following.

\begin{thm}[\Cref{thm:graph:code:explicit}, Simplified]
  \label{thm:graph:code:explicit:simplified}
  For any $\delta \in [0, 1)$, prime power $q$, and large enough $N$,
  there is a strongly explicit construction of a linear code over
  symmetric matrices with zero diagonals in $\F_q^{N \times N}$
  achieving rate at least $(1-\sqrt{\delta})^4-o(1)$.  Furthermore,
  the code can be encoded and erasure-decoded against any $\delta$
  fraction of row and column erasures in quasi-linear time.  \qedhere
  \qed
\end{thm}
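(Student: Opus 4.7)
The plan is to reduce the non-bipartite graph code problem to two nested applications of the bipartite matrix code of \cref{coro:matrix:explicit:simplified}, each correcting $\sqrt{\delta}$ fraction of row/column erasures, combined via a block-partition construction that respects the symmetric zero-diagonal constraint.

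Partition $[N]$ into $n$ blocks $V_1, \ldots, V_n$ of size $N/n$ each, so that a symmetric zero-diagonal matrix $M \in \F_q^{N \times N}$ decomposes into blocks $B_{ab} := M_{V_a, V_b}$ with $B_{ba} = B_{ab}^{\mathsf T}$ and $B_{aa}$ itself symmetric zero-diagonal. For an adversarial $S \subseteq [N]$ with $|S| \le \delta N$, set $S_a := S \cap V_a$; by Markov's inequality, the set $T := \{a : |S_a| > \sqrt{\delta}\cdot N/n\}$ of ``heavy'' blocks has $|T| \le \sqrt{\delta}\cdot n$. At the outer level I apply the bipartite matrix code of \cref{coro:matrix:explicit:simplified} with $\drow = \dcol = \sqrt{\delta}$ to the $n\times n$ block array, designed so that the symmetric structure $B_{ba} = B_{ab}^{\mathsf T}$ is preserved by the encoder; this recovers every block $B_{ab}$ with $a, b \notin T$, at block-level rate $(1-\sqrt{\delta})^2 - o(1)$. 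Within each recovered block, the row- and column-erasure fractions are each at most $\sqrt{\delta}$, so a second application of \cref{coro:matrix:explicit:simplified} with the same parameters recovers the block at inner rate $(1-\sqrt{\delta})^2 - o(1)$. Diagonal blocks $B_{aa}$ constitute only a $1/n$ fraction of the content and can be handled by a trivial or recursive symmetric base code without affecting the leading rate. Multiplying the two levels yields an overall rate of $(1-\sqrt{\delta})^4 - o(1)$, with strong explicitness and quasi-linear encoding/decoding inherited from \cref{coro:matrix:explicit:simplified}.

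The main obstacle I anticipate is designing the outer block code so that symmetric block inputs produce symmetric block codewords while retaining the full rate $(1-\sqrt{\delta})^2$: a naive generic intersection with the symmetric subspace would only yield $2(1-\sqrt{\delta})^2 - 1$, which is insufficient for our bound. Candidate approaches include configuring the tensor structure underlying \cref{coro:matrix:explicit:simplified} to use the same one-dimensional component code in the row and column directions, so that the encoder is transposition-equivariant, or encoding only the upper-triangular block data and mirroring it to the lower triangle. Once symmetry-preserving encoding is in place, verifying that the outer code corrects the block-level erasure pattern---where heavy block-rows and heavy block-columns coincide by the symmetry of $S$---follows directly from the bipartite code's guarantee with parameters $\drow = \dcol = \sqrt{\delta}$.
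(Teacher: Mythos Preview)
Your two-level plan---an outer code on an $n\times n$ block array correcting a $\sqrt{\delta}$ fraction of block-row/column erasures, concatenated with an inner bipartite matrix code correcting a $\sqrt{\delta}$ fraction within each surviving block---is exactly the paper's framework, and your inner-code choice and Markov step on heavy blocks match as well. The divergence, and the gap, is the outer code.

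The construction underlying \cref{coro:matrix:explicit:simplified} is not a tensor product: it applies a single code $\Crow$ in the column direction but a \emph{different} code $\cC_i$ from an erasure code family $\ens$ to each row $i$. ``Using the same one-dimensional component in both directions'' would collapse the family to a single $\F_q$-linear code that is near-Singleton over $\F_q$, which is ruled out by the Plotkin bound. Mirroring the upper triangle does not produce a codeword of the row-ensemble structure either, so decoding guarantees are lost. Neither candidate fix extracts a symmetric outer code of rate $(1-\sqrt{\delta})^2$ from \cref{coro:matrix:explicit:simplified}.

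The paper instead abandons \cref{coro:matrix:explicit:simplified} at the outer level and uses a single nearly-MDS code $\cC_0\subseteq\F_{q^\ell}^n$ over a large constant alphabet (from \cref{thm:almost:MDS:improve:simplified}), forming the symmetric $\F_q$-tensor $\Co=\{A^\top M A : M\text{ symmetric over }\F_q\}$ where $A$ is a generator matrix of $\cC_0$ over $\F_q$. This is transpose-invariant by construction, every row and column is a $\cC_0$-codeword (so $\sqrt{\delta}$ block-erasure correction follows from the distance of $\cC_0$), and a direct dimension count gives rate at least $(1-\sqrt{\delta})^2-\eps$ with no loss from the symmetry constraint. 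The large alphabet $q^{\ell^2}$ is then brought down to $q$ by concatenating each off-diagonal block with the inner bipartite code (using the transpose $\EncIn(C[i,j]^\top)^\top$ on sub-diagonal blocks to preserve symmetry); diagonal blocks are simply zeroed out, costing nothing in rate since $\cC_0$ has relative distance exceeding $1/n$. After that, the remainder of your argument goes through essentially unchanged.
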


This improves the previously known strongly explicit constructions
achieving rates $(1-\delta^{1/4})^8-o(1)$ and explicit constructions
with rate $(1-\delta^{1/3})^6-o(1)$, both from \cite{KPS24}, for the
whole range of $\delta$.  The construction follows a framework similar
to \cite{KPS24} that concatenates a tensor code and a nearly optimal
bipartite graph code. Our improvement is mainly a consequence of
better choices of the underlying codes, which we also construct in
this paper. Concretely, we use the tensor product of two copies of a
strongly explicit nearly-MDS code, over a constant-sized alphabet, as
the outer code. For the inner code, we use our strongly explicit
construction of optimal bipartite graph codes. In contrast,
\cite{KPS24} resorts to an exhaustive search for a suitable inner
code. This allows us to apply a single concatenation, unlike the
iterative concatenation that is needed in \cite{KPS24}, leading to our
improvement.

\subsection{Organization} \label{sec:org} The rest of the article is
organized as follows. In \cref{sec:prelim}, we recall basic
definitions and notation that are used
throughout. \Cref{sec:ext:facts} recalls the notion of symbol-fixing
extractors and establishes their connections to erasure code families
that are also formalized in this section.  \Cref{sec:codes} provides
the main technical tool used throughout the work; namely, a strongly
explicit construction of erasure code families of constant size over
any fixed alphabet.  \Cref{sec:graphs} uses this to provide a strongly
explicit construction of optimal linear erasure codes over bipartite
graphs (equivalently, over matrices of desired dimensions over any
$\F_q$).  As an immediate consequence of this construction, strongly
explicit codes arbitrarily achieving the Singleton bound over
constant-sized alphabets (depending on the gap to capacity) are
constructed in \cref{sec:AEL}. Finally, \cref{sec:symmetric} provides
a strongly explicit construction of erasure codes over non-bipartite
graphs (equivalently, over symmetric matrices over $\F_q$) that
improve the state of the art on the rate versus erasure correction
trade-off for explicit codes.

\subsection{Preliminaries and Notation}\label{sec:prelim}

Throughout the paper, $q$ is a fixed prime power. Of particular
interest is the binary case where $q=2$.  For any matrix $M$ whose
rows and columns are indexed by $A$ and $B$, respectively, and any
subset $S\subseteq A$ of rows and any subset $T\subseteq B$ of
columns, we use $M|_{S,T}$ to denote the submatrix of $M$ consisting
of the rows and the columns indexed $S$ and $T$, respectively. For any
specific row index $a\in A$ and column index $b\in B$, we use $M[a,b]$
to denote the entry of $M$ at row indexed by $a$ and column indexed by
$b$.  An explicit construction of a linear code is one that is
equipped with an algorithm that outputs a generator matrix $G$ for the
code over the underlying field in polynomial time in the size of the
matrix. We say that the construction is \emph{strongly explicit} if
there is an algorithm that, given a row and a column index, outputs
the corresponding entry in $G$ (as an element of the underlying field)
in polynomial time in the bit-length of the indices (and output
length).  The set $\{1, \ldots, N\}$ is denoted by $[N]$.  All
logarithms are taken to base $2$.  We occasionally use the asymptotic
notation $\tilde{O}(f(n))$ as a shorthand for
$O(f(n) (\log f(n))^{O(1)})$.  We also use the notation $\poly(f(n))$
for $f(n)^{O(1)}$.

\paragraph{Extractors and Condensers.}
Let $\Omega$ be a finite set and $X$ be a distribution defined by the
probability mass function $p_X\colon \Omega \to \R^{\geq 0}$. The
min-entropy of $X$, denoted by $H_\infty(X)$ is defined as
\[
  H_\infty(X) \coloneq \min_{x \in \supp(X)} -\log p_X(x),
\]
where $\supp(X)$ denotes the support of $X$; i.e., the set of outcomes
with non-zero probability mass.  We use $U_\Omega$ to denote the
uniform distribution on $\Omega$.  The $\ell_1$ distance between two
probability measures $p_X$ and $p_Y$ over $\Omega$ is defined as the
usual geometric $\ell_1$ distance when the distributions are regarded
as vectors of probabilities; namely,
$\|p_X-p_Y\|_1 \coloneq \sum_{x\in \Omega} |p_X(x) - p_Y(x)|$. This is
twice the statistical (or total variation) distance between the two
distributions.  When there is no risk of confusion, we may refer to a
random variable to imply its underlying probability distribution. Two
distributions $X$ and $Y$ are $\eps$-close if their statistical
distance is at most $\eps$.  This is denoted as $X \sim_\eps Y$. We
use $X \sim Y$ to denote that the random variable $X$ is drawn from
the distribution $Y$.  Overloading the notation, for a set $\Omega$,
we use the shorthand $X \sim \Omega$ for $X \sim U_\Omega$; i.e., $X$
is uniformly sampled from $\Omega$.

For finite sets $Z$ and $\Omega'$, a function
$\Ext\colon \Omega \times [D] \to \Omega'$ is a (strong, seeded)
$(k, \eps)$-extractor if, for any random variable $X$ on $\Omega$ with
$H_\infty(X) \geq k$, and an independent $Z \sim [D]$, the
distribution of $(Z, \Ext(X,Z))$ is $\eps$-close to the uniform
distribution over $[D] \times \Omega'$. By an averaging argument, this
implies that for any $\eps_1 \eps_2 = \eps$, for all but at most an
$\eps_1$ fraction of seeds $z \in [D]$, the function $\Ext(\cdot, z)$
extracts the source $X$ within error $\eps_2$ (i.e.,
$\Ext(X, z) \sim_{\eps_2} \U_{\Omega}$).

In this work, we shall use explicit constructions of strong explicit
extractors for the high min-entropy regime. In order to capture any
future progress on the state of the art for extractor constructions,
we provide an abstract formulation of the guarantees that we need
below.

\begin{defn} \label{defn:ext:constr} For absolute constants
  $\gamma_1,\gamma_2 \geq 2$, we say that extractors are
  $(\gamma_1,\gamma_2)$-attainable (resp., strongly
  $(\gamma_1,\gamma_2)$-attainable) if the following holds for some
  function $f(\Delta)$.  For any fixed $\Delta > 0$, large enough $n$,
  and error parameter $\eps > 0$, there is a strong
  $(n-\Delta, \eps)$-extractor
  $\Ext\colon \zo^n \times \zo^d \to \zo^m$ where
  $d \leq f(\Delta) + \gamma_1 \log(1/\eps) + O(1)$ and
  $m \geq n-\Delta - \gamma_2 \log(1/\eps) - O(1)$.  Moreover, $\Ext$
  runs in polynomial time in $2^n$ (resp., polynomial time in
  $n/\eps$).
\end{defn}

The probabilistic method shows that (without considering the runtime),
the above definition can be satisfied for $\gamma_1=\gamma_2=2$ for
$f(\Delta) = \log(\Delta)$ \cite{AB09}*{Section~21.5.4} and that this
is the best to hope for \cites{NZ96,RTS00}.

As for explicit constructions, below we quote an explicit construction
of extractors based on the zig-zag product of graphs.

\begin{thm}\cite{RVW00}*{Rephrased} \label{thm:RVW}
  For any $\Delta > 0$, there is a strong\footnote{The proof details
    of the claim that the extractor is strong appears in the full
    version of this work \cite{RVW01}*{Remark~6.8}.}
  seeded $(n-\Delta, \eps)$-extractor
  $\Ext\colon \zo^n \times \zo^d \to \zo^m$ where
  $d = 2 \log \Delta + 4 \log(1/\eps) + O(1)$ and
  $m = n - \Delta - 2 \log(1/\eps) - O(1)$.  Moreover, the function
  can be computed in time $2^{2^{O(\Delta)}} \cdot \poly(n)$. \qedhere
  \qed
\end{thm}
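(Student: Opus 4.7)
The plan is to instantiate the RVW zig-zag extractor construction in the high-min-entropy regime $k = n - \Delta$ and carefully audit the resulting parameters. First I would obtain a ``base'' extractor $E_0\colon \zo^{n_0} \times \zo^{d_0} \to \zo^{m_0}$ for input length $n_0 = \poly(\Delta, \log(1/\eps))$ by brute-force search over all candidate Boolean functions. The probabilistic method guarantees the existence of a nearly optimal base extractor with $d_0 = \log\Delta + 2\log(1/\eps) + O(1)$ and $m_0 = n_0 - \Delta - 2\log(1/\eps) - O(1)$; since a candidate can be certified in time $\exp(n_0)$ and there are $2^{2^{O(n_0)}}$ candidates, the total search cost collapses to $2^{2^{O(\Delta)}}\cdot\poly(1/\eps)$.

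Second, I would apply a constant number of levels of zig-zag composition to $E_0$, as in RVW, to lift it into an extractor whose input length is an arbitrary target $n \gg n_0$. Each stage expands the input length substantially while contributing only a bounded additive amount to both the seed length and the entropy loss; aggregating across the (constant number of) stages produces the advertised seed length $d = 2\log\Delta + 4\log(1/\eps) + O(1)$ and output length $m = n - \Delta - 2\log(1/\eps) - O(1)$. The factor-of-two blowups relative to the base extractor's optimal parameters arise because the composition uses one copy of the base seed at each of the two levels needed to reach input length $n$.

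Third, I would verify the runtime: once $E_0$ has been tabulated (a table of size $2^{O(n_0+d_0)} = 2^{2^{O(\Delta)}}$), each evaluation of the composed extractor on a pair $(x,z)\in \zo^n\times \zo^d$ costs only $\poly(n)$ additional operations on top of table lookups, giving the advertised $2^{2^{O(\Delta)}}\cdot\poly(n)$ overall bound. The main obstacle I expect is the \emph{strong} extractor property: the vanilla RVW analysis bounds only the joint distance of $(Z, \Ext(X,Z))$ from uniform on $\zo^d \times \zo^m$, while strongness requires that for most fixed $z$ the marginal $\Ext(X,z)$ is close to uniform. Following \cite{RVW01}*{Remark~6.8}, I would re-run the composition analysis while tracking conditional min-entropy of the output against each fixed seed at every level rather than merely the joint statistical distance, which is the place where the routine parameter-chasing stops being automatic and genuine new bookkeeping is required.
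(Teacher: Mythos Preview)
The paper does not prove this theorem at all: it is quoted as a black-box result from \cite{RVW00} (with the strongness claim deferred to \cite{RVW01}*{Remark~6.8}), and the statement carries a bare \texttt{\textbackslash qed} with no argument. So there is no ``paper's proof'' to compare your proposal against; anything you write here goes beyond what the present paper does.

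That said, your sketch has a concrete accounting error that would prevent it from delivering the stated runtime. You set $n_0 = \poly(\Delta,\log(1/\eps))$ and then enumerate ``all candidate Boolean functions'' on $n_0+d_0$ input bits. The number of such functions is $2^{m_0\cdot 2^{n_0+d_0}}$, which is doubly exponential in $n_0$; since your $n_0$ already grows with $\log(1/\eps)$, the search cost is at least $2^{2^{\poly(\log(1/\eps))}}$, not $2^{2^{O(\Delta)}}\cdot\poly(1/\eps)$ as you claim. The theorem's runtime has \emph{no} dependence on $\eps$ in the doubly-exponential factor, so the base object must be something whose description depends only on $\Delta$. Separately, your assertion that a \emph{constant} number of zig-zag levels lifts $E_0$ to arbitrary input length $n$ while adding only $O(1)$ to the seed is not how the RVW recursion works; reaching length $n$ from a fixed-size base normally takes $\Theta(\log n)$ compositions, and keeping the seed and entropy loss bounded across those levels is precisely the delicate part of \cite{RVW00}. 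If you want to actually reconstruct the result rather than cite it, you would need to follow the specific recursion in \cite{RVW01} rather than the two-level heuristic you describe.
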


In the language of \cref{defn:ext:constr}, this immediately translates
into the following.

\begin{prop} \label{prop:ext:constr} Extractors are strongly
  $(4,2)$-attainable.
\end{prop}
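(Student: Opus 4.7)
The plan is essentially to unpack \cref{thm:RVW} and verify its parameters against \cref{defn:ext:constr}. I would set $f(\Delta) \coloneq 2 \log \Delta$ once and for all. Given a fixed $\Delta > 0$, large enough $n$, and error $\eps > 0$, invoking \cref{thm:RVW} produces a strong $(n-\Delta, \eps)$-extractor $\Ext\colon \zo^n \times \zo^d \to \zo^m$ with $d = 2\log\Delta + 4\log(1/\eps) + O(1)$ and $m = n - \Delta - 2\log(1/\eps) - O(1)$. These translate directly into $d \leq f(\Delta) + 4 \log(1/\eps) + O(1)$ and $m \geq n - \Delta - 2\log(1/\eps) - O(1)$, matching the parameter requirements for $(\gamma_1, \gamma_2) = (4, 2)$.

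The only remaining point is the runtime condition in the strong version of \cref{defn:ext:constr}, which demands evaluation in $\poly(n/\eps)$. The construction of \cref{thm:RVW} runs in time $2^{2^{O(\Delta)}} \cdot \poly(n)$; since $\Delta$ is regarded as fixed in \cref{defn:ext:constr}, the leading double-exponential factor is an absolute constant, leaving a total runtime of $\poly(n)$, which is well within $\poly(n/\eps)$.

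Because the argument amounts to matching parameter and running-time bounds, I do not foresee any genuine obstacle. The only thing worth double-checking is that the additive $O(1)$ slacks appearing in both the seed length and the output length of \cref{thm:RVW} are absorbed into the $O(1)$ slacks permitted by \cref{defn:ext:constr}; this is routine bookkeeping. Altogether, these checks establish that extractors are strongly $(4,2)$-attainable.
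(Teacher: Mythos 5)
Your argument is correct and is exactly the paper's reasoning: the paper states \cref{prop:ext:constr} as an immediate translation of \cref{thm:RVW} into the language of \cref{defn:ext:constr}, which is precisely the parameter- and runtime-matching you carry out (with $f(\Delta)=2\log\Delta$ and the double-exponential in the fixed $\Delta$ absorbed as a constant). Nothing further is required.
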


Among other extractors that fit our parameter regimes are those
constructed in \cites{GW97,CRVW02} which achieve larger $\gamma_1$
(and, moreover, \cite{CRVW02} is only weakly explicit in the sense of
running in polynomial time in $2^n$).

A dual notion to strong extractors is that of lossless condensers.  A
(strong, seeded) $(\leq k,\eps)$-lossless condenser is a seeded
function $\Cond\colon \Omega \times D \to \Omega'$ such that for any
random variable $X$ on $\Omega$ with $H_\infty(X) \leq k$, and an
independent $Z \sim [D]$, the distribution of $(Z,\Cond(X,Z))$ is
$\eps$-close to a distribution of min-entropy $H_\infty(X)$.

Linear extractors (resp., lossless condensers) are seeded extractors
(resp., lossless condensers) that are linear functions of their inputs
for each fixed seed. Other than general sources, of particular
interest to us is the class of (oblivious) bit-fixing and
symbol-fixing sources.  A symbol-fixing source (also called bit-fixing
for the special case of $q=2$) of ($q$-ary) entropy $k$ is a
distribution over $\F_q^N$ where $k$ of the $N$ coordinates are
uniform and independent over $\F_q$ and the rest are fixed to
arbitrary values. They are a special case of the more general affine
sources that are defined by the uniform distribution over a subspace
of $\F_q^N$ of dimension $k$. A linear algebraic argument implies that
if a linear function extracts an affine source within any error less
than $1/2$, the error must actually be exactly zero. Consequently, a
linear seeded affine (and, in particular, symbol-fixing) extractor
with error at most $\eps$ must extract any affine source (of
sufficient entropy) perfectly (i.e., with zero error) for all but at
most an $\eps$ fraction of seeds.

\paragraph{Graph Codes.}
In \cref{sec:graphs,sec:symmetric}, we provide explicit constructions
of linear bipartite and non-bipartite graph codes that are defined
here.

\begin{defn} \label{def:bigraph:code} An $[M,N,\drow,\dcol]_q$-graph
  code is a code over $M \times N$ matrices with entries from a
  $q$-ary alphabet such that no two codewords coincide on any
  submatrix containing at least $(1-\drow)M$ of the rows and at least
  $(1-\dcol)N$ of the columns.  When $q$ is a prime power, the code is
  linear when it is a linear subspace of $\F_q^{M \times N}$.
\end{defn}

Note that when $q=2$, each codeword can be thought of as the adjacency
matrix of a bipartite graph with $M$ left vertices and $N$ right
vertices. In that case, any graph from the code can be uniquely
identified if an adversary erases all edges adjacent to at most a
$\drow$ fraction of the left vertices and at most a $\dcol$ fraction
of the right vertices. Alternatively, one can interpret the codewords
as adjacency matrices of directed graphs \cite{KPS24}. A related term
in the literature is the notion of matrix codes against crisscross
erasures \cite{criss97}. The notion of rate for codes described by
\cref{def:bigraph:code} is defined in the standard sense.

For non-bipartite graphs, the analogous definition consists of square
symmetric matrices, recorded below.

\begin{defn} \label{def:graph:code} An $[N,\delta]_q$-graph code $\cC$
  is a code over $N \times N$ symmetric matrices with entries from a
  $q$-ary alphabet (and rows and columns indexed by $[N]$) and
  all-zeros diagonals (``zero'' being any fixed element of the
  alphabet) such that, for any set $S \subseteq [N]$ of size at most
  $\delta N$, no two codewords coincide on the submatrix picked by the
  set of rows and columns that lie in $[N]\setminus S$.  When $q$ is a
  prime power, the code is linear when it is a linear subspace of
  $\F_q^{N \times N}$.  The rate of the code is defined to
  be\footnote{%
    This is defined so that the rate of the identity code becomes
    $1$. In terms of the analogy with undirected graphs, there are
    $\binom{n}{2}$ possible undirected graphs, and the rate measures
    the density of a packing of graphs.}  $\log_q|\cC|/\binom{n}{2}$.
\end{defn}

When $q=2$, each codeword can be thought of as the adjacency matrix of
an undirected non-bipartite graph with $N$ vertices.  Any graph from
the code can then be uniquely identified if an adversary erases all
edges adjacent to at most a $\delta$ fraction of the vertices. We
recall the achievability result in \cite{KPS24} on random linear graph
codes below.

\begin{prop}\cite{KPS24}*{Proposition 3.1}
  \label{prop:nonbipartite:existence}
  For any prime power $q$, fixed $\delta \in [0,1)$ and parameter
  $\eta>0$, there is an $N_0 = O(1/\eta)$ such that the following
  holds. For all $N\ge N_0$, there exist linear $[N,\delta]_q$-graph
  codes having rate at least $(1-\delta)^2-\eta$. \qedhere \qed
\end{prop}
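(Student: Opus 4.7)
The plan is to apply the probabilistic method to a uniformly random linear subspace of the ambient space of admissible matrices. Let $V \subseteq \F_q^{N \times N}$ denote the $\F_q$-subspace of symmetric matrices with zero diagonal, so $n \coloneq \dim_{\F_q} V = \binom{N}{2}$. I will pick a uniformly random $k$-dimensional subspace $\cC \subseteq V$ and choose $k$ so that the expected number of ``bad'' nonzero codewords is strictly less than one; a standard averaging argument then yields the desired code.

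The first step is to characterize bad codewords. By linearity, the $[N,\delta]_q$-graph code property is equivalent to: for every $S \subseteq [N]$ with $|S| \le \lfloor \delta N \rfloor$, the only $M \in \cC$ with $M|_{[N]\setminus S,\,[N]\setminus S}=0$ is the zero matrix. Call $M \in V \setminus \{0\}$ \emph{bad} if such an $S$ exists; equivalently, the support of $M$ (viewed as an undirected graph on $[N]$) admits a vertex cover of size at most $\lfloor \delta N\rfloor$. Fixing $S$, the matrices in $V$ vanishing on $([N]\setminus S) \times ([N]\setminus S)$ form an $\F_q$-subspace of dimension $n - \binom{N-|S|}{2} \le n - \binom{\lceil (1-\delta) N\rceil}{2}$, so by union bound the number of bad matrices is at most $\binom{N}{\lfloor \delta N\rfloor}\, q^{\,n-\binom{\lceil (1-\delta)N\rceil}{2}}$.

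The second step is the standard linear-code expectation calculation: for a uniformly random $k$-dimensional subspace $\cC$ of $V$, every fixed nonzero $v \in V$ lies in $\cC$ with probability $(q^k-1)/(q^n-1) \le q^{k-n}$. Hence the expected number of bad nonzero elements of $\cC$ is at most
\[
  \binom{N}{\lfloor \delta N\rfloor}\, q^{\,k-\binom{\lceil (1-\delta)N\rceil}{2}}.
\]
Choosing $k \coloneq \left\lfloor \binom{\lceil (1-\delta)N\rceil}{2} - \log_q \binom{N}{\lfloor \delta N\rfloor} - 1 \right\rfloor$ makes this expectation strictly less than $1$, so some realization of $\cC$ has no bad nonzero codeword and thus is a linear $[N,\delta]_q$-graph code.

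Finally, I compute the rate. We have $\binom{\lceil (1-\delta)N\rceil}{2}/\binom{N}{2} \ge (1-\delta)^2 - O(1/N)$ and $\log_q \binom{N}{\lfloor \delta N\rfloor}/\binom{N}{2} \le (H(\delta)/\log_2 q)\cdot (2/(N-1)) = O(1/N)$, where the implicit constants depend only on $q$ and $\delta$. Consequently the rate $k/\binom{N}{2}$ is at least $(1-\delta)^2 - C/N$ for some constant $C=C(q,\delta)$, and setting $N_0 \coloneq \lceil C/\eta\rceil = O(1/\eta)$ yields rate at least $(1-\delta)^2 - \eta$ for all $N \ge N_0$. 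There is no substantive obstacle; the only point requiring mild care is to run the counting inside $V$ (rather than all of $\F_q^{N\times N}$) so that the symmetric/zero-diagonal structure is respected by both the ambient subspace and the linear code.
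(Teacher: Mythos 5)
Your proof is correct and takes essentially the same probabilistic-method route that the paper uses for the analogous bipartite statement (\cref{prop:bipartite:existence}), which in turn "closely follows" the [KPS24] proof of the very proposition you are proving: in both, one samples a random linear code and union-bounds over erasure patterns $S$ and nonzero codewords the probability that a codeword vanishes outside $S \times S$. The only cosmetic difference is the sampling model: you draw a uniformly random $k$-dimensional subspace of $V$ (so the dimension, and hence the rate, is automatic), while the paper draws $k$ i.i.d.\ uniform generator matrices and must separately verify that the resulting map is injective (by including the empty erasure pattern in the union bound). Your accounting is also done in the reverse order (count bad matrices in $V$, then multiply by $\Pr[v \in \cC] \le q^{k-n}$) versus the paper's (count codewords, then multiply by the probability each is bad for a fixed pattern), but the two give the identical bound $\binom{N}{\lfloor\delta N\rfloor}\, q^{\,k-\binom{\lceil(1-\delta)N\rceil}{2}}$.
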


It is straightforward to observe from
\cref{def:bigraph:code,def:graph:code} that any
$[M,N,\drow,\dcol]_q$-graph code (resp., $[N,\delta]_q$-graph code)
must have rate at most $(1-\drow)(1-\dcol)$ (resp., $(1-\delta)^2$) by
simply considering a single erasure pattern.  Therefore, the result of
\cref{prop:nonbipartite:existence} can be seen as a characterization
of the ``capacity'' of this erasure model.  This can be adapted to the
case of bipartite graph codes as we demonstrate below. In fact, in
this work, we show the achievability of the capacity with a strongly
explicit construction.

\begin{prop} \label{prop:bipartite:existence} For any prime power $q$,
  fixed $(\drow, \dcol) \in [0,1)^2$ and parameter $\eta>0$, there is
  an $M_0 = O(1/\eta)$ such that the following holds. For all integers
  $M,N$ satisfying $\min\{M,N\}\geq M_0$, there exist linear
  $[M,N,\drow,\dcol]_q$-graph codes having rate at least
  $(1-\drow)(1-\dcol)-\eta$.
\end{prop}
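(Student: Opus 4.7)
The plan is to establish this by the probabilistic method, mirroring the argument behind \cref{prop:nonbipartite:existence} in the (slightly simpler) bipartite setting where no symmetry or zero-diagonal constraint is imposed.

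First I would identify $\F_q^{M\times N}$ with $\F_q^{MN}$ and set $k:=\lceil((1-\drow)(1-\dcol)-\eta)\,MN\rceil$, then sample a random $\F_q$-linear code $\cC$ of codimension $MN-k$ via a parity-check matrix $H\in\F_q^{(MN-k)\times MN}$ with independent uniform entries. By \cref{def:bigraph:code}, $\cC$ fails to be a valid $[M,N,\drow,\dcol]_q$-graph code exactly when some nonzero codeword lies entirely inside the erased region
\[
E(R,C)\ :=\ (R\times[N])\cup([M]\times C)
\]
for some pair $(R,C)$ with $|R|\leq\drow M$ and $|C|\leq\dcol N$; note that the complementary visible submatrix has size $(M-|R|)(N-|C|)$.

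Next I would bound the failure probability for a fixed $(R,C)$. For any nonzero $v\in\F_q^{MN}$, $\Pr[Hv=0]=q^{-(MN-k)}$ by uniformity of $H$, and there are at most $q^{|E(R,C)|}=q^{MN-(M-|R|)(N-|C|)}$ nonzero vectors supported on $E(R,C)$, so the expected number of offending codewords for this pair is at most $q^{k-(M-|R|)(N-|C|)}$. Union-bounding over the at most $2^{M+N}$ admissible pairs and using the estimate $(M-|R|)(N-|C|)\geq(1-\drow)(1-\dcol)MN$ gives a total failure probability of at most
\[
2^{M+N}\cdot q^{\,k-(1-\drow)(1-\dcol)MN}.
\]
Demanding this be strictly below $1$ and translating to the rate $R=k/(MN)$ yields the sufficient condition $R\leq(1-\drow)(1-\dcol)-\frac{1}{\log q}\bigl(\frac{1}{M}+\frac{1}{N}\bigr)$, which is compatible with our choice of $k$ as soon as $\min\{M,N\}\geq M_0$ for some $M_0=O(1/\eta)$ (the hidden constant depending only on $q$). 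The probabilistic method then supplies a code with the claimed parameters.

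The argument poses no real obstacle; the only step meriting care is the union-bound bookkeeping. A sharper entropy-based count of $(R,C)$ via $\binom{M}{\drow M}\binom{N}{\dcol N}$ would tighten the constant in $M_0$, but is unnecessary here, since the loss $(1/M+1/N)/\log q$ is already linear in $1/\min\{M,N\}$ and is absorbed into the $\eta$ slack.
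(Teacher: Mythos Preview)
Your proposal is correct and follows essentially the same probabilistic-method argument as the paper: a random linear code, a union bound over the $2^{M+N}$ erasure patterns, and a first-moment bound over nonzero codewords supported in the erased region, yielding a failure probability of order $2^{M+N}q^{k-(1-\drow)(1-\dcol)MN}$. The only cosmetic difference is that you sample the code via a random parity-check matrix whereas the paper samples random generators; the resulting calculations and the $M_0=O(1/\eta)$ threshold are identical.
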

\begin{proof}
  Our proof closely follows the proof of
  \cite{KPS24}*{Proposition~3.1}.  Let
  $k \coloneq \lfloor(1-\drow)(1-\dcol)-\eta\rfloor MN$. Sample $k$
  uniformly random and independent matrices
  $G_1,\ldots, G_k \in \F_q^{M \times N}$.  For a vector
  $\vecv \in \F_q^k$, denote
  \[
    G_{\vecv} \coloneqq \sum_{i \in k} \vecv(i)G_k.
  \]
  We take the set $\{G_{\vecv} \mid \vecv \in \F_q^k \}$ to be our
  random graph code. Note that this graph code is $\F_q$-linear, and
  therefore in order to show that it can, with high probability,
  recover from any $\drow$ fraction of row erasures and $\dcol$
  fraction of column erasures, it suffices to show that for every
  erasure pattern and every non-zero codeword, at least one non-zero
  entry survives after applying the erasure pattern to the codeword.

  Fix some non-zero $\vecv$. Observe that $G_{\vecv}$ is a uniformly
  random matrix.  For some $S \subseteq [M]$ and $T \subseteq [N]$,
  denote by $E(G_{\vecv}, S, T)$ the undesirable event where, upon
  erasing the rows and columns of $G_{\vecv}$ indicated by $S$ and $T$
  respectively, every unerased entry is equal to zero. Denote by
  $E(G_{\vecv})$ the event where $E(G_{\vecv}, S, T)$ holds for at
  least one pair $(S,T)$ satisfying $|S| = \drow M$ and
  $|T| = \dcol N$. Then,
  \begin{align*}
    \Pr[E(G_{\vecv})] &\le \sum_{{\substack{S \subseteq [M], T
                        \subseteq [N],\\
    |S| = \drow M, |T| = \dcol N}}} \Pr[E(G_{\vecv}, S, T)] \\
                      &\le {\binom{M}{\drow M}}{\binom{N}{\dcol N}}
                        \cdot q^{-MN(1-\drow)(1-\dcol)} \\
                      &\le 2^{h_2(\drow)M+h_2(\dcol)N}
                        \cdot q^{-MN(1-\drow)(1-\dcol)} \\
                      &\le q^{h_2(\drow)M+h_2(\dcol)N-MN(1-\drow)(1-\dcol)} \\ 
                      &\le q^{M+N-MN(1-\drow)(1-\dcol)}.
  \end{align*}
  Here, $h_2(x) \coloneq -x \log_2 x -(1-x)\log_2(1-x)$ is the binary
  entropy function, and we have used the well-known inequality
  $\binom{a}{b} \leq 2^{a h_2(b/a)}$.  Upon applying the union bound
  over all non-zero $\vecv \in \F_q^k$:
  \begin{align*}
    \Pr\left[ \bigcup_{\vecv \in \F_q^k \setminus \{\mathbf{0}^k\}}
    E(G_{\vecv})\right] &\le q^k \cdot q^{M+N-MN(1-\drow)(1-\dcol)}
                          \le q^{-\frac{\eta MN}{2}}.
  \end{align*}
  The last inequality is true as long as
  $k \le ((1-\drow)(1-\dcol)-\eta)MN$ and $M+N < \eta MN/2$. The
  latter inequality can be ensured by taking $M$ and $N$ large enough;
  e.g., $\min\{M,N\} \geq M_0$ for some $M_0 = O(1/\eta)$.

  The union bound implies that, with high probability, the random
  construction can withstand all erasure patterns of concern (and,
  therefore, any smaller erasure patterns as well). In particular, by
  considering empty erasure patterns, we also have proven that (with
  high probability) all codewords are non-zero.  Therefore, the
  quantity $k/MN$ is indeed the rate of the code.
\end{proof}

\section{Erasure Code Families and Symbol-Fixing
  Extractors} \label{sec:ext:facts}

\subsection{Erasure Code Families}

Our main object of study is the notion of erasure code families,
formally defined below.

\begin{defn}[Erasure Code Family] \label{def:erasure:family} An
  $[n,\delta,\eps]_q$-erasure code family is an ensemble of linear
  codes over $\F_q^n$ such that the following holds. For any set
  $S \subseteq [n]$ of size at most $\delta n$, all but at most an
  $\eps$ fraction of the codes in the set are able to correct the
  erasure pattern incurred by $S$. We say that the code family has
  rate $R$ if the rate of all but at most an $\eps$
  fraction\footnote{Alternatively, we could have required all codes in
    the ensemble to be of rate $R$. This can be trivially ensured by
    artificially adjusting the rate of any rate-deficient codes in the
    ensemble to be exactly $R$ and doubling the parameter $\eps$.  }
  of the codes in the family is $R$. We say that an erasure code
  family construction is explicit (resp., strongly explicit) if there
  is an algorithm that, given an index $i$, can construct the $i$th
  code in the ensemble explicitly (resp., strongly explicitly).
\end{defn}

Recall that a linear code can correct an erasure pattern $S$ if and
only if the following equivalent conditions hold.

\begin{enumerate*}[label=(\arabic*)]
\item Any generator matrix for the code with the columns indexed by
  $S$ removed has full row rank; and
\item Any parity check matrix for the code with the columns outside
  $S$ removed has full column rank.
\end{enumerate*}

Obviously, a single code over $\F_q^n$ of relative distance larger
than $\delta$ is an $[n,\delta,0]_q$-erasure code family of size
one. This can be achieved at a rate matching the Singleton bound
$1-\delta$ for large alphabets; i.e., as long as $q \geq n$ (i.e., the
Reed-Solomon or any MDS code). For small (in particular, constant)
sized alphabets; however, the rate-distance trade-offs of codes
prevents any code from approaching the Singleton bound.  However, our
goal is to show that a constant-sized family of codes can do so
instead. Of the known rate upper bounds on codes on a given alphabet
and relative distance, we recall the Plotkin bound\footnote{There are
  a variety of asymptotic bounds, such as the well-known MRRW bounds
  based on linear programming, that are tighter than the Plotkin bound
  for small (such as binary) alphabets, or all alphabets for the
  extremal distance regime. Plotkin bound, however, performs best for
  larger (including large constant) alphabets while still allowing an
  explicit asymptotic expression for the entire range of the distance
  parameter.} as follows.

\begin{thm}[Plotkin Bound] \label{thm:Plotkin} \cite{MS77}*{Chapter~2}
  Any $q$-ary code with relative distance $\delta \in [0,1-1/q)$
  achieves a rate upper bounded by $1-\delta q/(q-1)+o(1)$ where the
  $o(1)$ term vanishes as the block length grows.  For larger
  $\delta$, the upper bound on the rate is $o(1)$.
\end{thm}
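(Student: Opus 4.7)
The plan is to prove the Plotkin bound by the classical two-step approach: first, a double-counting identity on pairwise distances that handles the ``strict'' regime $\delta > 1 - 1/q$, and second, an iterative shortening reduction that leverages this estimate for the whole range $\delta \in [0, 1-1/q)$.

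First I would fix a $q$-ary code $\cC$ of length $n$, size $M = |\cC|$, and minimum relative distance at least $\delta$, and compute the sum
\[
  S \coloneq \sum_{\substack{c, c' \in \cC \\ c \neq c'}} d(c, c')
\]
in two ways. On one hand, every ordered pair of distinct codewords contributes at least $\delta n$ to $S$, so $S \geq M(M-1)\delta n$. On the other hand, writing $m_{i,a}$ for the number of codewords carrying symbol $a$ at position $i$, the contribution from position $i$ equals $M^2 - \sum_{a} m_{i,a}^2$, which is at most $M^2(1 - 1/q)$ by Cauchy--Schwarz applied to $\sum_a m_{i,a} = M$. Summing over the $n$ positions and comparing to the lower bound yields
\[
  M \bigl( q\delta - (q-1) \bigr) \leq q \delta.
\]
When $\delta > 1 - 1/q$, this forces $M \leq q\delta / (q\delta - q + 1) = O(1)$, so the rate is $o(1)$, already matching the ``larger $\delta$'' claim of the theorem.

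For the main regime $\delta \in [0, 1-1/q)$, I would invoke iterative shortening. At each step, pigeonhole on a chosen coordinate guarantees that some symbol appears in at least a $1/q$ fraction of the surviving codewords; restricting to this subset and deleting that coordinate decreases length by $1$, shrinks the size by at most a factor of $q$, and preserves the absolute minimum distance $\geq \delta n$, thereby \emph{increasing} the relative distance to at least $\delta n/(n-t)$ after $t$ shortening steps. Taking $t^\star$ to be the smallest integer with $\delta n/(n - t^\star) > 1 - 1/q$, i.e., $t^\star \approx n\bigl(1 - \delta q/(q-1)\bigr)$, the strict Plotkin bound applies to the shortened code and caps its size by a constant $C = C(q,\delta)$. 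Unwinding then gives $M \leq C \cdot q^{t^\star}$, and dividing $\log_q M$ by $n$ delivers the desired rate bound $R \leq 1 - \delta q/(q-1) + o(1)$.

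The only technical subtlety I anticipate is controlling the $o(1)$ term across the reduction: the integer rounding in the choice of $t^\star$ and the residual constant $C(q,\delta)$ each contribute only $O(1/n)$ to the rate once $q$ and $\delta$ are fixed, but one has to verify that $C$ can be chosen depending on $q$ and $\delta$ alone (e.g., by picking $t^\star$ so that the shortened relative distance exceeds $1 - 1/q$ by a constant margin, not merely by an $o(1)$ margin) rather than growing with $n$. Beyond this bookkeeping, the argument is routine once the double-counting identity and the shortening reduction are in place.
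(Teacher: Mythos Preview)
The paper does not actually prove \cref{thm:Plotkin}; it simply quotes the Plotkin bound as a classical fact with a citation to \cite{MS77}*{Chapter~2} and moves on. So there is no ``paper's own proof'' to compare against here.

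That said, your proposal is a correct and complete sketch of the standard textbook argument (essentially what one finds in MacWilliams--Sloane): the double-counting identity on pairwise distances yields the strict Plotkin inequality $M(q\delta - (q-1)) \leq q\delta$ for $\delta > 1-1/q$, and iterated shortening (pigeonhole on a coordinate, restrict, delete) pushes an arbitrary $\delta \in [0,1-1/q)$ into the strict regime after roughly $n(1 - \delta q/(q-1))$ steps, at a multiplicative cost of $q$ per step. Your bookkeeping remark about fixing $t^\star$ so that the shortened relative distance exceeds $1-1/q$ by a margin depending only on $(q,\delta)$ is exactly the right way to ensure the residual constant $C(q,\delta)$ does not depend on $n$; with that choice the $o(1)$ term is indeed $O(1/n)$. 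Nothing is missing.
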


Observe the immediate corollary of the Plotkin bound that any
sufficiently long code with a fixed relative distance
$\delta \in (0,1)$ and rate at least $1-\delta-\eps$ requires an
alphabet of size $\Omega(1/\eps)$.

\paragraph{Random and Algebraic Geometry Codes.}
On the existence aspect, random codes achieve the Gilbert-Varshamov
bound, which, in this setting, yields an exponential alphabet size of
$\exp(O(1/\eps))$.  Somewhat miraculously, algebraic geometry codes
are known to achieve an exponentially better alphabet of size
$O(1/\eps^2)$ compared to random codes. This is a consequence of the
result below.

\begin{thm} \label{thm:AG} \cite{TVZ82} Let $q \geq 49$ be an even
  power of a prime and $\delta \in [0,1)$.  Then, there is an explicit
  construction of $q$-ary codes of large enough length and achieving
  relative distance $\delta$ and rate at least
  $1-\delta-1/(\sqrt{q}-1)$.
\end{thm}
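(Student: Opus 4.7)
The plan is to invoke Goppa's construction of algebraic geometry (AG) codes from smooth projective curves over $\F_q$. Starting from a curve $X$ of genus $g$ with $n$ distinct $\F_q$-rational points $P_1, \ldots, P_n$, and a divisor $D$ of degree $m$ on $X$ with support disjoint from $\{P_i\}$, the evaluation code is
\[
  C(D; P_1, \ldots, P_n) = \{ (f(P_1), \ldots, f(P_n)) : f \in L(D) \} \subseteq \F_q^n,
\]
where $L(D)$ is the associated Riemann--Roch space. The Riemann--Roch theorem guarantees $\dim L(D) \geq m-g+1$ as soon as $m \geq 2g-1$, so the code has dimension $k \geq m-g+1$. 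Since any nonzero $f \in L(D)$ has at most $m$ zeros on $X$, the minimum distance is at least $n-m$.

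Choosing $m \coloneq \lceil (1-\delta)n \rceil$ yields relative distance at least $\delta$ and rate
\[
  \frac{k}{n} \geq 1 - \delta - \frac{g-1}{n}.
\]
The crucial input is then a family of curves whose genus is small relative to the number of $\F_q$-rational points. For $q$ an even power of a prime, the Tsfasman--Vladut--Zink theorem supplies a sequence of curves $\{X_i\}$ with $g_i \to \infty$ meeting the Drinfeld--Vladut bound $N_q(g)/g \to \sqrt{q}-1$. Using these curves gives $g/n \to 1/(\sqrt{q}-1)$, and plugging into the bound above yields rate at least $1 - \delta - 1/(\sqrt{q}-1)$, matching the statement. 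The hypothesis $q \geq 49$ simply ensures $1/(\sqrt{q}-1)$ is small enough to leave a nontrivial rate across the full range of $\delta$.

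The main obstacle is \emph{explicitness}. The classical TVZ proof relies on modular (Shimura) curves, whose defining equations and rational points are not efficiently computable. The resolution is to invoke the Garcia--Stichtenoth tower of function fields over $\F_q$ (for $q$ an even prime power), which is given by an explicit recursive system of equations, attains the Drinfeld--Vladut bound, and admits polynomial-time algorithms for computing bases of Riemann--Roch spaces and evaluating functions at the rational places of the tower. Packaging these algorithms produces, in polynomial time, a generator matrix for the Goppa code of the desired length, dimension, and distance.

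Thus the proof reduces to (i) the Riemann--Roch-based distance/dimension estimate on Goppa codes, (ii) the existence of curve sequences attaining the Drinfeld--Vladut bound over $\F_q$ with $q$ an even prime power, and (iii) explicit algorithmic access via the Garcia--Stichtenoth construction. The first is routine algebraic geometry; the second is the heart of TVZ; the third is what upgrades the existence statement to the explicit construction claimed in the theorem.
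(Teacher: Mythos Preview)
The paper does not prove this theorem; it is quoted as a known result from the literature (the TVZ bound, with the citation \cite{TVZ82}) and is immediately followed by remarks pointing to Garcia--Stichtenoth \cite{GS95} and Shum et al.\ \cite{SAKS01} for more efficient explicit constructions. So there is no ``paper's own proof'' to compare against.

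That said, your outline is exactly the standard argument the cited works carry out: Goppa's evaluation code from a Riemann--Roch space, the dimension/distance estimate $k \geq m-g+1$, $d \geq n-m$, and then the existence of towers attaining the Drinfeld--Vl\u{a}du\c{t} bound over square $q$, with Garcia--Stichtenoth supplying the explicit recursive tower. One small slip: with $m = \lceil (1-\delta)n \rceil$ you get $n-m \leq \delta n$, not $\geq$; you want $m = \lfloor (1-\delta)n \rfloor$ (or an equivalent adjustment) to guarantee relative distance at least $\delta$. Otherwise the sketch is accurate and matches what the paper is invoking by citation.
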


The above is the so-called TVZ bound, after Tsfasman-Vl{\u a}du{\c
  t}-Zink who first described such codes \cite{TVZ82}. More efficient
constructions of such codes were later obtained by Garcia and
Stichtenoth \cite{GS95} and Shum et al.~\cite{SAKS01}. The latter
construction, despite being explicit, provides only a near-cubic time
algorithm to compute a generator matrix for the code.  Therefore,
these constructions are not strongly explicit.  For any linear code,
erasure decoding can be done in nearly cubic time by Gaussian
elimination. Reed-Solomon codes can be designed to allow for
quasi-linear time erasure correction using FFT-based algorithms.
However, for the above-mentioned algebraic geometry codes, to the best
of our knowledge, no significant improvements over Gaussian
elimination erasure decoding are known.  For some classes of algebraic
geometry codes, FFT-based encoders are known \cite{LLLEWX24}.
However, for algebraic geometry codes achieving the bounds in
\cref{thm:AG}, despite the existence of sub-quadratic time encoders
\cite{NW19}, quasi-linear time encoding remains elusive due to the
difficulty of constructing an explicit basis for such codes.

\subsection{Existence of Erasure Code Families}
\label{sec:erasure:existence}
Using the probabilistic method, it is possible to verify the existence
of erasure code families as follows.

\begin{lem} \label{lem:existence:code} For any $\delta \in [0, 1)$ and
  $\eta > 0$, there is an $[n,\delta,\eps]_q$-erasure code family of
  rate $R = 1-\delta-\eta$ and size $t$, provided that
  $t \geq 2/(\eta \eps \log q)$ and $n \geq n_0$ for some
  $n_0 = O(\log(1/\eps)/(\eta \log q))$.
\end{lem}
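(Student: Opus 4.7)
The plan is the standard probabilistic method: sample $t$ independent uniformly random $\F_q$-linear codes $\cC_1,\ldots,\cC_t$ of dimension $k := \lceil(1-\delta-\eta)n\rceil$, by drawing uniform parity-check matrices $H_i \in \F_q^{(n-k)\times n}$ and setting $\cC_i := \ker H_i$. Every such code has rate at least $1-\delta-\eta$, so the task reduces to showing that with positive probability no erasure pattern $S \subseteq [n]$ of size $\leq \delta n$ is bad for more than $\eps t$ of the codes.

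For each fixed $S$ with $|S|=s\leq \delta n$, the code $\cC_i$ fails to correct $S$ iff $H_i|_S$ has a nontrivial kernel; a union bound over the $q^s-1$ nonzero candidate kernel vectors gives the familiar estimate $\Pr[\cC_i\text{ fails on }S]\leq q^{s-(n-k)}\leq q^{-\eta n}$. Let $X_S$ be the number of codes in the family that fail on $S$. Since $X_S$ is a sum of $t$ independent Bernoulli variables of mean at most $q^{-\eta n}$, a Chernoff-type tail bound gives
\[
  \Pr[X_S\geq \eps t]\;\leq\;\bigl(e\,q^{-\eta n}/\eps\bigr)^{\eps t},
\]
provided $q^{-\eta n}<\eps/e$. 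Using that any code which corrects every pattern of size exactly $\delta n$ also corrects every smaller one, it suffices to union-bound over the at most $\binom{n}{\delta n}\leq 2^{h_2(\delta)n}$ maximum-size patterns, giving
\[
  \Pr\!\bigl[\exists S:\, X_S\geq \eps t\bigr]\;\leq\;2^{h_2(\delta)n}\bigl(e\,q^{-\eta n}/\eps\bigr)^{\eps t}.
\]
Taking $\log_2$ and applying the hypothesis $\eps t\geq 2/(\eta\log q)$, the logarithm is bounded by $n(h_2(\delta)-\eps t\,\eta\log q)+\eps t\log_2(e/\eps)\leq -n+\eps t\log_2(e/\eps)$, where we used $\eps t\,\eta\log q\geq 2$ and $h_2(\delta)\leq 1$. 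The residual term $\eps t\log_2(e/\eps)$ is $O(\log(1/\eps)/(\eta\log q))$, and so it is dominated by $-n$ as soon as $n\geq n_0$ for some $n_0=O(\log(1/\eps)/(\eta\log q))$; the overall probability is then strictly less than $1$, and the probabilistic method produces the required family.

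The main obstacle is this delicate balance between the union-bound factor $2^{h_2(\delta)n}$ and the Chernoff exponent. Because $t$ is only a constant independent of $n$, the linear-in-$n$ savings cannot come from the family size alone; instead, one must rely on the product $\eps t\cdot\eta\log q$ beating $h_2(\delta)\leq 1$, which is precisely what the hypothesis $t\geq 2/(\eta\eps\log q)$ is calibrated to guarantee (the constant $2$ is chosen so that $\eps t\,\eta\log q\geq 2>h_2(\delta)$). The additive ``activation cost'' $\eps t\log_2(e/\eps)$ needed for the Chernoff bound to be useful is what forces the lower bound on $n_0$ to take the stated order $\log(1/\eps)/(\eta\log q)$.
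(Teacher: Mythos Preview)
Your proof is correct and essentially identical to the paper's: both sample $t$ independent random linear codes (you via random parity-check matrices, the paper via random generator matrices), bound the single-pattern failure probability by $q^{-\eta n}$, control the number of failing codes by the same tail estimate $\binom{t}{\eps t}q^{-\eta n\eps t}\leq(e/\eps)^{\eps t}q^{-\eta n\eps t}$ (you phrase it as Chernoff, the paper as a union bound over size-$\eps t$ subsets of the ensemble), and then union-bound over erasure patterns. One cosmetic point: your final step asserts $\eps t\log_2(e/\eps)=O(\log(1/\eps)/(\eta\log q))$, which tacitly takes $t$ at its minimum allowed value; for arbitrary $t\geq 2/(\eta\eps\log q)$ the paper instead rearranges the same inequality as $\eps t\bigl(n\eta\log q-\log_2(e/\eps)\bigr)>n$, which is monotone in $t$ once $n\eta\log q\geq 2\log_2(e/\eps)$---or one can simply repeat codes to enlarge a minimum-size family.
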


\begin{proof}
  Our code ensemble consists of a collection of $t$ independently
  sampled random linear codes over $\F_q$.  Namely, for the given rate
  parameter $R$, each code in the ensemble is generated by a uniformly
  random $Rn \times n$ matrix over $\F_q$. We make use of the
  following well-known fact.

\begin{claim}
  Let $k \leq n$ be integers and $M \in \F_q^{k \times n}$ be drawn
  uniformly random. Then, the probability that $M$ has rank less than
  $k$ is at most $q^{k-n}$.
\end{claim}

\begin{proof}[Proof (of Claim)]
  Since each row of the matrix must avoid the span of the previous
  rows, the number of choices for $M$ of rank $k$ is as follows.
  \begin{align*}
    \prod_{i=0}^{k-1} (q^n-q^i) = q^{nk} 
    \prod_{i=0}^{k-1} (1-q^{i-n})\geq q^{nk}
    \left(1-\sum_{i=0}^{k-1} q^{i-n}\right)=
    q^{nk} \left( 1- q^{-n} \frac{q^k-1}{q-1} \right)
    \geq q^{nk}(1-q^{k-n}).
  \end{align*}
  and thus the probability of $M$ being of full row rank is at least
  \[
    \frac{q^{nk}(1-q^{k-n})}{q^{kn}} = 1 - q^{k-n}.  \qedhere
  \]
\end{proof}

Using the above claim, the chance that a random $Rn \times n$ matrix
fails to generate a code of dimension $Rn$ is at most $q^{(R-1)n}$, an
exponentially small probability.  Therefore, we can assume that all
codes in the ensemble have the same rate $R$.

Consider any erasure pattern $S \subseteq [n]$ of size at most
$\delta n$.  Recall that a linear code can correct the erasure pattern
determined by $S$ if and only if a generator matrix of the code with
the columns in $S$ removed retains a full row rank. Using the above
claim, the chance of this not being the case for a specific code in
our ensemble is at most
$\nu \coloneq q^{(R+\delta-1) n} = q^{-\eta n}$.

We are interested in the event that all but at most $\eps t$ of the
independently sampled codes in the ensemble can correct the erasure
pattern determined by $S$. It suffices to ensure that this event
occurs with a probability less than $2^{-n}$, so that a union bound on
all choices of $S$ can guarantee the existence of our desired code
ensemble.  We do so by analyzing the probability that some set $T$ of
the code ensemble of size larger than $\eps t$ cannot recover from the
erasure pattern $S$.  For a fixed $T$, this occurs with probability at
most $\nu^{|T|} \leq q^{-\eta n \eps t}$ due to the independence of
the codes in the ensemble.  We finally take a union bound on all
choices of $T$.  Altogether, it suffices to ensure that
\[
  \binom{t}{\eps t} q^{-\eta n \eps t} < 2^{-n},
\]
which holds for $t \geq 2/(\eta \eps \log q)$ as long as
$n \geq 2 \log(e/\eps)/(\eta \log q)$ (using the estimate
$\binom{a}{b} \leq (a e/b)^b$, where $e$ is the base of natural
logarithm).
\end{proof}

\subsection{Connection with Symbol-Fixing Extractors}

We recall the connection observed between erasure code families and
symbol-fixing extractors in \cite{Che09} (see also
\cite{Che10}*{Chapter~5}).

\begin{lem}\cite{Che09} \label{lem:ExtVsCodes}
  Let $\Ext\colon \F_q^n \times [D] \to \F_q^m$ be a linear function
  in the first argument. For each $z \in [D]$, let
  $G_z \in \F_q^{m \times n}$ be such that $\Ext(x, z) = G_z \cdot x$
  for $x \in \F_q^n$.  Then, $\Ext$ is a
  $((1-\delta) n, \eps)$-extractor for symbol-fixing sources (entropy
  measured in $q$-ary symbols) if any only if $\{G_z\}_{z \in [D]}$ is
  an $[n,\delta,\eps]_q$-erasure code family.
\end{lem}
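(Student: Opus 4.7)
The plan is to exploit the rigidity of linear maps on symbol-fixing (more generally, affine) sources. Fix any seed $z$ and any symbol-fixing source $X$ whose set of free coordinates is $T \subseteq [n]$ and whose fixed values on $S := [n] \setminus T$ form some constant vector $c$. Writing $X = X_T + c$, where $X_T$ is uniform on $\F_q^T$ extended by zeros on $S$, linearity yields $\Ext(X,z) = G_z X = (G_z|_{[m], T}) \cdot X_T|_T + G_z c$. The distribution of the output is therefore uniform on the coset $G_z c + \mathrm{Im}(G_z|_{[m], T})$; it equals the uniform distribution on $\F_q^m$ exactly when $G_z|_{[m], T}$ has full row rank $m$, and otherwise is supported on a proper affine subset of size at most $q^{m-1}$, whose statistical distance from uniform is at least $1 - 1/q$. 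This dichotomy is the engine of the equivalence: every seed contributes either $0$ or a quantity bounded away from zero to the averaged extraction error.

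Next I would match this algebraic condition to erasure correction. By the first equivalent criterion listed in the preliminaries, the linear code whose generator matrix is $G_z$ (i.e., whose codewords are the row span of $G_z$ inside $\F_q^n$) corrects the erasure pattern $S$ iff the submatrix obtained by keeping only the columns indexed by $T = [n]\setminus S$ retains full row rank $m$. This is precisely the perfect-extraction condition above. Hence, for each pair $(T, z)$, the seed $z$ extracts every symbol-fixing source with free positions $T$ perfectly iff the code $G_z$ corrects the erasure pattern $[n]\setminus T$; the fixed-value vector $c$ is immaterial because it only translates the output coset.

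To finish, I would range over sources and patterns simultaneously. Symbol-fixing sources of $q$-ary min-entropy at least $(1-\delta)n$ are parameterized by subsets $T$ with $|T| \geq (1-\delta)n$ together with arbitrary fixed values on $[n]\setminus T$; the worst case is $|T| = (1-\delta)n$, and these $T$ correspond bijectively to erasure patterns $S = [n]\setminus T$ of size at most $\delta n$. The strong-extractor condition expands into $\E_z\, \mathrm{SD}(G_z X, U_{\F_q^m}) \le \eps$ for every such $X$, which by the dichotomy is equivalent to the condition that at most an $\eps$ fraction of the codes $\{G_z\}_{z\in [D]}$ fail on each pattern $S$, i.e., the erasure-code-family condition of \cref{def:erasure:family}. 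The one technical nuisance---the closest thing to an obstacle here---is the $q/(q-1)$ slack incurred when converting a bound on the bad-seed fraction into a bound on the averaged statistical distance (and vice versa), since the contribution of each failing seed is bounded away from zero but not equal to one. This is absorbed by the paper's conventions, and I would double-check the statistical-versus-$\ell_1$-distance normalization to confirm the matching $\eps$'s survive intact.
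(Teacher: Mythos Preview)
The paper does not supply its own proof of this lemma; it is quoted from \cite{Che09} and immediately followed by the next cited lemma, so there is nothing to compare against on the paper's side. Your argument is the standard and correct one: the coset/rank dichotomy for linear images of symbol-fixing sources, the identification of the full-row-rank condition on $G_z|_{[m],T}$ with the erasure-correction criterion stated in the preliminaries, and the averaging over seeds. This is exactly the mechanism the paper alludes to when it remarks that a linear function extracting an affine source with error below $1/2$ must in fact have zero error.

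Your own flagged nuisance is the only point worth sharpening. The two directions are not symmetric: from ``at most an $\eps$ fraction of seeds fail'' one gets $\E_z\,\mathrm{SD}(G_z X,U)\le \eps$ on the nose, but from $\E_z\,\mathrm{SD}(G_z X,U)\le \eps$ one only gets that the failing-seed fraction is at most $\eps/(1-1/q)\le 2\eps$, since each failing seed contributes $1-q^{r_z-m}\ge 1-1/q$ rather than $1$. So the ``if and only if with the same $\eps$'' is exact in one direction and off by at most a factor $q/(q-1)$ in the other; you should state this explicitly rather than defer it to ``the paper's conventions.'' A secondary subtlety you glide over is that the equivalence between ``$G_z|_{[m],T}$ has full row rank'' and ``the code generated by $G_z$ corrects the erasure pattern $S$'' presumes $G_z$ itself has full row rank $m$; this is harmless here because rank-deficient $G_z$ fail both as extractors (for every source) and as rate-$m/n$ codes, and the definition of an erasure code family already tolerates an $\eps$ fraction of rate-deficient members.
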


The following duality between linear affine extractors and lossless
condensers was also demonstrated in \cite{Che09}:

\begin{lem}\cite{Che09}  \label{lem:duality}
  Let $G\in \F_q^{m\times n}$ and $H\in \F_q^{(n-m)\times n}$ be
  matrices of full row rank such that $G H^\top = 0$.  Define
  $g\colon \F_q^n \to \F_q^m$ by $g(x) = G\cdot x$ and
  $h\colon \F_q^n \to \F_q^{n-m}$ by $h(x) = H\cdot x$.  Then, for any
  affine space $A = a+V \subseteq \F_q^n$ (where $V$ is a vector
  subspace and $a \in \F_q^n$ is a translation) and a dual affine
  space $B = b+V^\perp \subseteq \F_q^n$ (where $V^\perp$ is the dual
  of $V$ and $b \in \F_q^n$), $g$ is an extractor (with zero error)
  for the affine source uniformly distributed on $A$ if and only if
  $h$ is a lossless condenser for the uniform distribution on $B$.
\end{lem}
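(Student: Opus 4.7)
\medskip

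\noindent\textbf{Proof proposal.} My plan is to translate both conditions into clean linear-algebraic statements about the interaction of $V$, $V^\perp$, and the row/kernel spaces of $G$ and $H$, and then observe that the hypothesis $GH^\top=0$ together with the full-row-rank assumptions makes the two statements formally dual to one another.

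First I would unpack the two sides. Writing an arbitrary element of $A=a+V$ as $a+v$ with $v$ uniform on $V$, the output $g(a+v)=Ga+Gv$ is uniform on $\F_q^m$ (i.e.\ $g$ is a zero-error extractor for the uniform source on $A$) iff the linear map $G|_V\colon V\to\F_q^m$ is surjective, since surjectivity of a homomorphism from a finite abelian group makes the pushforward of the uniform distribution uniform. On the other side, $h$ being a zero-error lossless condenser for the uniform source on $B=b+V^\perp$ amounts to the requirement that $h|_B$ be injective: the output of a zero-error lossless condenser must have min-entropy exactly $\log_q|B|=\dim V^\perp$, and since a non-injective linear restriction collapses $B$ onto a uniform distribution over a strictly smaller set, injectivity is forced. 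Equivalently, $H|_{V^\perp}$ is injective, i.e.\ $V^\perp\cap\ker H=\{0\}$.

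Next I would exploit the algebraic hypothesis. Because $GH^\top=0$ while $G$ and $H$ have full row rank with complementary row-counts $m$ and $n-m$, the row spaces of $G$ and $H$ are orthogonal complements in $\F_q^n$. Using the standard identity $\ker M=(\mathrm{rowspan}\,M)^\perp$, this gives
\begin{equation*}
\ker G=\mathrm{rowspan}(H), \qquad \ker H=\mathrm{rowspan}(G).
\end{equation*}
The main remaining step is the equivalence
\begin{equation*}
G|_V\text{ is surjective}\quad\Longleftrightarrow\quad V^\perp\cap\mathrm{rowspan}(G)=\{0\}.
\end{equation*}
The direct proof is short: if $G|_V$ misses some nonzero $w\in\F_q^m$-dual, then $w^\top G$ is a nonzero element of $\mathrm{rowspan}(G)$ that annihilates every $v\in V$, hence lies in $V^\perp$; conversely any such nonzero $w^\top G\in V^\perp\cap\mathrm{rowspan}(G)$ forces $G(V)$ into the hyperplane $w^\perp$ and spoils surjectivity. (One could instead use $\dim G(V)=\dim V-\dim(V\cap\ker G)$ together with the dimension identity $\dim(V\cap\ker G)+\dim(V^\perp+\mathrm{rowspan}(G))=n$, but the argument above is cleaner.) Combining with $\ker H=\mathrm{rowspan}(G)$ rewrites the right-hand side as $V^\perp\cap\ker H=\{0\}$, matching the condenser condition obtained in the first paragraph.

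I do not anticipate a real obstacle: every step is a direct calculation in finite-dimensional linear algebra once the probabilistic statements are recast as surjectivity/injectivity of restrictions. The only place where care is needed is the ``lossless condenser with zero error'' direction, where one must verify that being a zero-error lossless condenser for a uniform distribution on an affine set is genuinely equivalent to injectivity of the restriction, rather than something weaker like mere preservation of support size; this follows because the min-entropy of a uniform-on-support distribution equals $\log_q$ of its support size, and the map is zero error by hypothesis. Chaining the three equivalences then yields the claim.
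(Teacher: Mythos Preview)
The paper does not actually prove this lemma; it is quoted from \cite{Che09} and stated without proof, so there is no in-paper argument to compare against. Your proposal is correct and complete: the reduction of ``zero-error extractor for $U_A$'' to surjectivity of $G|_V$ and of ``zero-error lossless condenser for $U_B$'' to injectivity of $H|_{V^\perp}$ is exactly right, the identification $\ker H=\mathrm{rowspan}(G)$ follows from $GH^\top=0$ and the complementary full-rank hypotheses, and your equivalence $G|_V$ surjective $\Longleftrightarrow V^\perp\cap\mathrm{rowspan}(G)=\{0\}$ is the standard duality step, proved cleanly in both directions. The only point worth tightening in exposition is the condenser side: you should state explicitly that the pushforward of the uniform distribution on $B$ under the affine map $h$ is again uniform on its image, so that its min-entropy equals $\log_q|h(B)|$; this is implicit in what you wrote but deserves one sentence, since ``zero-error lossless'' for a general source is an $\eps=0$ closeness condition rather than literally ``output has the same min-entropy,'' and it is the uniformity of the source that collapses the two.
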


This, in particular, implies an equivalence between linear seeded
affine (in particular, symbol-fixing) extractors and lossless
condensers. One can be constructed from the other by applying the
above duality to the linear function defined by each individual seed.
An interesting corollary of this is that, unlike general seeded
extractors and lossless condensers (cf.\ \cites{RTS00,CRVW02}), the
optimal seed lengths for linear seeded affine (or symbol-fixing)
extractors and lossless condensers must be equal.

Using state-of-the-art constructions of linear extractors and lossless
condensers (for general sources), \cite{Che09} construct erasure code
families of polynomial and quasi-polynomial size. In particular, the
following is a consequence of using a linear instantiation of the
so-called GUV condenser \cite{GUV09} and (an improvement of)
Trevisan's extractor \cite{RRV99}.

\begin{thm} \label{thm:Che} \cite{Che09}\footnote{We remark that
    \cite{Che09} does not explicitly use the language of our
    \cref{def:erasure:family}; however, the result can be recast in
    this way.}  There are explicit constructions of
  $[n,\delta,\eps]_q$-erasure families achieving rates at least
  $1-\delta-\eta$ and size $\poly(n^{1/\eta}/\eps)$ or
  $\exp(O((\log^2 n)\log(1/\eta)\log(1/\eps)))$.  \qedhere \qed
\end{thm}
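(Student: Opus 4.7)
The plan is to apply \cref{lem:ExtVsCodes} to reduce the construction of explicit erasure code families to the construction of explicit linear seeded symbol-fixing extractors $\Ext\colon \F_q^n \times [D] \to \F_q^m$ for min-entropy $(1-\delta)n$ and error $\eps$, with output length $m \geq (1-\delta-\eta)n$. Via the lemma, the seed space $[D]$ indexes the ensemble, each seed defines a generator matrix $G_z$, and the ensemble size equals $D$. So the whole goal reduces to minimizing $\log D$ under the given output-length constraint.

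The crucial observation is that a symbol-fixing source is just a special case of a general source with min-entropy $(1-\delta)n$ (measured in $q$-ary symbols), so any explicit \emph{linear} strong seeded extractor (or, by \cref{lem:duality}, any linear lossless condenser) for general min-entropy sources immediately suffices. For the $\poly(n^{1/\eta}/\eps)$ bound, I would invoke the GUV lossless condenser from \cite{GUV09}, which is $\F_q$-linear (built from Parvaresh--Vardy codes) and achieves output length $(1+\alpha) k$ from a $k$-source with seed length $O((\log n + \log(1/\eps))/\alpha)$. Setting $\alpha = \Theta(\eta)$ so that after condensing and dualizing via \cref{lem:duality} the resulting extractor outputs $(1-\delta-\eta) n$ symbols, the seed length becomes $O((\log n + \log(1/\eps))/\eta)$, giving $D = \poly(n^{1/\eta}/\eps)$. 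For the $\exp(O(\log^2 n \cdot \log(1/\eta) \log(1/\eps)))$ bound, I would use Trevisan's extractor with the Raz--Reingold--Vadhan improvement \cite{RRV99}, which is bit-linear and achieves the stated seed length while extracting all but an $\eta n$ fraction of the min-entropy.

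Two routine bookkeeping steps must then be carried out. First, one must ensure $\F_q$-linearity rather than merely $\F_2$-linearity: when $q = 2^b$ this follows from bundling bits and verifying that the extractor's linear map commutes with the $\F_q$-module structure on $\F_q^n \cong \F_2^{bn}$; for general prime powers, one uses the natural $\F_q$-linear analogs of the above constructions (Parvaresh--Vardy codes, and the Reed--Solomon-based weak designs underlying Trevisan's extractor, both admit $\F_q$-linear versions). Second, one must translate min-entropy measured in bits to min-entropy measured in $q$-ary symbols and confirm that the $q$-ary symbol-fixing entropy threshold $(1-\delta)n$ is preserved through the reduction.

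The main obstacle is the $\F_q$-linearity requirement, since many off-the-shelf extractor analyses are stated for $\F_2$; however, both the Parvaresh--Vardy-based GUV condenser and the design-based Trevisan extractor are naturally algebraic and admit direct $\F_q$-linear variants, so this is a matter of careful parameter tracking rather than a genuine barrier. Once linearity and seed length are in hand, applying \cref{lem:ExtVsCodes} (in the GUV case, after first applying \cref{lem:duality} to convert the condenser to an extractor) yields the two claimed families with the stated sizes and rate $1-\delta-\eta$.
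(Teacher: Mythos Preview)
Your proposal is correct and matches the paper's own account: the paper does not give a proof but cites \cite{Che09}, explaining in the preceding sentence that the result follows from a linear instantiation of the GUV condenser \cite{GUV09} (yielding the $\poly(n^{1/\eta}/\eps)$ bound) and the Raz--Reingold--Vadhan improvement of Trevisan's extractor \cite{RRV99} (yielding the quasi-polynomial bound), combined with \cref{lem:ExtVsCodes} and \cref{lem:duality}. Your sketch recovers exactly this route, including the use of duality to pass from the GUV condenser to an extractor and the $\F_q$-linearity bookkeeping.
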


Combined with \cref{lem:ExtVsCodes,lem:duality}, we note that
\cref{lem:existence:code} implies the following consequence on the
parameters achieved by strong, seeded, linear symbol-fixing extractors
and lossless condensers:

\begin{coro} \label{coro:ext:cond} For any $\delta \in (0,1)$, there
  are functions
  $\Ext\colon \F_q^n \times \zo^d \to \F_q^{(\delta-\eta) n}$ and
  $\Cond\colon \F_q^n \times \zo^d \to \F_q^{(\delta+\eta) n}$ such
  that
  \begin{enumerate*}[label=(\arabic*)]
  \item For each $z \in \zo^d$, $\Ext(\cdot, z)$ and $\Cond(\cdot, z)$
    are $\F_q$-linear functions; and
  \item The functions $\Ext$ and $\Cond$ are a strong symbol-fixing
    extractor and lossless condenser, respectively for input ($q$-ary)
    entropy $\delta n$ and error $\eps$, where
    $d=\log(1/\eta \eps)+O(1)$.
  \end{enumerate*} \qedhere \qed
\end{coro}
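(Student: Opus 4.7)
The plan is to combine the probabilistic existence result for erasure code families (\cref{lem:existence:code}) with the dictionary between linear symbol-fixing extractors and erasure code families (\cref{lem:ExtVsCodes}) and the extractor-to-condenser duality (\cref{lem:duality}).

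For $\Ext$, I would instantiate \cref{lem:existence:code} with erasure parameter $1-\delta$ and slack $\eta$, producing an $[n,1-\delta,\eps]_q$-erasure code family of rate $1-(1-\delta)-\eta=\delta-\eta$ and ensemble size $t=\lceil 2/(\eta\eps\log q)\rceil$, valid for all sufficiently large $n$. Each generator matrix $G_z \in \F_q^{(\delta-\eta)n \times n}$ defines $\Ext(x,z)\coloneq G_z x$, which by \cref{lem:ExtVsCodes} is a strong linear $(\delta n,\eps)$-extractor for symbol-fixing sources with the claimed output length. Embedding the seed index into $\zo^d$ with $d=\lceil\log t\rceil = \log(1/(\eta\eps))+O(1)$, treating $q$ as a constant and assigning surplus seeds arbitrarily, gives the required seed length.

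For $\Cond$, I would instead instantiate \cref{lem:existence:code} with erasure parameter $\delta$ and slack $\eta$, producing an $[n,\delta,\eps]_q$-erasure code family of rate $1-\delta-\eta$. As the proof of the existence lemma already shows that a uniformly random generator is of full row rank except with exponentially small probability, we may assume every code in the ensemble has rate exactly $1-\delta-\eta$, so every parity check matrix $H_z$ has dimensions $(\delta+\eta)n \times n$. Fix any erasure pattern $S\subseteq[n]$ of size $\delta n$, equivalently a symbol-fixing source supported on $[n]\setminus S$ of $q$-ary entropy $(1-\delta)n$. The $(1-\eps)$-fraction of seeds $z$ for which $G_z$ corrects the pattern $S$ are precisely those for which $G_z$ is a perfect (zero-error) extractor for this source, by the rigidity of linear extractors on affine sources noted in \cref{sec:prelim}. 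Applying \cref{lem:duality} to the pair $(G_z,H_z)$ --- with $V$ the coordinate subspace spanned by the standard basis vectors indexed by $[n]\setminus S$ and $V^\perp$ the one spanned by those indexed by $S$ --- turns each such $H_z$ into a perfect lossless condenser for the dual symbol-fixing source on $S$, whose entropy is $n-(1-\delta)n=\delta n$. Averaging over seeds and quantifying over $S$ shows $\Cond(x,z)\coloneq H_z x$ is a strong linear $(\leq \delta n,\eps)$-lossless condenser with output length $(\delta+\eta)n$ and the same seed length $d$.

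No substantial obstacle is expected; both halves are essentially parameter arithmetic once the three preceding lemmas are in hand. The only subtle point is that the extractor and condenser are obtained from different instantiations of the existence lemma (erasure parameter $1-\delta$ with rate $\delta-\eta$ versus erasure parameter $\delta$ with rate $1-\delta-\eta$), but the ensemble-size bound $t \geq 2/(\eta\eps\log q)$ is symmetric in these parameters, which is precisely what lets a single seed length $d=\log(1/(\eta\eps))+O(1)$ serve both constructions; the threshold $n_0$ in \cref{lem:existence:code} is also uniform in $\delta$, so the same $n$ works in both cases.
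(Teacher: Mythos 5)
Your proposal is correct and follows exactly the route the paper intends: instantiate \cref{lem:existence:code} (twice, with erasure parameters $1-\delta$ and $\delta$ respectively), translate the first ensemble into a strong linear symbol-fixing extractor via \cref{lem:ExtVsCodes}, and dualize the second ensemble's parity-check matrices into a lossless condenser via \cref{lem:duality}. The observation that the two instantiations differ but share a seed length because the ensemble-size bound in \cref{lem:existence:code} depends only on $\eta$, $\eps$, $q$ is the right way to reconcile the single $d$ in the statement.
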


\section{Randomness-Efficient Linear Erasure Codes}
\label{sec:codes}

In this section, we present and analyze our construction of a
constant-sized $[N, \delta, \eps]_q$-erasure code family achieving the
optimal rate of $1-\delta-\eta$ for any $\eta>0$. Namely, we prove
\cref{thm:codes:simplified}.

\subsection{The Construction}
\label{sec:code:constr}

\begin{figure}[!t]
  \begin{center}
    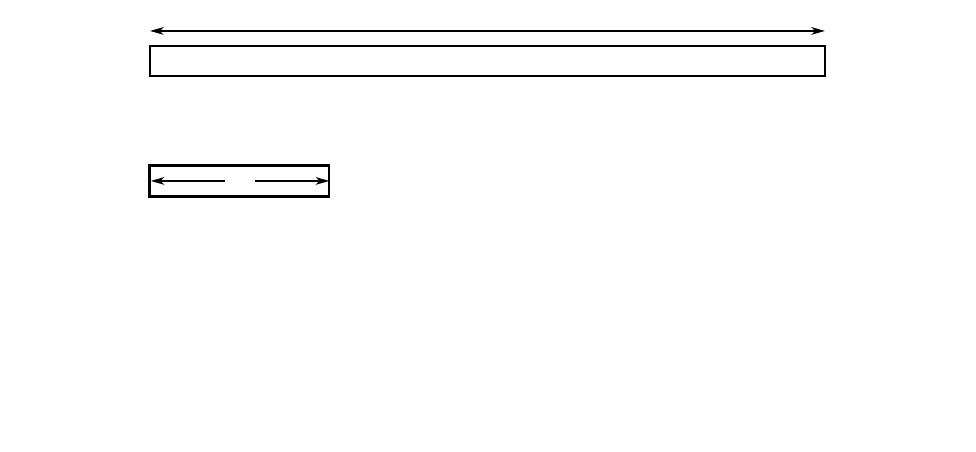
  \end{center}

  \caption{Construction of the erasure code family in \cref{sec:codes}
    from the decoder's perspective (codeword at the top, decoding at
    the bottom).  The function $\Ext\colon [N] \times [D] \to [M]$ is
    a strong $(\log N - \Delta, \nu)$-extractor for
    $\Delta=-\log(1-\delta)$ and $\nu = O(\eps \eta^2)$.  The inner
    code family $\ensIn$ is an $[L, \delta+2\eta, \mu]_q$-erasure code
    family for $\mu = O(\eps \eta)$. The construction contains a code
    for each choice of $(z,\Ci) \in [D] \times \ensIn$.  The extractor
    assigns codeword positions to outer code blocks, in order.
    Occasionally, this causes overfull blocks, in which case the
    corresponding codeword position is frozen to zero (as depicted).
  }
  \label{fig:main}
\end{figure}

Our main construction implements the following procedure that is
depicted in \cref{fig:main}. We depict the construction with the
various parameters involved left uninstantiated, and defer the
concrete balancing of the parameters to the later sections.

A foundational object used in our construction is an outer code that
can approximately achieve the Singleton bound over a constant-sized
alphabet.  Since we need multiple instantiations of such outer codes
achieving different trade-offs, we formulate an abstract definition
below.

\begin{defn} \label{defn:outer} For fixed constants
  $\alpha, \gamma \geq 1$ and $\beta \geq 0$, we say that outer codes
  are (strongly) $(\alpha, \beta, \gamma)$-attainable if for any
  $\eta\in(0,1)$, there is a $Q_0 \leq 2^{O((1/\eta)^\beta)}$ and
  $N_0 = O(1/\eta^\gamma)$ such that for any $Q \geq Q_0$ that is a
  power of $q$, the following holds: There is a (strongly) explicit
  construction of an $\F_q$-linear code $\Co \subseteq \F_Q^N$ of any
  length $N \geq N_0$, relative distance greater than $\eta$, and rate
  at least $1-O(\eta^{1/\alpha})$.  Moreover, $\Co$ can be encoded and
  erasure-decoded up to $\eta N$ erasures in $\tilde{O}(N \log Q)$
  time\footnote{Using trivial padding, it suffices to construct codes
    for a sufficiently dense infinite set of lengths $N \geq N_0$
    (i.e., as long as for each available length $N$, the next smallest
    length in the family is at most $N(1+o(1))$). Moreover, for any
    available construction over $\F_Q$, it is straightforward to
    increase the alphabet to any larger $q$ power $Q' > Q$ by
    interpreting the available codes as codes over the base field
    $\F_q$ and re-bundling the symbols to any desired packet length.
  }.
\end{defn}

We record our starting point for outer codes below.  Later, we
bootstrap our results by using all the machinery that we shall develop
to tighten these parameters (cf.\ \cref{coro:alpha}).

\begin{prop} \label{prop:outer} For any fixed $\eps>0$, outer codes
  are strongly $(3+\eps, 0, 1)$-attainable.
\end{prop}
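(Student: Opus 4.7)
The plan is to prove Proposition~\ref{prop:outer} via a Zyablov-style concatenation over $\F_Q$ for any $Q \geq Q_0$, where $Q_0$ is a sufficiently large constant depending only on $\eps$ and $q$. I set $\delta_{\mathrm{out}} = \Theta(\eta^{1/3})$ and $\delta_{\mathrm{in}} = \Theta(\eta^{2/3})$ with the implicit constants chosen so that $\delta_{\mathrm{out}} \cdot \delta_{\mathrm{in}} > \eta$.

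First, I would take the outer code to be a Reed-Solomon code over $\F_{Q^t}$ (with $t = \lceil \log_Q n_{\mathrm{out}} \rceil$) of rate $1 - \delta_{\mathrm{out}}$ and relative distance $\delta_{\mathrm{out}}$; this is strongly explicit and admits FFT-based quasi-linear encoding and erasure decoding. Next, I would take the inner code to be a strongly explicit $\F_Q$-linear code of length $L = \Theta(t)$, dimension $t$, and relative distance $\delta_{\mathrm{in}}$. The concatenated code then has length $n_{\mathrm{out}} L = N$, rate $(1 - \delta_{\mathrm{out}}) R_{\mathrm{in}}$, and relative distance greater than $\delta_{\mathrm{out}} \cdot \delta_{\mathrm{in}} > \eta$. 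For an inner code approaching the Gilbert-Varshamov bound with $R_{\mathrm{in}} \geq 1 - O(\eta^{2/3} \log(1/\eta))$, the total rate is $1 - O(\eta^{1/3})$, which is at least $1 - O(\eta^{1/(3+\eps)})$ for $\eta \in (0,1)$ and any $\eps > 0$, since $\eta^{1/3} < \eta^{1/(3+\eps)}$ in this regime; the $\eps$ slack is what absorbs the lower-order $\eta^{2/3}\log(1/\eta)$ term.

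The main obstacle is constructing a strongly explicit inner code of length $L = \Theta(\log N)$ over constant alphabet $\F_Q$ with near-GV rate. A clean option is a BCH code over $\F_Q$, which is strongly explicit via its generator polynomial and has rate loss $O(\delta_{\mathrm{in}} \log_Q L) = O(\eta^{2/3} \log_Q \log N)$; this loss is dominated by $\eta^{1/(3+\eps)}$ provided $\log N$ is at most a doubly-exponential function of $1/\eta$, which covers essentially all cases of interest. An alternative is a brute-force search for an optimal near-GV inner code of block length $L$, which takes time $q^{O(L^2)} = N^{O(\log N)}$ as a one-time precomputation, after which per-entry access to the inner generator matrix is $\poly(\log N)$. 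Encoding and erasure decoding of the concatenated code are quasi-linear via standard FFT-based Reed-Solomon algorithms and local operations on the inner blocks, giving total runtime $\tilde{O}(N \log Q)$ as required.
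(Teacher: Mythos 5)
Your high-level approach --- a Zyablov-style concatenation with a Reed--Solomon outer code and a GV-bound inner code --- is the right starting point and is the same strategy the paper takes. The gap is in strong explicitness of the inner code, and neither of your two proposed fixes closes it; this is exactly the issue that forces the paper to add one more concatenation layer.

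Concretely, your single-level concatenation needs a strongly explicit $\F_Q$-linear code of block length $L = \Theta(\log N)$ near the GV bound. Your option (b), brute-force search, takes time at least $\poly(q^L) = \poly(N)$ even using the derandomized search of Porat--Rothschild. But strong explicitness, as defined in \cref{sec:prelim}, requires each generator-matrix entry to be computable in time $\poly(\log N)$ --- there is no allowance for a superpolylogarithmic ``one-time precomputation.'' This yields only an \emph{explicit} construction, which the paper explicitly flags as insufficient when dismissing a single Forney concatenation. Your option (a), BCH over $\F_Q$, is strongly explicit but has rate $1 - O(\eta^{2/3} \log_Q L) = 1 - O(\eta^{2/3}\log_Q \log N)$, which degrades with $N$; \cref{defn:outer} demands rate $1 - O(\eta^{1/\alpha})$ for \emph{all} $N \geq N_0$ with $N_0$ depending only on $\eta$, so ``covers essentially all cases of interest'' does not suffice --- for any fixed $\eta$, the BCH rate eventually drops below the required threshold as $N \to \infty$.

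The paper's resolution is a two-layer inner code: an intermediate Reed--Solomon code of block length $\Theta(1/\eta^{1/3})$ over a constant-size alphabet, followed by a GV inner code of block length only $O(\log\log N)$. This shrinks the brute-force search to $\poly(q^{O(\log\log N)}) = \poly(\log N)$ time, preserving strong explicitness. The price paid is one extra Reed--Solomon rate factor of $(1-\eta^{1/3})$ and balancing all three distances at $\eta^{1/3}$, so the overall rate is $1 - O(\eta^{1/3}\log(1/\eta))$; the logarithmic loss is what forces $\alpha = 3 + \eps$ rather than your $\alpha = 3$. (The footnote in the paper also notes a strongly explicit alternative with only two layers: using an explicit --- not necessarily strongly explicit --- expander-based code such as that of Spielman as the inner code, since at inner block length $O(\log N)$, polynomial-time explicitness translates to polylogarithmic total time.)
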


\begin{proof}
  We take $Q_0 = q$, so that $\beta = 0$.  A natural idea is to use
  expander-based codes such as \cites{SS96,Spi95,AEL95} that are
  equipped with linear time encoders and decoders.  However, we are
  unable to verify whether these constructions are strongly explicit
  (even if the underlying expander graph construction is strongly
  explicit) due to the layered nature of the constructions in
  systematic form\footnote{In general, expander codes are more
    naturally defined in terms of a parity check matrix, but
    systematic representation is needed to avoid the need for costly
    Gaussian elimination to transition from parity checks to a
    generator matrix.}.  To avoid this, we use the classical
  concatenated codes of \cite{Jus72} that concatenate arbitrary
  Reed-Solomon codes with a family of inner codes of logarithmic
  length, most of which are on the Gilbert-Varshamov bound (which, for
  any desired relative distance $\delta_0 > 0$, implies a rate of
  $1-O(\delta_0 \log_q(1/\delta_0))$).  This provides a concatenated
  code of relative distance larger than $\eta$ and rate
  $1-O(\sqrt{\eta} \log_q(1/\eta))$ (i.e., $\alpha=2+\eps$ is
  attained).  However, if an exponentially large inner code family (in
  the inner code block length) is used, as \cite{Jus72} does, the
  minimum block length $N_0$ for the final code would not be
  polynomially bounded in $1/\eta$. To avoid this, Forney's code
  concatenation \cite{For66} combined with an exhaustive search for
  the inner code \cite{PR11} can be used. The side effect of this,
  though, is that the exhaustive search takes exponential time in the
  block length of the inner code, which in this case is logarithmic in
  the block length $N$ of the final code. That is, the construction
  would take $\poly(N)$ time, not achieving strong explicitness. To
  address this\footnote{%
    Alternatively, an expander-based construction based on
    \cite{Spi95} could be used as the inner code.  Since the inner
    codeword lengths are only logarithmic in the final block length,
    the resulting code construction is strongly explicit as long as
    the inner code is explicit (not necessarily strongly
    explicit). Doing so would also ensure that the dependence of the
    runtime of the construction on $1/\eta$ is polynomial, if such a
    dependence is required.  }, a two-layered code concatenation can
  be used (i.e., Forney's concatenated code construction used as its
  own inner code), at the cost of increasing $\alpha$ to
  $3+\eps$. Concretely, letting $\eta_0=\sqrt[3]{\eta}$, we can
  concatenate a Reed-Solomon outer code of relative distance $\eta_0$
  and rate $R_1\ge1-\eta_0$, a short ``intermediate'' Reed-Solomon
  code of relative distance $\eta_0$ and rate $R_2\ge1-\eta_0$ and,
  finally, a $q$-ary linear code on the Gilbert-Varshamov
  bound\footnote{Here we use the estimate
    $H_q(x)=\Theta_q(x\log{1/x})$ on the $q$-ary entropy function
    $H_q$ defining the bound (cf.\
    \cite{codingbook}*{Proposition~3.3.8}).}  of relative distance
  $\eta_0$ and rate $R_3\ge1-O_q(\eta_0\log{1/\eta_0})$ that is found
  by an exhaustive search (see \cite{PR11}*{Theorem~2}). Using this
  two-layered code concatenation, we can get an $\F_q$-linear code of
  relative distance $\eta^3_0=\eta$ and rate
  $R_1R_2R_3\ge 1-O_q(\sqrt[3]{\eta}\log{1/\eta})$. This results in
  $\alpha=3+\eps$. To implement this concatenation, it suffices to set
  the block length of each of the three codes to be
  $\Omega_q(1/\eta_0)$.  Therefore, the construction is valid for any
  total block length $N\ge N_0$ for some $N_0=O(1/\eta)$, leading to
  the conclusion that $\gamma=1$. Moreover, this inner code has block
  length $O(\log{\log{N}})$, so the construction time is still
  quasi-linear.  The time bounds on encoding and erasure decoding
  follow by the standard FFT-based polynomial evaluation and
  interpolation and a naive Gaussian for the erasure decoding of inner
  code blocks. The strong explicitness is implied by the strong
  explicitness of Reed-Solomon codes.
\end{proof}

In the sequel, we assume that outer codes are strongly
$(\alpha, \beta, \gamma)$-attainable. Accordingly, let
$\Co \subseteq \F_Q^{M}$, where $Q = q^{\ell}$, be an $\F_q$-linear
outer code that achieves a minimum distance greater than $\eta M$ at
rate $\Ro$.  We pick a suitable value for $\ell$ in the analysis.
From \cref{defn:outer}, we can take $\Ro \geq 1-O(\eta^{1/\alpha})$
and, in this regard, need to ensure that $\ell \geq \ell_0$, for some
$\ell_0 = O((1/\eta)^\beta / \log q)$ and that
$M = \Omega(1/\eta^\gamma)$.

Let $\Ext\colon [N] \times [D] \to [M]$ (what we call the ``shuffler
extractor'') be a strong $(\log N - \Delta, \nu)$-extractor for
$\Delta \coloneq -\log(1-\delta)$ and an appropriate $\nu$ that shall
be determined in the analysis.  For a given seed $z \in [D]$, we use
the shorthand $\Ext_z(x)$ for $\Ext(x,z)$.

For each $i \in [M]$, let $S ^z_i \coloneq \Ext_z^{-1}(i)$; i.e., the
set of inputs that the extractor maps to $i$ given seed $z$.  Below,
we observe that these sets generally intersect $S$ evenly.

\begin{prop} \label{prop:balance:general} Let $\eps_1,\eps_2,\eps_3$
  be such that $\eps_1\eps_2\eps_3\ge 2\nu$.  For any set
  $S\subseteq[N]$ with $|S|\geq N/2^{\Delta}$, the following holds.
  For all but at most an $\eps_1$ fraction of the choices of the seed
  $z$, all but at most an $\eps_2$ fraction of the choices of $i$
  satisfy $|S^z_i\cap S| \in (1 \pm \eps_3) |S|/M$.
\end{prop}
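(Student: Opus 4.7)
The plan is to feed the extractor the uniform distribution on $S$ and translate the strong-extractor guarantee into the two-level averaging statement in the conclusion. Concretely, let $X \sim U_S$. Since $|S| \geq N/2^{\Delta}$, we have $H_\infty(X) = \log|S| \geq \log N - \Delta$, so the strong extractor hypothesis applies to $X$ with an independent seed $Z \sim U_{[D]}$.

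Next I would unpack the strong-extractor conclusion as a statement about conditional distributions. By definition of a strong $(\log N - \Delta, \nu)$-extractor, the joint distribution $(Z, \Ext(X,Z))$ is $\nu$-close in statistical distance to $U_{[D]} \times U_{[M]}$, which rewrites as
\[
  \E_{z \sim [D]}\bigl[ \mathrm{SD}(\Ext(X,z), U_{[M]}) \bigr] \;\leq\; \nu.
\]
Applying Markov's inequality in the $z$-variable, for all but an $\eps_1$ fraction of seeds $z \in [D]$ one has $\mathrm{SD}(\Ext(X,z), U_{[M]}) \leq \nu/\eps_1$; call such a $z$ \emph{good}. For good $z$, the definition of statistical distance together with $\Pr[\Ext(X,z)=i] = |S_i^z \cap S|/|S|$ yields
\[
  \sum_{i \in [M]} \left| \frac{|S_i^z \cap S|}{|S|} - \frac{1}{M} \right| \;\leq\; \frac{2\nu}{\eps_1}.
\]

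The final step is a second averaging over $i$. Say that $i$ is \emph{bad} if $|S_i^z \cap S| \notin (1 \pm \eps_3)|S|/M$, i.e., if the $i$-th summand above exceeds $\eps_3/M$. If the bad fraction of $i$ exceeded $\eps_2$, then the displayed sum would exceed $\eps_2 M \cdot \eps_3/M = \eps_2 \eps_3$, contradicting the bound $2\nu/\eps_1$ whenever $\eps_1 \eps_2 \eps_3 \geq 2\nu$, which is exactly the hypothesis. Hence for every good $z$, all but an $\eps_2$ fraction of $i \in [M]$ satisfy $|S_i^z \cap S| \in (1 \pm \eps_3)|S|/M$, completing the proof.

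There is no real obstacle here: the whole argument is a two-stage averaging (once to pass from the joint $\ell_1$-bound to a per-seed bound, once to pass from the per-seed $\ell_1$-bound to a per-$i$ multiplicative bound), and the constant $2$ in the hypothesis $\eps_1 \eps_2 \eps_3 \geq 2\nu$ is exactly what is needed to convert statistical distance (which is half of $\ell_1$) into the $\ell_1$ inequalities used above. The only mild care needed is to check that the uniform distribution on $S$ is a legitimate source for the extractor, which is why the lower bound $|S| \geq N/2^{\Delta}$ appears in the hypothesis.
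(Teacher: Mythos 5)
Your proof is correct and follows essentially the same route as the paper: feed $U_S$ to the strong extractor, decompose the $\ell_1$ distance over seeds, and apply two rounds of averaging (the paper phrases the second as Markov's inequality, you phrase it as a counting contradiction, but these are the same argument). The only cosmetic difference is that you package the first average in terms of per-seed statistical distance rather than the paper's explicit quantity $T_z$.
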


\begin{proof}
  We use a standard averaging argument. Let $U_S$ denote the uniform
  distribution on $S$.  Since $H_{\infty}(U_{S})\ge \log{N}-\Delta$,
  and $\Ext$ is a strong $(\log{N}-\Delta,\nu)$ extractor, the
  definition of extractors implies that for $Z \sim [D]$,
  $(Z,\Ext(U_S, Z)) \sim_\nu U_{[D]\times [M]}$.  Note that
  $\Pr[\Ext(U_S, z)] = i] = |S_i^z \cap S|/|S|$, implying that
  \begin{align} \label{eqn:ext:def}
    \sum_{z\in[D]}\sum_{i\in[M]}\left|\frac{|S^z_i\cap
        S|}{D|S|}-\frac{1}{DM}\right| &\leq 2\nu.
  \end{align}
  For $z\in[D]$, let
  $T_z \coloneq \E_{i\sim {[M]}} \left[\left||S^z_i\cap
      S|-|S|/M\right|\right]$ so that \eqref{eqn:ext:def} becomes
  $\E[T_Z] \leq 2 \nu |S|/M \leq \eps_1 \eps_2 \eps_3 |S|/M$. By
  Markov's inequality\footnote{Namely, for any non-negative random
    variable $X$ and $a>0$, we have $\Pr[X\geq a]\leq \E[X]/a$.}
  applied to the random variable $T_Z$, it follows that for all but at
  most an $\eps_1$ fraction of $z\in[D]$, we have
  $T_z\leq \eps_2\eps_3 |S|/M$. For any such $z$, we can apply
  Markov's inequality again on the expression that defines $T_z$ to
  conclude that for all but at most an $\eps_2$ fraction of the
  choices of $i \in [M]$, we have
  $\left||S^z_i\cap S|-|S|/M\right| \leq \eps_3 |S|/M$.  The claim
  follows.
\end{proof}

Note that in particular, by setting $S=[N]$ in
\cref{prop:balance:general}, we deduce that the sets $S_i^z$ are
generally balanced in size; namely, that we have the following.

\begin{prop} \label{prop:balance} Let $\eps_1,\eps_2,\eps_3$ be such
  that $\eps_1\eps_2\eps_3 \geq 2\nu$. For all but at most an $\eps_1$
  fraction of the choices of seed $z$, we have that all but at most an
  $\eps_2$ fraction of the choices of $i$ satisfy
  $|S^z_i| \in (1 \pm \eps_3) N/M$. \qedhere \qed
\end{prop}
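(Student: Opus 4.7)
The plan is to obtain \cref{prop:balance} as an immediate specialization of \cref{prop:balance:general}, since the latter was stated in a form general enough to cover both balanced intersections with an arbitrary large set $S$ and the size of the preimages themselves. Concretely, I would invoke \cref{prop:balance:general} with the choice $S = [N]$. The hypothesis $|S| \geq N/2^{\Delta}$ is then trivially satisfied, since $|S| = N$ and $\Delta = -\log(1-\delta) \geq 0$ given that $\delta \in [0,1)$. Under the same assumption $\eps_1 \eps_2 \eps_3 \geq 2\nu$ on the error parameters, the quantitative conclusion of the general proposition therefore applies verbatim.

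With this substitution, each intersection $S_i^z \cap S$ collapses to $S_i^z$ and the reference quantity $|S|/M$ becomes $N/M$, so the guarantee ``for all but an $\eps_1$ fraction of seeds $z$, for all but an $\eps_2$ fraction of blocks $i$, $|S_i^z \cap S| \in (1 \pm \eps_3)|S|/M$'' rewrites to exactly the statement $|S_i^z| \in (1 \pm \eps_3) N/M$ to be proved. No further calculation is needed and there is no real obstacle: the intuition is simply that if $\Ext$ is a strong extractor then, for a typical seed, the preimage partition of $[N]$ it induces must be close to uniform on $[M]$, and taking $S = [N]$ in \cref{prop:balance:general} is precisely the formal way of extracting this consequence. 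I would write the proof as a one-line reduction followed by the observation on the parameters.
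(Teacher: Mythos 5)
Your proposal is exactly the paper's proof: the paper explicitly states, immediately before \cref{prop:balance}, that the proposition follows ``by setting $S=[N]$ in \cref{prop:balance:general},'' which is the reduction you give, and the hypothesis $|S| \geq N/2^{\Delta}$ is verified just as you say since $\Delta = -\log(1-\delta) \geq 0$. No gap; this is the intended one-line specialization.
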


Our ensemble of codes contains a collection $\ensIn^z$ of codes for
each fixed choice of $z\in[D]$. Each collection $\ensIn^z$ of codes
corresponds to the codes of an inner ensemble $\ensIn$ of linear
$q$-ary codes of dimension $\ell$. Concretely, given the inner code
ensemble $\ensIn$, our final erasure code family is
$\ens \coloneq \bigcup_{z\in[D]}\ensIn^z$, where $|\ensIn^z|=|\ensIn|$
for any seed $z\in[D]$. Next, we describe how to choose $\ensIn$ and
define the code collections $\ensIn^z$.

We assume that $\ensIn$ in turn is an
$[\ell/\Ri,\delta+2\eta,\mu]_q$-erasure code family, for an
appropriate parameter $\mu$ to be determined in the analysis, and
achieves rate $\Ri \geq 1-\delta-O(\eta)$. This makes the size of the
final ensemble of codes equal to $|\ens|=|\ensIn|D$.  As long as the
dimension $\ell$ is a constant or slightly super-constant (e.g.,
$\ell = O(\sqrt{\log N})$), the ensemble $\ensIn$ whose existence is
guaranteed by \cref{lem:existence:code} can be constructed explicitly
by a trivial exhaustive search.

For a given seed $z$ and $\Ci \in \ensIn$, we define a code
$\Ci^z\subseteq\F^N_q$. Our code construction is a usual concatenated
code followed by a ``shuffler'' layer that we now explain.  First, a
codeword of $\Co$, denoted by $c = (c_1, \ldots, c_M)$ is constructed
from the message. Recall that each $c_i$ is a $q$-ary vector of length
$\ell$. Then, each $c_i$ is further encoded to a codeword of $\Ci$
which is a $q$-ary vector of length $L \coloneq \ell/\Ri$. Let
$c'_i \in \F_q^L$ denote the resulting encoding of $c_i$.

We set the parameters so that $N=LM$.  The final codeword
$C \in \Ci^z\subseteq \F_q^N$ is constructed as follows.  Recall the
notation $S^z_i \coloneq \Ext_z^{-1}(i)$.  These sets (for the fixed
choice of $z$) are expected to partition $[N]$ nearly uniformly by
\cref{prop:balance} (for parameters to be specified).  For each
$i \in [M]$, the coordinate positions of $C$ that lie in $S^z_i$
collect the $q$-ary symbols of $c'_i$.  This is done with respect to
an arbitrarily fixed ordering, such as the natural integer ordering of
the coordinate indices. Any leftover symbols in $C$ that remain
unassigned, due to some $S^z_i$ being larger than $L$, are frozen to
zeros.  On the other hand, in case $|S^z_i| < L$, any leftover symbols
of $c'_i$ are not be included in the final codeword and are discarded.

For any $\Ci \in\ensIn$, we use $\Ci^z$ to denote the code constructed
from $\Ci$ and $\Ext_z$, and define $\ensIn^z$ as the collection of
codes $\{\Ci^z\colon \Ci\in\ensIn\}$. We remind that our final erasure
code family is $\ens=\bigcup_{z\in[D]}\ensIn^z$.

It is immediate to observe that the resulting final code is linear.
The erasure correction properties of the code ensemble are analyzed
below.

\begin{lem} \label{lem:code:analysis} The code ensemble defined in
  this section (containing a code for each element of
  $[D] \times \ensIn$) is an $[N,\delta,\eps]_q$-erasure family for
  some choices of the parameters $\nu = O(\eps \eta^2)$ and
  $\mu = O(\eps \eta)$, achieving rate at least
  $1-\delta-O(\eta^{1/\alpha})$.
\end{lem}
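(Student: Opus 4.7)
The plan is to fix any erasure pattern $E \subseteq [N]$ of size at most $\delta N$ and show that only an $\eps$ fraction of pairs $(z,\Ci) \in [D]\times\ensIn$ fail to correct it. Let $S \coloneq [N]\setminus E$, so $|S| \geq (1-\delta)N = N/2^{\Delta}$; this is precisely the regime in which \cref{prop:balance:general} applies to $S$. I would set $\eps_3 = \eta/2$, $\eps_2 = \eta/4$, $\eps_1 = \eps/4$, $\nu = \eps_1\eps_2\eps_3/2 = O(\eps\eta^2)$, and $\mu = \eps\eta/4 = O(\eps\eta)$. Applying \cref{prop:balance} to $|S^z_i|$ and \cref{prop:balance:general} to $|S^z_i \cap S|$ and taking a union bound, for all but a $2\eps_1$ fraction of seeds $z$, at most a $2\eps_2$ fraction of blocks $i \in [M]$ are \emph{atypical} in the sense of failing either $|S^z_i| \in (1\pm\eps_3)L$ or $|S^z_i\cap S| \in (1\pm\eps_3)(1-\delta)L$. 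For the typical blocks, subtracting these estimates gives $|S^z_i \cap E| \leq \delta L + \eps_3(2-\delta)L \leq (\delta + 2\eps_3)L$.

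Next I would translate this into effective erasures on each inner codeword $c'_i \in \F_q^L$. When $|S^z_i| \geq L$, only the first $L$ positions of $S^z_i$ carry symbols of $c'_i$ (the rest are frozen to known zeros), so the inner decoder sees at most $|S^z_i \cap E| \leq (\delta+2\eps_3)L \leq (\delta+2\eta)L$ erasures. When $|S^z_i| < L$, the $L-|S^z_i| \leq \eps_3 L$ discarded symbols behave as additional erasures from the inner code's perspective, but the total remains at most $(\delta + 3\eps_3)L \leq (\delta+2\eta)L$. Hence, by the defining property of the inner family $\ensIn$, for each typical block $i$ at least a $1-\mu$ fraction of the codes $\Ci \in \ensIn$ decode block $i$ correctly.

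Finally, fix any typical seed $z$ and let $F_i(\Ci)$ indicate that $\Ci$ fails on block $i$. Summing over the typical blocks gives $\E_{\Ci}\bigl[\sum_i F_i\bigr] \leq \mu M$, and Markov yields $\Pr_{\Ci}\bigl[\sum_i F_i \geq \eta M/2\bigr] \leq 2\mu/\eta \leq \eps/2$. Combined with the $\leq 2\eps_2 M = \eta M/2$ atypical blocks, the total number of unrecoverable inner blocks is at most $\eta M$, which lies strictly within the erasure-correction radius of the outer code $\Co$, so the outer decoder recovers the message. The overall failure fraction is thus $\leq 2\eps_1 + 2\mu/\eta \leq \eps$, as required. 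The rate of each $\Ci^z$ is $\Ro \Ri \geq (1-O(\eta^{1/\alpha}))(1-\delta-O(\eta)) \geq 1-\delta-O(\eta^{1/\alpha})$. The main subtlety is to invoke \cref{prop:balance:general} on the surviving set $S$ rather than on $E$ itself, so that its min-entropy hypothesis is satisfied, and to carefully account for the asymmetric failure modes at each block (erasures among transmitted positions versus the discarded tail when $|S^z_i| < L$).
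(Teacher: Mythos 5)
Your proof is correct and follows essentially the same approach as the paper's: invoke \cref{prop:balance} and \cref{prop:balance:general} on the survivor set $S$ to show that, for most seeds, most blocks see at most $(\delta+2\eta)L$ effective inner erasures, then apply Markov's inequality over the inner ensemble and finish with the outer code. The only difference is slightly different numeric choices of $(\eps_1,\eps_2,\eps_3)$ and a more explicit case split on $|S^z_i| \gtrless L$ in place of the paper's unified ``non-erased minus frozen'' accounting; note also that the claim $R = \Ro\Ri$ implicitly uses injectivity of the encoder, which (as the paper points out) follows from the erasure analysis applied to the empty erasure pattern.
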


\begin{proof}
  The rate of each code in the ensemble is readily seen to be
  $R = \Ri \Ro $, assuming that the code can recover the outer
  codeword when there are no erasures (which, in turn, follows as a
  special case of the erasure correction analysis that we show
  below). Using the fact that $\alpha \geq 1$, we get
  $R \geq (1-\delta-O(\eta))(1-O(\eta^{1/\alpha})) \geq
  1-\delta-O(\eta^{1/\alpha})$.

  Let us now consider any pattern of up to $\delta$ fraction of
  erasures and denote by $S \subseteq [N]$ the set of non-erased
  positions.  We have that $|S| \geq (1-\delta)N$.  Our goal is to
  show that all but an $O(\eps)$ fraction of the codes in the ensemble
  constructed in this section can correct\footnote{It is important to
    note that the choice of $S$ does not depend on which code in the
    ensemble is being picked. This is a fundamental aspect of the
    model, since otherwise the problem would reduce to the standard
    rate-distance trade-off of $q$-ary codes in the Hamming metric and
    having an ensemble would not make a difference.}  the erasure
  pattern corresponding to $S$.

  First, let us invoke \cref{prop:balance} for
  $(\eps_1,\eps_2,\eps_3) = (\eps/3,\eta/4,\eta)$, which requires
  $\nu = O(\eps \eta^2)$, and assume in the sequel that a seed
  $z \in [D]$ is picked so that the conclusion of the proposition
  holds. By doing so, we discard up to an $\eps/3$ fraction of the
  codes in the ensemble.  For the given $z$, we know that all but at
  most an $\eps_2 = \eta/4$ fraction of the inner code blocks
  $i \in [M]$ satisfy
  $|S^z_i| \in (1 \pm \eps_3) N/M = (1 \pm \eta) L$. Call $i$
  \emph{non-deficient} if this property holds for $S^z_i$ and
  \emph{deficient} otherwise.  Therefore, for any inner code
  $\Ci \in \ensIn$, given a codeword $C \in \Ci^z \subseteq \F_q^N$,
  for all non-deficient blocks, the number of positions that are
  frozen to zeros is at most $(1+\eps_3) N/M - L = \eta L$.

  Similarly, invoke \cref{prop:balance:general} for
  $(\eps_1,\eps_2,\eps_3) = (\eps/3,\eta/4,\eta)$ and the given choice
  of $S$.  Discard all choices of seed $z$ that are excluded by this
  result. By now, we have discarded a $2\eps/3$ fraction of the codes
  in the ensemble. Assume that $z$ survives this exclusion as
  well. Call $i \in [M]$ \emph{balanced} if
  $|S^z_i\cap S| \geq (1 - \eps_3) |S|/M \geq (1-\eta)(1-\delta)L \geq
  (1-\delta-\eta)L$, where the second inequality holds by the
  assumption $|S| \geq (1-\delta)N$ and the choice of the length
  parameter $N$.

  Altogether, we have ensured that for any inner code
  $\Ci \in \ensIn$, given a codeword $C \in \Ci^z \subseteq \F_q^N$,
  all but at most an $\eta/2$ fraction of the blocks $i \in [M]$ are
  balanced and non-deficient.  Moreover, for any such block, the
  number of non-erased positions that are not frozen to zeros is at
  least $(1-\delta-2\eta)L$.

  Denote by $c=(c_1, \ldots, c_M) \in \Co \subseteq \F_Q^M$ the outer
  codeword from which $C$ is obtained, which we wish to recover given
  the erasures.  Recall that since $\Co$ can be recovered from any
  $\eta$ fraction of erasures, it suffices to recover at least some
  $1-\eta$ fraction of the symbols in $c$. Our task is to analyze the
  proportion of the choices of $\Ci \in \ensIn$ for which this is
  possible.

  Let $G \subseteq [M]$ be the set of all balanced and non-deficient
  blocks. We know that $|G|\ge(1-\eta/2)M$.  The erasure correction
  properties of $\ensIn$ ensure that for any $i \in G$ and all but a
  $\mu$ fraction of the choices of $\Ci \in \ensIn$, the code $\Ci^z$
  allows for the recovery of the outer code symbol $c_i \in \F_Q$.
  Equivalently, denoting by $A(i, \Ci) \in \{0,1\}$ the indicator for
  the event that $c_i$ cannot be recovered by $\Ci^z$ when we
  uniformly and independently sample $i\in G$ and $\Ci\in{\ensIn}$, we
  can write this probability as
  \[
    \underset{\Ci \sim {\ensIn}}{\E} \underset{i \sim G}{\E} [A(i,
    \Ci)] \leq \mu \Rightarrow \underset{\Ci \sim {\ensIn}}{\Pr}
    \left[ \underset{i \sim G}{\E} [A(i, \Ci)] > \eta/2\right] \leq
    2\mu/\eta\leq \eps/3,
  \]
  where we have used Markov's bound for the last inequality.  As long
  as $\mu \leq \eps \eta/6$, which we ensure to be the case, the right
  hand side is at most $\eps/3$.  So far, we have discarded an at most
  $2\eps/3$ fraction of the seeds $z\in[D]$ and the corresponding code
  ensembles $\ensIn^z$.  For any remaining seed $z\in[D]$, we know
  that there is at most an $\eps/3$ fraction of the codes
  $\Ci\in\ensIn$ that do not satisfy the above condition, and we
  discard such codes.  Overall, we have discarded at most an $\eps$
  fraction of the codes in $\ens$.

  It suffices to show that we can recover at least a $1-\eta$ fraction
  of the symbols of $c$ for any remaining code. We can guarantee that
  for all remaining codes, all but at most an $\eta/2$ fraction of the
  symbols corresponding to blocks in $G$ can be recovered and
  $|G|\ge(1-\eta/2)M$.  Therefore, for any remaining code, at least a
  $(1-\eta/2)|G|\ge(1-\eta/2)^2\ge1-\eta$ fraction of the symbols in
  $c = (c_1, \ldots, c_M)$ can be recovered.  Finally, the outer code
  ensures a full recovery of the codeword.
\end{proof}

\subsection{Setting Up the Parameters} \label{sec:code:params}

For the concrete choice of the shuffler extractor $\Ext$, we assume
that the extractor is implied under the assumption that extractors are
strongly $(\gamma_1,\gamma_2)$-attainable (cf.\
\cref{defn:ext:constr}).  In particular, by \cref{prop:ext:constr} one
can pick $\gamma_1 = 4$ and $\gamma_2 = 2$.  For our application, we
can set $N = 2^n$, $D = 2^d$, and $M=2^m$.

Using the above shuffler extractor in our construction leads to our
final explicit construction, which is summarized below.  We note that
if the underlying extractor is only attainable and not strongly
attainable, our code construction would still be explicit albeit not
strongly explicit.

\begin{thm} \label{thm:explicit} Assume that outer codes are
  (strongly) $(\alpha,\beta,\gamma)$-attainable and that extractors
  are (strongly) $(\gamma_1, \gamma_2)$-attainable.  Fix any
  $\delta \in [0,1)$.  For parameters $\eps>0$ and $\eta > 0$, there
  is an
  $N_0 = O(((\eps \eta^2)^{-\gamma_2}+\eta^{-\beta})/\eta^{\gamma})
  =(1/\eps\eta)^{O(1)}$ such that the following holds: For all
  $N \geq N_0$, there is a (strongly) explicit construction of an
  $[N,\delta,\eps]_q$-erasure code family that achieves rate at least
  $1-\delta-\eta^{1/\alpha}$.  Moreover, the size of the code family
  is $O(1/((\eps \eta^{2})^{1+\gamma_1} \log q))=1/(\eps \eta)^{O(1)}$
  which can be taken to be a power of two.  Furthermore, after a
  one-time pre-processing time of
  $q^{O(((\eps \eta^2)^{-\gamma_2}+
    \eta^{-\beta})^2\eps^{-1}\eta^{-2})}=\exp((\eps\eta)^{-O(1)})$,
  each code in the family can be encoded and erasure decoded (whenever
  possible) in quasi-linear time.
\end{thm}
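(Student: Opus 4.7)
The plan is to instantiate the construction of \cref{sec:code:constr} with parameters calibrated to satisfy \cref{lem:code:analysis}, assembling the three ingredients supplied by the stated attainability assumptions and \cref{lem:existence:code}. Concretely, I would fix the internal accuracy parameters to $\nu = \Theta(\eps\eta^2)$ and $\mu = \Theta(\eps\eta)$; take the shuffler $\Ext$ to be the strong $(\log N - \Delta, \nu)$-extractor supplied by the $(\gamma_1,\gamma_2)$-attainability assumption with $\Delta = -\log(1-\delta) = O(1)$; take the outer code $\Co \subseteq \F_Q^M$ with $Q = q^\ell$ to be the code supplied by the $(\alpha,\beta,\gamma)$-attainability assumption with relative distance greater than $\eta$ and rate $\Ro \geq 1 - O(\eta^{1/\alpha})$; and take the inner ensemble $\ensIn$ to be an $[L,\delta+2\eta,\mu]_q$-erasure family of rate $\Ri \geq 1 - \delta - O(\eta)$ and size $t = O(1/(\eps\eta^2 \log q))$ whose existence is guaranteed by \cref{lem:existence:code}.

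Since $\Delta$ is a constant, the extractor yields $d = O(\gamma_1 \log(1/(\eps\eta)))$ and $m \geq \log N - O(\gamma_2 \log(1/(\eps\eta)))$, hence $D = 2^d = (1/(\eps\eta))^{O(\gamma_1)}$ and $M = 2^m \geq N/(\eps\eta^2)^{O(\gamma_2)}$. The outer code requires $M \geq \Omega(1/\eta^\gamma)$ and $\ell \geq \ell_0 = O((1/\eta)^\beta / \log q)$, so $L = \ell/\Ri$ is a constant in terms of $N$. I would then obtain $\ensIn$ by exhaustive search over all $t$-tuples of $\Ri L \times L$ generator matrices over $\F_q$, verifying the erasure-correction property against every pattern of size at most $(\delta + 2\eta)L$ for each candidate; the existence guarantee of \cref{lem:existence:code} ensures the search succeeds, and the total work, dominated by the enumeration, is $q^{O(L^2 t)}$, which collapses to the stated preprocessing bound after substituting the parameter expressions. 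Setting $N = LM$ places the minimum block length at $N_0 = \poly(1/(\eps\eta))$, driven by the interplay of the outer-code floor $\Omega(1/\eta^\gamma)$, the extractor floor $(\eps\eta^2)^{-O(\gamma_2)}$, and the inner-code floor $\Omega(\eta^{-\beta})$.

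With these three ingredients in place, the construction of \cref{sec:code:constr} delivers $\ens = \bigcup_{z \in [D]} \ensIn^z$, and \cref{lem:code:analysis} directly certifies that $\ens$ is an $[N,\delta,\eps]_q$-erasure family of rate at least $1 - \delta - O(\eta^{1/\alpha})$; rescaling $\eta$ by a suitable constant at the outset absorbs the hidden constant and yields exactly $1 - \delta - \eta^{1/\alpha}$. The ensemble size is $|D| \cdot t = O(1/((\eps\eta^2)^{1+\gamma_1} \log q))$ and can be padded to a power of two at constant overhead. Encoding and erasure decoding proceed block-by-block: the inner step is an $O(1)$-time table lookup into the preprocessed $\ensIn$, and the outer step runs in quasi-linear time by attainability, while the shuffler layer is a strongly explicit reindexing of coordinates; the overall runtime is $\tilde{O}(N \log q)$, and strong explicitness (when assumed of both $\Co$ and $\Ext$) is inherited by $\ens$.

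The main obstacle I anticipate is the careful bookkeeping of parameter constraints across the three layers --- simultaneously enforcing the error slacks prescribed by \cref{lem:code:analysis}, matching the claimed ensemble size and preprocessing time, and ensuring that all the minimum block-length requirements collapse into a single polynomial $N_0 = \poly(1/(\eps\eta))$ --- together with the rescaling of $\eta$ needed to convert $O(\eta^{1/\alpha})$ into the bare $\eta^{1/\alpha}$ without disturbing the other derived quantities.
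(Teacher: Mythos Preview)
Your proposal is correct and follows essentially the same approach as the paper: instantiate the construction of \cref{sec:code:constr} with $\nu=\Theta(\eps\eta^2)$ and $\mu=\Theta(\eps\eta)$, plug in the attainable extractor and outer code, find $\ensIn$ by exhaustive search guided by \cref{lem:existence:code}, and invoke \cref{lem:code:analysis}. The paper's proof is organized slightly differently (it tracks the inner block length $L$ explicitly as the driver of both $N_0$ and the preprocessing cost, and obtains the power-of-two ensemble size by making $|\ensIn|$ a power of two rather than by padding), but the substance is identical.
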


\begin{proof}
  For the inner code ensemble $\ensIn$, we use the result of
  \cref{lem:existence:code} combined with the parameters required by
  the statement of \cref{lem:code:analysis} that, by setting
  $\mu = O(\eps \eta)$, yields
  $|\ensIn| = O(1/(\eta \mu \log q)) = O(1/(\eps \eta^2 \log q))$.
  The number of seeds for the shuffler extractor is, from
  \cref{defn:ext:constr}, $D = O(1/\nu^{\gamma_1})$, noting that the
  entropy deficiency $\Delta$ is a constant.  The result then follows
  by applying \Cref{lem:code:analysis}, recalling that
  $\nu = O(\eps \eta^2)$.  Note that this only provides a construction
  for infinitely many choices of the block length $N$; however, this
  can be corrected by trivial padding and a slight adjustment of the
  parameters that does not affect the asymptotics.

  The length $L=N/M$ of each code in $\ensIn$ corresponds to the
  entropy loss $n-m$ of the shuffler extractor (including the source
  entropy deficiency).  It can be adjusted to a desired value and must
  be picked as small as possible (to optimize the time needed for the
  exhaustive search) but subject to the following considerations.
  
  \begin{enumerate}[label=(\roman*)]
  \item The entropy loss of the shuffler extractor (including the
    entropy deficiency of the original source) which in the language
    of \cref{defn:ext:constr} is bounded by
    $\Delta+\gamma_2 \log(1/\nu)+O(1)= \log(1/(\eps
    \eta^2)^{\gamma_2})+O(1)$.
  \item The minimum alphabet size $Q_0$ of the outer code, which is
    $2^{O(1/\eta^\beta)}$.
  \item The minimum length allowed by the existence result of erasure
    code families (\cref{lem:existence:code}), which is
    $O(\log(1/\mu)/\mu)=\tilde{O}(1/(\eps \eta))$.
  \end{enumerate}

  A value of
  $L = O(1/\nu^{\gamma_2} + 1/\eta^\beta)= O((\eps
  \eta^2)^{-\gamma_2}+\eta^{-\beta})$ is compatible with all the above
  requirements (considering the fact that $\gamma_2 \geq 2$).  This,
  combined with the minimum block length of the outer code, also
  determines the minimum block length of the final code ensemble,
  which becomes $O(L/\eta^\gamma)$.  The guarantee on the size of the
  ensemble being a power of two can be achieved by ensuring that
  $\ensIn$ is a power of two (combined with the seed of the extractor
  in \cref{thm:RVW} being a bit string).

  The one-time pre-processing procedure involves an exhaustive search
  for the inner code ensemble.  This would take an amount of time
  upper bounded by
  $q^{L^2 |\ensIn|} \cdot 2^L \cdot \poly(L) = q^{O(((\eps
    \eta^2)^{-\gamma_2}+\eta^{-\beta})^2\eps^{-1}\eta^{-2})}$,
  enumerating all possible linear code ensembles of a given size and
  then checking for all erasure patterns.  Since the construction is
  based on code concatenation, strong explicitness guarantee holds as
  long as the outer code construction is strongly explicit and that
  the shuffler extractor is computable in polynomial time in its input
  length.  Finally, considering that the outer code is encodable and
  decodable in quasi-linear time leads to a quasi-linear time encoder
  and decoder for the overall code.
\end{proof}

\begin{remark}[Avoiding Exhaustive Search] \label{rem:exhaustive}
  Instead of an exhaustive search for the inner code that was done in
  the proof of \cref{thm:explicit}, it is also possible to use
  explicit ensembles such as those constructed in \cite{Che09} (i.e.,
  \cref{thm:Che}).  These ensembles achieve a polynomial size in the
  block length of the inner code ensemble (which is a constant
  polynomially depending on $\eta$ and $\eps$) and polynomial size in
  the error parameter $\eps$, albeit exponential size in the gap to
  capacity parameter $\eta$ for the specific instantiations recorded
  in \cref{thm:Che}.
\end{remark}

With foresight, \cref{coro:alpha} (that in turn, uses
\cref{thm:explicit} with the value of $\alpha \approx 3$ provided by
\cref{prop:outer}) shows that we can pick the choice $\alpha = 1$, for
some absolute constants $\beta$ and $\gamma$. Using \cref{coro:alpha},
and by also picking $\gamma_1 = 4$ and $\gamma_2 = 2$ according to
\cref{prop:ext:constr}, we can rewrite a simplified version of
\cref{thm:explicit} that appears below.

\begin{coro} \label{coro:explicit} Fix any $\delta \in [0,1)$.  For
  parameters $\eps>0$ and $\eta > 0$, there is an
  $N_0 = \poly(1/(\eps \eta))$ such that the following holds: For all
  $N \geq N_0$, there is a strongly explicit construction of an
  $[N,\delta,\eps]_q$-erasure code family that achieves rate at least
  $1-\delta-\eta$.  Moreover, the size of the code family is
  $O(1/(\eps^5 \eta^{10} \log q))$ which can be taken to be a power of
  two.  Furthermore, after a one-time pre-processing time of
  $q^{\poly(1/(\eps \eta))}$, each code in the family can be encoded
  and erasure decoded (whenever possible) in quasi-linear
  time. \qedhere \qed
\end{coro}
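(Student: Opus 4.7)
The plan is to derive this corollary as a direct parameter instantiation of \cref{thm:explicit}. That theorem already produces an erasure code family of rate $1-\delta-\eta^{1/\alpha}$ and size $O(1/((\eps\eta^2)^{1+\gamma_1}\log q))$, where $(\alpha,\beta,\gamma)$ and $(\gamma_1,\gamma_2)$ are the outer-code and extractor attainability parameters respectively. So the whole proof amounts to plugging in specific values of these five constants.

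For the extractor side, I would invoke \cref{prop:ext:constr}, which records that the zig-zag based construction of \cite{RVW00} (restated in \cref{thm:RVW}) is strongly $(\gamma_1,\gamma_2)=(4,2)$-attainable; importantly, it is strongly explicit and therefore preserves the strong explicitness of the overall construction. For the outer-code side, I would cite the forward-referenced \cref{coro:alpha}, which asserts that outer codes are strongly $(1,\beta,\gamma)$-attainable for some absolute constants $\beta,\gamma$. This is the ``bootstrapping'' step: the trivial starting point is \cref{prop:outer}, which only supplies $\alpha\approx 3$ via a two-layered Forney concatenation, and \cref{coro:alpha} improves this by feeding the output of \cref{thm:explicit} back into itself with these initial parameters. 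I would treat \cref{coro:alpha} as a black box since it is proved elsewhere.

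With these inputs, the arithmetic simplifies cleanly. Substituting $\alpha=1$ into the rate formula gives $1-\delta-\eta^{1/1}=1-\delta-\eta$, matching the statement. Substituting $\gamma_1=4$ into the size formula gives $O(1/((\eps\eta^2)^{5}\log q))=O(1/(\eps^5\eta^{10}\log q))$, which is a power of two by the same remark as in \cref{thm:explicit}. Substituting $\gamma_2=2$ together with the constants $\beta,\gamma$ from \cref{coro:alpha} into the minimum-length expression $O(((\eps\eta^2)^{-\gamma_2}+\eta^{-\beta})/\eta^\gamma)$ yields $N_0=\poly(1/(\eps\eta))$. Finally, the pre-processing time bound $q^{O(((\eps\eta^2)^{-\gamma_2}+\eta^{-\beta})^2\eps^{-1}\eta^{-2})}$ becomes $q^{\poly(1/(\eps\eta))}$ under the same substitutions. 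Strong explicitness and quasi-linear time encoding and erasure decoding transfer verbatim, since they were established in \cref{thm:explicit} under precisely the same hypotheses that \cref{prop:ext:constr} and \cref{coro:alpha} now fulfill.

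The main obstacle is not really inside this corollary but in the forward-referenced \cref{coro:alpha}: one must verify that a single self-application of \cref{thm:explicit}, using the bootstrap outer codes of \cref{prop:outer} and the extractors of \cref{prop:ext:constr}, in fact pushes $\alpha$ from roughly $3$ down to $1$ without letting $\beta$ or $\gamma$ depend on $\eta$. Once \cref{coro:alpha} is established, the proof of the present corollary is just a substitution, and I would present it as a one-paragraph appeal to \cref{thm:explicit} with $(\alpha,\gamma_1,\gamma_2)=(1,4,2)$.
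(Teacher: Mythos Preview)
Your proposal is correct and matches the paper's approach exactly: the paper derives \cref{coro:explicit} by invoking \cref{coro:alpha} to set $\alpha=1$ (with absolute constants $\beta,\gamma$) and \cref{prop:ext:constr} to set $(\gamma_1,\gamma_2)=(4,2)$, then reading off the parameters from \cref{thm:explicit}. Your arithmetic on the rate, family size, $N_0$, and pre-processing time is all correct, and your remark that the real content lies in \cref{coro:alpha} (treated here as a black box) is precisely how the paper handles it.
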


Combined with \cref{lem:ExtVsCodes,lem:duality} \cref{coro:explicit}
immediately translates into an explicit construction of seeded linear
symbol-fixing extractors and lossless condensers that achieve a
constant seed length (only depending on normalized entropy loss and
error):

\begin{coro} \label{coro:explicit:bitfixing} Fix any
  $\delta \in (0,1]$. For parameters $\eps>0$ and $\eta>0$, there is
  an $N_0 = \poly(1/(\eps \eta))$ such that the following holds: For
  all $N \geq N_0$, there are explicit constructions of functions
  $\Ext\colon \F_q^N \times \zo^d \to \F_q^{(\delta-\eta) N}$ and
  $\Cond\colon \F_q^N \times \zo^d \to \F_q^{(\delta+\eta) N}$ where
  $d=5\log(1/(\eps \eta^2 \log q))+O(1)$.  Moreover,
  \begin{enumerate*}[label=(\arabic*)]
  \item For each $z \in \zo^d$, $\Ext(\cdot, z)$ and $\Cond(\cdot, z)$
    are $\F_q$-linear functions; and
  \item The functions $\Ext$ and $\Cond$ are a strong
    $(\delta N, \eps)$-extractor and a
    $(\leq \delta N, \eps)$-lossless condenser, respectively, for
    symbol-fixing sources.
  \end{enumerate*}
  \qedhere \qed
\end{coro}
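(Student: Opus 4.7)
The plan is to assemble both functions directly from the erasure code family provided by \cref{coro:explicit}, using \cref{lem:ExtVsCodes} to convert it into a symbol-fixing extractor and \cref{lem:duality} to dualize an extractor into a lossless condenser. Since the two requirements correspond to complementary erasure parameters, I will invoke \cref{coro:explicit} twice (with matched seed lengths) and build $\Ext$ and $\Cond$ separately, then verify that each satisfies the claimed guarantee.

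First I would instantiate the extractor. Apply \cref{coro:explicit} with erasure parameter $1-\delta$ and slack $\eta$ to obtain an $[N,1-\delta,\eps]_q$-erasure code family $\{G_z\}_{z \in \zo^d}$ of rate at least $\delta-\eta$; each $G_z$ has dimensions $(\delta-\eta)N \times N$. Defining $\Ext(x,z) \coloneq G_z \cdot x$ gives an $\F_q$-linear map in $x$, and \cref{lem:ExtVsCodes} immediately certifies $\Ext$ as a strong $((1-(1-\delta))N,\eps) = (\delta N,\eps)$-extractor for symbol-fixing sources. The requirement $N \geq N_0 = \poly(1/(\eps\eta))$ and the strong explicitness of each $G_z$ come directly from \cref{coro:explicit}.

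Next I would construct the condenser via duality. Apply \cref{coro:explicit} a second time with erasure parameter $\delta$ to obtain an $[N,\delta,\eps]_q$-erasure code family $\{G'_z\}$ of rate at least $1-\delta-\eta$, yielding matrices $G'_z \in \F_q^{(1-\delta-\eta)N \times N}$. For each $z$, compute a parity-check matrix $H_z \in \F_q^{(\delta+\eta)N \times N}$ (of full row rank; rank-deficient seeds can be folded into the $\eps$ error budget as in the proof of \cref{lem:existence:code}) satisfying $G'_z H_z^\top = 0$, and define $\Cond(x,z) \coloneq H_z \cdot x$. To verify the $(\leq \delta N, \eps)$-lossless property for symbol-fixing sources, fix any symbol-fixing $X$ of entropy $k \le \delta N$ supported on an affine space $a+V$, so $V^\perp$ corresponds to a symbol-fixing source of entropy $N-k \geq (1-\delta)N$. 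By \cref{lem:ExtVsCodes} applied to $\{G'_z\}$, the map $x \mapsto G'_z x$ is a strong $((1-\delta)N,\eps)$-extractor; and as noted in the preliminaries, a linear seeded symbol-fixing extractor of error $\eps$ extracts with zero error on all but an $\eps$ fraction of seeds. For every such good seed $z$, \cref{lem:duality} (instantiated with $G=G'_z$, $H=H_z$, and the dual pair $V^\perp,V$) converts the zero-error extraction of $V^\perp$ by $G'_z$ into the lossless-condenser property of $H_z$ on $V$, so $\Cond$ is the desired $(\leq \delta N,\eps)$-lossless condenser.

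Finally I would balance the seed lengths. \cref{coro:explicit} yields families of size $O(1/(\eps^5\eta^{10}\log q))$ (taken to be a power of two), so for each of the two invocations $\log D = 5\log(1/\eps) + 10\log(1/\eta) + O(1)$, which is $5\log(1/(\eps\eta^2\log q))+O(1)$ once $q$ is treated as a fixed constant; taking the maximum of the two seed lengths (and padding the smaller family by repetition) yields the uniform value of $d$ stated. The only mildly delicate step is the duality bookkeeping in the preceding paragraph, specifically handing over the ``zero error on $1-\eps$ of seeds'' property from the linear extractor to the condenser via \cref{lem:duality}; everything else amounts to direct substitution into results already in the paper.
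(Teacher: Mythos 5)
Your proof is correct and matches the paper's intended (but unspelled-out) argument: instantiate \cref{coro:explicit} twice with the complementary erasure parameters $1-\delta$ and $\delta$, pass the first family through \cref{lem:ExtVsCodes} to get the extractor, and dualize the second via parity-check matrices and \cref{lem:duality} (together with the zero-error observation from the preliminaries for linear affine extractors) to get the condenser, padding to equalize the two seed lengths. This is essentially the same approach as the paper; the only minor wrinkle you correctly skirt around is the $\delta=1$ endpoint, where the parity-check construction degenerates, but this is an artifact of the corollary statement rather than a flaw in your argument.
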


\section{Optimal Codes on Bipartite Graphs}
\label{sec:graphs}

In this section, $M$ and $N$ are sufficiently large integers, and we
assume for technical reasons that $M$ is a power of two.  As before,
we assume that outer codes are $(\alpha, \beta, \gamma)$-attainable,
and that strong extractors are $(\gamma_1,\gamma_2)$-attainable.  The
goal is to explictly construct an $[M,N,\drow,\dcol]_q$-graph code,
according to \cref{def:bigraph:code}, at rate approaching the optimal
$(1-\drow)(1-\dcol)$ arbitrarily closely.  Of particular interest is
when $M=N$ and $\drow = \dcol$, but we allow a more general choice of
parameters. In fact, we use the unbalanced case in \cref{sec:AEL}.
The cost paid for the gap to the optimal rate is on how large $M$ and
$N$ are required to be, as well as the (explicit) code construction
time and (quasi-linear) erasure correction time, and this can be
optimized to achieve a sub-constant gap to capacity as well.

\newcommand{\myarrow}{%
  \mathrel{%
    \tikz[baseline=-0.5ex]{%
      \draw[->, line width=1pt, >={Stealth[length=0.09in,
        width=0.05in]}] (0,0) -- (1,0); }%
  } } \newcommand{\figFamily}{
  \begin{array}{lcr}
    \myarrow &\cC_1 \in \ens &\myarrow \\
    \myarrow &\cC_2 \in \ens &\myarrow \\ 
             &\vdots \\
    \myarrow & \cC_{M_0} \in \ens &\myarrow \\ 
    \myarrow &\cC_1 \in \ens &\myarrow \\
    \myarrow &\cC_2 \in \ens &\myarrow \\ 
             &\vdots \\
    \myarrow & \cC_{M_0} \in \ens &\myarrow \\ 
             &\vdots \\
  \end{array}}
  
\begin{figure}[!t]
  \begin{center}
    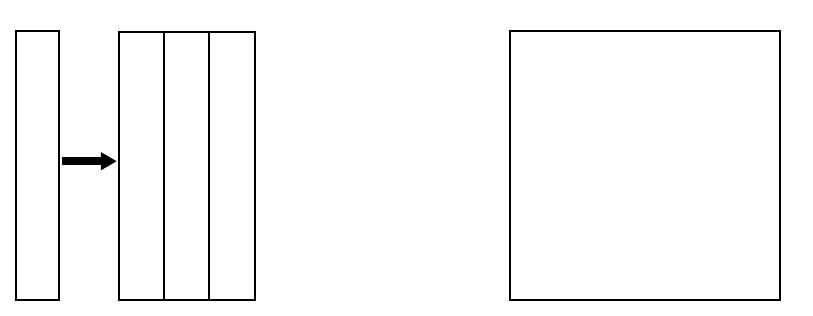
  \end{center}

  \caption{Construction of the bipartite graph codes in
    \cref{sec:graphs}.  The row-erasure correction code $\Crow$ (of
    alphabet size $q^{\ell_0}$) is bundled to provide a sufficient
    number $\ell$ of columns.  Then, each row of the matrix consisting
    of codewords of $\Crow$ is encoded by a codeword from the erasure
    code family $\ens$ to provide column-erasure correction.  }
  \label{fig:graph:main}
\end{figure}

\subsection{The Construction} \label{sec:graph:constr} Our starting
point is an $\F_q$-linear code $\Crow \subseteq \F_{q^\ell}^M$ of rate
$\Rrow$ that can correct any erasure pattern as long as at least
$(1-\drow)(1-\eta)$ fraction of the symbols remain.  The naming
$\Crow$ is chosen to remind that the code is responsible for
correcting \emph{row erasures}.  The parameters $\ell$ and $\eta$ are
to be determined later.  Observe that if a code over
$\F_{q^{\ell_0}}^M$ with the above guarantees is available for some
$\ell_0 | \ell$, it is possible to artificially increase the alphabet
size to the desired $q^\ell$ by simply bundling $\ell/\ell_0$
independent codewords as an element of $\F_{q^\ell}^M$ without
affecting the rate or distance.

Our eventual choice of $\ell$ turns out to be significantly large
(e.g., linear in $M$ when $\drow > 0$ and even much higher when
$\drow = 0$) and that allows the use of a Reed-Solomon or any MDS code
for $\Crow$ and achieving rate $\Rrow = (1-\drow)(1-\eta)$. However,
since Reed-Solomon codes require polynomial-sized alphabets (in
length), doing so would cause the slight inconvenience of affecting
the minimum possible value for $N$ (which we ideally wish to only
depend on the constant parameter $\eta$ that determines the gap to the
optimal rate).  To address this, we distinguish two cases:
\begin{description}
\item[Case~1, where $\drow = 0$.] In this case, the number of rows $M$
  can, without loss of generality, be thought of as a constant $M_0$
  (only depending on the gap to optimal rate). This is because there
  are no row erasures, and as long as an $[M_0,N,0,\dcol]_q$-graph
  code $\mathcal{C}\subseteq \F^{M_0\times N}_q$ with some constant
  $M_0$ is constructed, the number of rows can be extended to any
  multiple $M$ of $M_0$ by stacking independent $M/M_0$ codewords on
  top of each other to achieve the desired number of rows without
  affecting the rate. Therefore, in this case the use of a
  Reed-Solomon code over a constant-sized alphabet would not cause an
  undesirable side effect. Namely, we pick a Reed-Solomon code
  $\Crow \subseteq \F_{q^{\ell_0}}^M$ for $q^{\ell_0} = O(M)$, and
  then extend the alphabet size to $q^\ell$ by the bundling procedure
  described above. By the above discussions, in this case we can
  assume $M=M_0$ for some minimum constant number of rows. The exact
  value of $M_0$ is to be determined by other components of our
  construction below.  Importantly, this special case is the subject
  of study in \cref{sec:AEL}.

\item[Case~2, where $\drow > 0$.] In this case, we appeal to the
  result of \cref{thm:almost:MDS} (that, in turn, only relies on the
  results for the special case $\drow=0$ in this section\footnote{We
    distinguish the case $\drow=0$ not only to optimize the
    parameters, but also to avoid a circular argument.}). Namely, we
  pick $\Crow \subseteq \F_{q^{\ell_0}}^M$ for
  $q^{\ell_0} = 2^{O(1/\eta^{3(1+\gamma_1)})}$ and achieving rate at
  least $(1-\drow)(1-O(\eta^{1/\alpha}))$, so long as $M \geq M_1$ for
  some $M_1 = O(1/\eta^{3\gamma_2+\beta+\gamma})$.
\end{description}

Let $\ens$ be an $[N, \dcol, (1-\drow) \eta]_q$-erasure code family of
size $M$ and rate $\Rcol = \ell/N \geq 1-\dcol-O(\eta^{1/\alpha})$;
i.e., each code in $\ens$ can be used to encode an message in
$\F_q^\ell$ to a codeword in $\F_q^N$.  This, in particular, is
achieved by the result of \cref{thm:explicit} which requires
$M \geq M_0$ for some $M_0 = O(1/(\eta^{3(1+\gamma_1)}\log{q}))$ on
$M$.  The rate of the code ensemble, therefore, determines the value
of $\ell$ that $\Crow$ needs to provide.  We use the code ensemble to
correct \emph{column erasures}.  Note that once an ensemble of smaller
size $M_0$ is available, one can obtain an ensemble of the desired
size $M$ by simply repeating each code in the existing ensemble
$M/M_0$ times (assuming that $M_0 | M$).  In our case,
\cref{thm:explicit} provides an ensemble size that is a power of two
and is thus suitable for this purpose.  We consider an arbitrary
indexing of the elements of $\ens$ by the elements of $[M]$.

From a codeword of $\Crow$ in $\F_{q^\ell}^M$, we construct an
$M\times N$ matrix by interpreting the $i$th symbol of the codeword,
for $i=1,\ldots,M$, as a row vector in $\F_q^\ell$ and then encoding
the row vector to a codeword of the $i$th code
$\mathcal{C}_i\subseteq\F^N_q$ in $\ens$.  Arranging the $M$ obtained
row vectors as an $M \times N$ matrix over $\F_q$ results in the final
codeword. Note that the code over $\F_q^{M \times N}$ that we have
just described is linear over $\F_q$.

\begin{lem} \label{lem:graph:analysis} The above construction provides
  a linear $[M,N,\drow,\dcol]_q$-graph code of rate $\Rrow \Rcol$.
\end{lem}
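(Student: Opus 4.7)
The plan is to verify three things separately: linearity, rate, and the erasure-correction guarantee of \cref{def:bigraph:code}. Linearity is essentially by construction: the encoder first applies the $\F_q$-linear encoder of $\Crow$ (an $\F_q$-linear code over the extension $\F_{q^\ell}$, viewed as an $\F_q$-linear map into $\F_q^{M\ell}$) and then applies, row by row, the $\F_q$-linear encoders of the codes $\cC_i \in \ens$. The composition is $\F_q$-linear, so the image is an $\F_q$-linear subspace of $\F_q^{M \times N}$. The rate calculation is also immediate: a message is an $\F_q$-vector of dimension $\Rrow M \ell$, and the codeword has $MN$ entries in $\F_q$, giving rate $\Rrow M \ell / (MN) = \Rrow \cdot (\ell/N) = \Rrow \Rcol$, as claimed.

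For the correction property, I would consider an arbitrary erasure pattern $(S, T)$ with $S \subseteq [M]$ of size at most $\drow M$ (erased rows) and $T \subseteq [N]$ of size at most $\dcol N$ (erased columns). For each surviving row index $i \in [M]\setminus S$, the $i$th row of the transmitted matrix is a codeword of $\cC_i$ with erasure pattern $T$ of size at most $\dcol N$. Since $\ens$ is by assumption an $[N,\dcol,(1-\drow)\eta]_q$-erasure code family and is indexed by $[M]$, at most $(1-\drow)\eta M$ of the codes in $\ens$ fail to correct the pattern $T$. Call these row indices ``bad''; for every other $i \in [M]\setminus S$ we recover the corresponding symbol of the underlying $\Crow$-codeword exactly.

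Combining the two sources of loss, the number of $\Crow$-symbols we fail to recover is at most
\begin{equation*}
|S| + (1-\drow)\eta M \;\leq\; \drow M + (1-\drow)\eta M \;=\; \bigl(1 - (1-\drow)(1-\eta)\bigr) M,
\end{equation*}
so at least a $(1-\drow)(1-\eta)$ fraction of the $\Crow$-codeword symbols survive. By the defining property of $\Crow$ (which was chosen precisely to correct any erasure pattern leaving this many symbols), the original $\Crow$-codeword, and hence the original message, is uniquely recoverable. This establishes that the construction is an $[M,N,\drow,\dcol]_q$-graph code.

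There is no real obstacle here; the only subtlety is the bookkeeping of error parameters, namely that $\ens$ was instantiated with error $(1-\drow)\eta$ (rather than $\eta$) so that the bad-row fraction $(1-\drow)\eta$ adds to the erased-row fraction $\drow$ to give a total loss of exactly $1 - (1-\drow)(1-\eta)$, which matches what $\Crow$ can handle. This coupling of parameters is what makes the two layers of the concatenation compatible.
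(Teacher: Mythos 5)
Your proof is correct and follows essentially the same approach as the paper's: verify linearity and rate directly, then for erasure correction run the inner erasure-code-family decoder row by row and feed the survivors to $\Crow$. Your bookkeeping (counting failures as $|S| + (1-\drow)\eta M \leq (1 - (1-\drow)(1-\eta))M$) is an equivalent arithmetic rearrangement of the paper's phrasing (``all but an $\eta$ fraction of $[M]\setminus S$ are recovered''), and the observation that the error parameter $(1-\drow)\eta$ is chosen precisely so the two loss sources combine to what $\Crow$ can handle matches the paper's intent.
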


\begin{proof}
  The claim on linearity and rate are immediate from the linearity of
  the codes $\Crow$, codes in $\ens$ and their respective rates.

  To analyze the erasure correction, consider any sets
  $S \subseteq [M]$ and $T \subseteq [N]$ where $|S| \leq \drow M$ and
  $|T| \leq \dcol N$. Let $C \in \F_q^{M \times N}$ be a codeword
  encoded as above, $c\in\F^M_{q^{\ell}}$ be the codeword in $\Crow$
  from which we get $C$, and suppose that all rows of $C$ in $S$ and
  columns in $T$ are erased. For each non-erased row $i\in[M]$ of $C$,
  we attempt to recover the corresponding symbol $c_i\in\F_{q^{\ell}}$
  using the erasure decoder of the corresponding code $\mathcal{C}_i$
  in $\ens$. By the guarantee on the fraction of codes in $\ens$ that
  succeed, this recovers all but at most an $\eta$ fraction of symbols
  $c_i$ among $i=[M]\setminus S$.  Next, we can decode the matrix to a
  vector $y\in\F_{q^\ell}^M$ where for at least $(1-\drow)(1-\eta)$
  fraction of the positions $i\in[M]$ we have $y_i=c_i$ and the rest
  of the positions $y_i$ are erased. The code $\Crow$ then ensures
  that the erased symbols can all be recovered. This completes the
  erasure correction of the codeword $C$.
\end{proof}

\paragraph{Strong Explicitness.}
In effect, in this construction, we are using the code ensemble $\ens$
over $\F_q$ to recover from column erasures and a row code achieved by
bundling $\ell/\ell_0$ copies of the single code
$\Crow\subseteq \F^M_{q^{\ell_0}}$ to recover from row erasures.  As
long as $\ell_0$ grows slowly (i.e., no more than poly-logarithmic in
the size of the matrix, which all constructions in this work satisfy),
we observe that the construction presented in this section is strongly
explicit provided that $\ens$ and $\Crow$ are both equipped with
strongly explicit constructions.

\subsection{Setting Up the Parameters} \label{sec:graph:params}

We now instantiate the construction to deduce the main result of this
section, stated below.

\begin{thm} \label{thm:matrix:explicit} Assume that outer codes are
  (strongly) $(\alpha,\beta,\gamma)$-attainable and that extractors
  are (strongly) $(\gamma_1, \gamma_2)$-attainable.  Fix any
  $(\drow, \dcol) \in [0, 1)^2$. For a parameter $\eta > 0$, there are
  $M_0 = \poly(1/\eta)$ and $N_0 = \poly(1/\eta)$ such that the
  following holds. Let $M \geq M_0$ and $N \geq N_0$ be integers where
  $M$ is a power of two. Then, there is a (strongly) explicit
  construction of a linear $[M,N,\drow,\dcol]_q$-graph code achieving
  rate at least $(1-\drow)(1-\dcol)(1-\eta^{1/\alpha})$.  Furthermore,
  after a one-time pre-processing time of $\exp(\eta^{-O(1)})$, the
  code can be encoded and decoded against erasures (as above) in
  quasi-linear time in the block length $MN$.  Concretely, when
  $\drow = 0$, one can take $M_0 = O(1/(\eta^{3(1+\gamma_1)}\log q))$
  and $N_0 = O(1/\eta^{3\gamma_2+\beta+\gamma})$; and otherwise,
  $M_0 = O(1/(\eta^{3(1+\gamma_1)}\log
  q)+1/\eta^{3\gamma_2+\beta+\gamma})$ and
  $N_0 = O(1/\eta^{3\gamma_2+\beta+\gamma}+1/\eta^{3(1+\gamma_1)})$.
\end{thm}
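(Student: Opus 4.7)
The plan is to instantiate the construction of \cref{sec:graph:constr} with concrete choices for the inner erasure code family $\ens$ (coming from \cref{thm:explicit}) and for the row code $\Crow$ (either a Reed--Solomon code bundled up to the required alphabet in Case~1, or the near-MDS construction of \cref{thm:almost:MDS} in Case~2), then invoke \cref{lem:graph:analysis} to certify the graph-code property and rate, and finally account for parameter constraints, preprocessing time, and quasi-linear encoding/decoding.

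First I would apply \cref{thm:explicit} with erasure fraction $\dcol$, gap parameter $\eta$, and error parameter $\eps \coloneq (1-\drow)\eta$. This yields an $[N,\dcol,(1-\drow)\eta]_q$-erasure code family of rate $\Rcol \geq 1-\dcol-O(\eta^{1/\alpha})$ and size a power of two bounded by $O(1/(\eta^{3(1+\gamma_1)}\log q))$, provided $N$ exceeds $O(1/\eta^{3\gamma_2+\beta+\gamma})$. Since $M$ is a power of two and the ensemble size is a power of two at most $M_0 = O(1/(\eta^{3(1+\gamma_1)}\log q))$, I can obtain an ensemble $\ens$ of exactly $M$ codes by repeating each code $M/M_0$ times; indexing them by $[M]$ is then immediate. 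This fixes $\ell = \Rcol N$, the dimension that the row code must feed into each inner encoding.

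Next I would choose $\Crow$. In Case~1 ($\drow=0$) the only requirement is column-erasure recovery of the outer symbols, so a Reed--Solomon code over $\F_{q^{\ell_0}}$ of rate $1-\eta$ with $q^{\ell_0}=O(M)$ works; I then bundle $\ell/\ell_0$ independent copies to lift the alphabet to $\F_{q^\ell}$ without changing rate or distance, so the only lower bound on $M$ comes from the ensemble, giving $M_0=O(1/(\eta^{3(1+\gamma_1)}\log q))$ and $N_0=O(1/\eta^{3\gamma_2+\beta+\gamma})$. In Case~2 ($\drow>0$) I would invoke \cref{thm:almost:MDS} (whose hypotheses rely only on the already-proved $\drow=0$ case, so there is no circularity) to obtain $\Crow\subseteq \F_{q^{\ell_0}}^M$ of rate $(1-\drow)(1-O(\eta^{1/\alpha}))$ over an alphabet of size $2^{O(1/\eta^{3(1+\gamma_1)})}$, valid for $M$ above $O(1/\eta^{3\gamma_2+\beta+\gamma})$; the bundling argument then supplies the required larger alphabet, and both $M_0$ and $N_0$ pick up the extra $O(1/\eta^{3\gamma_2+\beta+\gamma})$ summand claimed in the statement.

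With these components in place, \cref{lem:graph:analysis} directly gives a linear $[M,N,\drow,\dcol]_q$-graph code with rate $\Rrow\Rcol$, and the computation
\[
\Rrow\Rcol \geq (1-\drow)(1-O(\eta^{1/\alpha}))\,(1-\dcol)(1-O(\eta^{1/\alpha})) \geq (1-\drow)(1-\dcol)(1-\eta^{1/\alpha})
\]
after a harmless constant rescaling of $\eta$ yields the claimed rate. Strong explicitness follows because $\Crow$ is strongly explicit (as a Reed--Solomon code or via \cref{thm:almost:MDS}), each code in $\ens$ is strongly explicit by \cref{thm:explicit}, and the index $\ell_0$ grows only polylogarithmically so the bundling step preserves strong explicitness (cf.\ the discussion at the end of \cref{sec:graph:constr}). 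The one-time preprocessing is dominated by the exhaustive inner-code search inside \cref{thm:explicit}, costing $\exp(\eta^{-O(1)})$; thereafter encoding performs $M$ outer-code encodings plus $M$ inner-code encodings each in $\tilde{O}(N)$ time, and the two-phase decoder of \cref{lem:graph:analysis} runs in the same quasi-linear budget. The main bookkeeping obstacle is verifying that the ensemble size can be adjusted to exactly $M$ and that the two cases combine cleanly into the single $M_0$, $N_0$ bounds advertised in the theorem; once that is in place the rest is mechanical.
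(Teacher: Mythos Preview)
Your proposal is correct and follows essentially the same route as the paper: instantiate the construction of \cref{sec:graph:constr} with $\ens$ from \cref{thm:explicit} (with $\eps=(1-\drow)\eta$), choose $\Crow$ via the two cases (Reed--Solomon bundling when $\drow=0$, \cref{thm:almost:MDS} otherwise), invoke \cref{lem:graph:analysis}, and then bookkeep $M_0$, $N_0$, rate, explicitness, and runtime. One small slip: in Case~2 the extra summand that $N_0$ picks up is $O(1/\eta^{3(1+\gamma_1)})$ (to accommodate the alphabet exponent $\ell_0$ of $\Crow$), not $O(1/\eta^{3\gamma_2+\beta+\gamma})$, which $N_0$ already contains; the $O(1/\eta^{3\gamma_2+\beta+\gamma})$ term is what $M_0$ picks up from the minimum block-length requirement on $\Crow$.
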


\begin{proof}
  We use \cref{thm:explicit} in the construction of
  \cref{sec:graph:constr} (with $\eps \coloneq (1-\drow)\eta$) for the
  code family $\ens$ and the row code $\Crow$ as defined in
  \cref{lem:graph:analysis}.  The minimum value for $N$ is given by
  the lower bound on the block length of $\ens$, and the minimum value
  for $M$ is given by the size of the code ensemble provided by
  \cref{thm:explicit}. Setting
  $N_0 = O(1/\eta^{3\gamma_2+\beta+\gamma})$ and
  $M_0 = O(1/(\eta^{3(1+\gamma_1)}\log q))$ can fulfill both
  requirements.

  We need to additionally ensure that $N_0$ is large enough to
  accommodate the minimum possible alphabet size $\ell_0$ for the code
  $\Crow$. Recall that when $\drow=0$, we have $\ell_0 = O(\log M_0)$
  and the above-mentioned choice for $N_0$ would more than suffice.
  When $\drow > 0$, we have $\ell_0 =O(1/\eta^{3(1+\gamma_1)})$. In
  order to accommodate for that we can increase the value of $N_0$
  accordingly to fulfill the requirement.

  Furthermore, we need to furthermore ensure that $M$ is large enough
  to fulfill the minimum length requirement of the code
  $\Crow$. Again, this is not an issue when $\drow = 0$ as in this
  case $\Crow$ is a Reed-Solomon code which only requires $O(1/\eta)$
  length; already accommodated by the choice of $M_0$.  When
  $\drow > 0$, recall that the minimum length requirement for $\Crow$
  is $M_1 = O(1/\eta^{3\gamma_2+\beta+\gamma})$. In this case, we
  increase our choice of $M_0$ by $M_1$ to fulfill that.

  Since the code is able to decode any codeword when there are no
  erasures, the rate of the final code is readily seen to be
  \[R = \Rrow \Rcol =
    (1-\drow)(1-O(\eta^{1/\alpha}))(1-\dcol)(1-O(\eta^{1/\alpha}))
    \geq (1-\drow)(1-\dcol)(1-O(\eta^{1/\alpha})).
  \]
  Without loss of generality, we can rewrite this as
  $(1-\drow)(1-\dcol)(1-\eta^{1/\alpha})$ as in the statement of the
  result by simply scaling $\eta$ by a constant.

  The code construction involves the pre-processing step of
  \cref{thm:explicit} followed by a straightforward implementation of
  the steps described in \cref{sec:graph:constr}, confirming (strong)
  explicitness and the encoding runtime.  Erasure decoding, as
  described in the proof of \cref{lem:graph:analysis}, also requires
  quasi-linear time since $\Crow$ and $\ens$ are equipped with
  quasi-linear time erasure decoders.
\end{proof}

Below, we record a simplified version of \cref{thm:matrix:explicit}
that additionally incorporates the result of \cref{coro:alpha} that
outer codes are strongly $(1,O(1),O(1))$-attainable, in addition to
\cref{prop:ext:constr}.

\begin{coro} \label{coro:matrix:explicit} Fix any
  $(\drow, \dcol) \in [0, 1)^2$. For a parameter $\eta > 0$, there are
  $N_0 = \poly(1/\eta)$ and $M_0 = \poly(1/\eta)$ such that the
  following holds. Let $M \geq M_0$ and $N \geq N_0$ be integers where
  $M$ is a power of two. Then, there is a strongly explicit
  construction of a linear $[M,N,\drow,\dcol]_q$-graph code achieving
  rate at least $(1-\drow)(1-\dcol)(1-\eta)$.  Furthermore, after a
  one-time pre-processing time of $\exp(\eta^{-O(1)})$, the code can
  be encoded and decoded against erasures (as above) in quasi-linear
  time in the block length $MN$.\qedhere \qed
\end{coro}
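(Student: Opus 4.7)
The plan is to derive this corollary as an immediate specialization of \cref{thm:matrix:explicit}: no new construction is needed, only the instantiation of the two attainability parameter families with their best available values.

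First, I would apply \cref{coro:alpha}, which establishes that outer codes are strongly $(1, O(1), O(1))$-attainable. The choice $\alpha = 1$ is the key input, because in \cref{thm:matrix:explicit} the rate lower bound has the form $(1-\drow)(1-\dcol)(1-\eta^{1/\alpha})$, and specializing $\alpha = 1$ makes this $(1-\drow)(1-\dcol)(1-\eta)$, as required. The fact that the associated $\beta$ and $\gamma$ from \cref{coro:alpha} are absolute constants also means that the $M_0$ and $N_0$ expressions in \cref{thm:matrix:explicit}, which depend on $\beta,\gamma$, reduce to $\poly(1/\eta)$ once the extractor parameters are fixed.

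Next, I would invoke \cref{prop:ext:constr} to take $\gamma_1 = 4$ and $\gamma_2 = 2$ via the zig-zag-based strong extractor of \cref{thm:RVW}. With $(\gamma_1, \gamma_2) = (4, 2)$ also being absolute constants, the bounds $M_0 = O(1/(\eta^{3(1+\gamma_1)}\log q) + 1/\eta^{3\gamma_2+\beta+\gamma})$ and $N_0 = O(1/\eta^{3\gamma_2+\beta+\gamma} + 1/\eta^{3(1+\gamma_1)})$ from \cref{thm:matrix:explicit} become $\poly(1/\eta)$ as stated. The pre-processing time of $\exp(\eta^{-O(1)})$ and the quasi-linear encoding/decoding guarantees are inherited verbatim from \cref{thm:matrix:explicit}, and strong explicitness follows because both \cref{coro:alpha} provides strongly explicit outer codes and \cref{thm:RVW} provides a strongly explicit (in the required sense of polynomial time in the input length) shuffler extractor.

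There is essentially no technical obstacle to resolve at this stage: the real content sits upstream, in the analysis of the erasure code family (\cref{thm:explicit}), in the matrix-code construction (\cref{thm:matrix:explicit}), and in the bootstrapping argument of \cref{coro:alpha}, all of which the statement permits us to use as black boxes. The only minor point to note is the standard cosmetic device of rescaling $\eta$ by an absolute constant in order to absorb the hidden multiplicative constant in $\eta^{1/\alpha} = \eta$ and present the rate cleanly as $(1-\drow)(1-\dcol)(1-\eta)$; this rescaling does not affect the $\poly(1/\eta)$ and $\exp(\eta^{-O(1)})$ bounds, so the corollary follows.
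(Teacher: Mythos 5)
Your proof is correct and matches the paper's own approach exactly: the corollary is stated as an immediate specialization of \cref{thm:matrix:explicit} using \cref{coro:alpha} (to set $\alpha=1$ with absolute constants $\beta,\gamma$) and \cref{prop:ext:constr} (to set $\gamma_1=4$, $\gamma_2=2$), which is precisely what you do. The paper treats this as an instantiation without a separate proof, and your observation about the cosmetic rescaling of $\eta$ is consistent with how the paper already absorbs constants in \cref{thm:matrix:explicit}.
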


\section{Explicit Erasure Codes over Constant-Sized Alphabets}
\label{sec:AEL}

We highlight a notable special case of our construction when, in one
dimension (rows or columns), no erasures occur. Let us assume that
$\drow = 0$ so that the adversary only erases a $\dcol =: \delta$
fraction of the $N$ columns. In this case, we can interpret the
codewords (in $\F_q^{M \times N}$) of the construction provided by
\cref{thm:matrix:explicit} as elements of $\F_Q^N$, where $Q = q^M$
(i.e., each column is interpreted as an element of $\F_Q$). This
provides an $\F_q$-linear code that is nearly-MDS over a
constant-sized alphabet, akin to what constructions such as
\cite{AEL95} achieve.  Namely, we have the following result.

\begin{thm} \label{thm:almost:MDS} Assume that outer codes are
  (strongly) $(\alpha,\beta,\gamma)$-attainable and that extractors
  are (strongly) $(\gamma_1, \gamma_2)$-attainable.  Fix any
  $\delta \in [0,1)$. For a parameter $\eta > 0$, there is an
  $N_0 = O(1/\eta^{3\gamma_2+\beta+\gamma})$ and
  $Q = 2^{O(1/(\eta^{3(1+\gamma_1)})}$ (that is a power of $q$) such
  that for all $N \geq N_0$, there is a (strongly) explicit
  construction of an $\F_q$-linear code over $\F_Q$ with relative
  distance larger than $\delta$ and rate at least
  $1-\delta-\eta^{1/\alpha}$ (requiring a pre-processing time of
  $\exp(\eta^{-O(1)})$). Furthermore, the code can be encoded and
  erasure decoded (against any $\delta$ fraction of erasures) in time
  $\tilde{O}(N \log Q)$.
\end{thm}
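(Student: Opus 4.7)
The plan is to derive this nearly-MDS code as a direct specialization of \cref{thm:matrix:explicit} in the regime where $\drow = 0$, $\dcol = \delta$, and $M$ is set to the minimum permissible value (a power of two). First, I would invoke \cref{thm:matrix:explicit} to obtain a (strongly) explicit $\F_q$-linear $[M,N,0,\delta]_q$-graph code of rate at least $(1-\delta)(1-\eta^{1/\alpha})$ that corrects any pattern of up to $\delta N$ fully-erased columns. Then I would reinterpret each column of the $M \times N$ codeword matrix as a single symbol of the alphabet $\F_Q$ with $Q = q^M$, thereby producing an $\F_q$-linear code $\cC \subseteq \F_Q^N$ of the same $\F_q$-dimension as the original matrix code.

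The key verification is that this bundling preserves both parameters of interest. For the rate, the $\F_q$-dimension of $\cC$ equals the dimension of the original matrix code while the total number of $\F_q$ symbols $N \log_q Q = MN$ is likewise preserved, so the rate as a code over $\F_Q$ remains at least $(1-\delta)(1-\eta^{1/\alpha}) \geq 1 - \delta - \eta^{1/\alpha}$. For the relative distance, I would argue that two distinct codeword matrices cannot agree on any set of $\geq (1-\delta)N$ full columns, since otherwise they would coincide on an $M \times (1-\delta)N$ submatrix, contradicting \cref{def:bigraph:code} with $\drow = 0$; hence, when viewed in $\F_Q^N$, distinct codewords differ on strictly more than $\delta N$ $\F_Q$-symbols.

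Instantiating the parameters is then routine bookkeeping: from the $\drow = 0$ branch of \cref{thm:matrix:explicit} I can take $M = M_0 = O(1/(\eta^{3(1+\gamma_1)}\log q))$, which is a power of two and yields $Q = q^{M_0} = 2^{O(1/\eta^{3(1+\gamma_1)})}$ as required, together with $N_0 = O(1/\eta^{3\gamma_2 + \beta + \gamma})$. Strong explicitness, the $\tilde{O}(N \log Q) = \tilde{O}(MN)$ encoding and erasure-decoding time, and the $\exp(\eta^{-O(1)})$ pre-processing cost all transfer directly, since column bundling is only a relabeling of symbols. I do not foresee a real obstacle; the only point requiring care is that \cref{def:bigraph:code} forbids coincidence on any $(1-\delta)N$-sized column subset, which is what promotes the natural ``$\geq \delta$'' bound on relative distance to the \emph{strict} inequality stated in the theorem.
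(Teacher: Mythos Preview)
Your proposal is correct and follows exactly the same route as the paper: the paper's proof of \cref{thm:almost:MDS} is a two-sentence invocation of \cref{thm:matrix:explicit} with $\drow=0$, $\dcol=\delta$, $M=M_0$, and $Q=q^M$, interpreting columns as $\F_Q$-symbols. Your write-up is in fact more detailed than the paper's, spelling out the rate and distance verifications that the paper leaves implicit.
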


\begin{proof}
  This is an immediate corollary of \cref{thm:matrix:explicit} with
  $\drow \coloneq 0$, $\dcol \coloneq \delta$, $M = M_0$, and
  $Q \coloneq q^M$.  Each column of each codeword in
  $\F_q^{M \times N}$ is regarded as an element of $\F_{q^M} =
  \F_Q$. resulting in an $\F_q$-linear code over $\F_Q^N$.
\end{proof}

We recall that the pre-processing time (that only depends on $1/\eta$
but exponentially so) can be eliminated using \cref{rem:exhaustive}.
Alternatively, this exhaustive search can be eliminated by using an
MDS inner code, as we explain below.

Furthermore, as an immediate consequence of our strongly explicit
construction of nearly-MDS codes over constant-sized alphabets, we can
strengthen the constructability of outer codes (i.e.,
\cref{prop:outer}), as recorded below.

\begin{coro} \label{coro:alpha} There are absolute constants
  $\beta > 1$ and $\gamma > 1$ such that outer codes are strongly
  $(1, \beta, \gamma)$-attainable. \qedhere \qed
\end{coro}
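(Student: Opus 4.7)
The plan is to bootstrap \cref{thm:almost:MDS} using the initial value of $\alpha$ supplied by \cref{prop:outer}. The key observation is that in \cref{thm:almost:MDS}, the target relative distance $\delta$ and the gap-to-capacity parameter (called $\eta$ inside that theorem) are decoupled: for any $\delta \in [0,1)$ and any small internal $\eta > 0$, a strongly explicit $\F_q$-linear code is produced, over an alphabet of size $2^{O(1/\eta^{3(1+\gamma_1)})}$, with relative distance greater than $\delta$ and rate at least $1 - \delta - \eta^{1/\alpha}$. I would exploit this decoupling by choosing the internal $\eta$ polynomially smaller than $\delta$, so that the penalty term $\eta^{1/\alpha}$ becomes a constant multiple of $\delta$ itself.

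Concretely, starting from the fact that outer codes are strongly $(3+\epsilon, 0, 1)$-attainable via \cref{prop:outer} and that extractors are strongly $(4, 2)$-attainable via \cref{prop:ext:constr}, I plug these constants into \cref{thm:almost:MDS}. To fit \cref{defn:outer} with a fresh distance-gap parameter, which I call $\eta^\star$ to avoid name collision, I invoke the theorem with $\delta \coloneq \eta^\star$ and with the theorem's internal parameter set to $(\eta^\star)^{\alpha}$, where $\alpha = 3+\epsilon$. The rate of the resulting code is at least
\[
1 - \eta^\star - \bigl((\eta^\star)^{\alpha}\bigr)^{1/\alpha} = 1 - 2\eta^\star,
\]
which is $1 - O(\eta^\star)$, establishing $\alpha_{\mathrm{new}} = 1$ for the new outer-code family, while the relative distance is still $>\eta^\star$ as required.

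It then suffices to check that the other parameters remain absolute constants. After the substitution, the alphabet size becomes $2^{O(1/(\eta^\star)^{3\alpha(1+\gamma_1)})}$ and the minimum block length becomes $O(1/(\eta^\star)^{\alpha(3\gamma_2 + \beta_0 + \gamma_0)})$, where $(\beta_0, \gamma_0) = (0,1)$ are inherited from \cref{prop:outer}. Both exponents are absolute constants exceeding $1$, yielding fixed $\beta, \gamma > 1$. Strong explicitness and encoding/erasure-decoding in $\tilde{O}(N \log Q)$ time are inherited directly from \cref{thm:almost:MDS}; the $\exp(\eta^{-O(1)})$ one-time pre-processing there becomes $\exp((\eta^\star)^{-O(1)})$, which is still a function of the fixed parameter $\eta^\star$ only, and hence compatible with strong explicitness.

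The main conceptual obstacle to watch for is circularity: \cref{thm:almost:MDS} is itself stated under the assumption that outer codes are $(\alpha, \beta, \gamma)$-attainable, so one might worry that using it to prove strong $(1, \beta, \gamma)$-attainability is self-referential. The resolution is that the bootstrap feeds \cref{thm:almost:MDS} only the weak parameters of \cref{prop:outer}, not the improved parameters that the corollary itself is establishing. The entire argument therefore relies only on statements proved earlier in the paper, and the improvement from $\alpha \approx 3$ to $\alpha = 1$ comes purely from the polynomial rescaling of the internal gap parameter relative to the target distance, which is exactly what the decoupling in \cref{thm:almost:MDS} permits.
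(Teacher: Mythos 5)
Your proposal is correct and follows essentially the same route as the paper: both invoke \cref{thm:almost:MDS} with the weak parameters from \cref{prop:outer} and \cref{prop:ext:constr}, set $\delta$ equal to the fresh distance parameter and the internal gap to its $\alpha$-th power, and verify that the resulting rate $1 - O(\eta^\star)$, constant-sized alphabet exponent, and minimum block length all conform to $(1,\beta,\gamma)$-attainability. Your explicit discussion of why there is no circularity is a worthwhile clarification that the paper leaves implicit.
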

\begin{proof}
  Fix any erasure fraction $\eta_0\in [0,1)$. Using \cref{prop:outer}
  and \cref{prop:ext:constr}, we can use the parameters $\alpha=3.01$,
  $\beta=0$, $\gamma=1$, $\gamma_1=4$, $\gamma_2=2$,
  $\eta=\eta^{\alpha}_0$, and $\delta=\eta_0$ in
  \cref{thm:almost:MDS}. This obtains a strongly explicit construction
  of linear codes with rate at least $1-O(\eta_0)$, alphabet size
  $Q=2^{O(1/\eta_0^{46})}$, and minimum block length
  $N_0=O(1/\eta_0^{22})$ that can be encoded and erasure decoded
  against any $\eta_0$ fraction of erasures in time
  $\tilde{O}(N\log{Q})$. This confirms that outer codes are strongly
  $(1,46,22)$-attainable.
\end{proof}
As a consequence of this, we can use \cref{coro:alpha} and
\cref{prop:ext:constr} to instantiate all our previous results on
erasure code families, explicit codes on bipartite graphs, and in
turn, nearly-MDS codes and tighten their guarantees.  Concretely, by
picking $\alpha=1$, the size of erasure code families
\cref{thm:explicit} and the parameters of the bipartite graph code
construction in \cref{thm:matrix:explicit} can be improved.  This
layer of bootstrapping furthermore allows the choice of $\alpha = 1$
in \cref{thm:almost:MDS} as well, and is important for the application
of \cref{sec:symmetric}. Moreover, \cref{coro:alpha} simultaneously
achieves $\alpha=1$ and strong explicitness, an aspect that our
applications use.  As we have discussed before, to the best of our
knowledge, other outer codes such as expander-based constructions used
by \cite{AEL95} that also confirm a $(1,O(1),O(1))$-attainable
guarantee do not provide strong explicitness.

Even though the above result obtains a constant-sized alphabet that
only depends on the gap to the Singleton bound $\eta$, the dependence
on $\eta$ is weaker than what \cite{AEL95} obtains which is
$Q = (1/\eta)^{O(1/\eta^4)}$ for $q=2$. This is not an artifact of our
general framework. Recall that \cref{thm:almost:MDS} uses
\cref{thm:matrix:explicit}, which in turn is based on erasure code
families.  According to \cref{lem:existence:code},
$[N,\delta,\eta]_q$-erasure code families of rate $R = 1-\delta-\eta$
of size $O(1/(\eta^2 \log q))$ exist.  Such erasure code families in
our proposed constructions would result in nearly MDS codes with gap
to Singleton bound $\eta$ and alphabet size $Q = 2^{O(1/\eta^2)}$;
thus obtaining packet lengths that are quadratically better than the
\cite{AEL95} construction. In contrast, fully random (linear or
nonlinear) codes achieve the Gilbert-Varshamov bound which, in this
parameter regime (assuming $\delta>0$ is a constant), corresponds to
$Q=2^{O(1/\eta)}$. This motivates the question of improving explicit
constructions of erasure code families (or equivalently, linear seeded
extractors and lossless condensers for symbol-fixing sources).

We note that assuming optimal explicit extractors (more precisely,
extractors being $(2, O(1))$-attainable, our resulting alphabet size
is $Q=2^{O(1/\eta^9)}$.  Other than the seed length of the shuffler
extractor, we identify two sources of inefficiency for the parameters
achieved by \cref{thm:almost:MDS}:

\begin{enumerate}
\item The use of an inner erasure code family $\Ci$ (of constant block
  length $L=\poly(1/\eta)$ over $\F_q$) in the construction of
  explicit erasure code families described in \cref{sec:code:constr}.
  For the specific application of nearly-MDS codes, instead of an
  erasure inner code family $\ensIn$, we can simply use a single MDS
  code over an alphabet $\F_{q'}$ that is large enough to accommodate
  the inner code block length $L$ (namely, $q'=O(L)$ would allow the
  use of a Reed-Solomon inner code over $\F_{q'}^L$).  This eliminates
  a $1/\eta^3$ factor in the exponent of alphabet size $Q$ reported in
  \cref{thm:almost:MDS}, but induces a factor $O(\log(1/\eta))$ (to
  accommodate $\F_{q'}$ for the inner code alphabet) instead.
  Moreover, the use of an explicit inner code also eliminates the need
  for the pre-processing step needed by \cref{thm:almost:MDS} that
  constructs and tabulates the inner code ensemble.

\item Even with the inner code family replaced with one explicit inner
  code, the construction of explicit erasure code family in
  \cref{sec:code:constr} still uses two layers of averaging arguments
  (Markov's inequality) in the analysis (specifically, in the use of
  \cref{prop:balance,prop:balance:general} for the proof of
  \cref{lem:code:analysis}).  In particular, an averaging argument is
  used over the random choice of the seed of the shuffler extractor
  (that picks a code in the final erasure code family). Then, for the
  choices of the ``good'' codes in the ensemble that pass the first
  averaging argument, a second averaging argument is used on the $M$
  outer code blocks of the corresponding code in the ensemble. Once
  almost all blocks of the good codes in the ensemble are recovered,
  the outer code (applied separately to each individual code in the
  code ensemble) recovers the remaining blocks. For the specific
  application of nearly-MDS explicit codes, this is redundant, and a
  single outer code can be applied to all blocks corresponding to all
  seeds simultaneously (i.e., a total of $MD$ blocks where $D$ is the
  number of choices of the seed of the shuffler extractor).  Since the
  shuffler extractor is strong, it guarantees that all but a small
  fraction of the $MD$ blocks can be recovered, and the remaining
  blocks are recovered by a single outer code. This also eliminates
  the use of the ``row code'' $\Crow$ in the construction of
  \cref{sec:graph:constr} for this specific case (which simplifies the
  construction but does not lead to further savings in the
  asymptotics).
\end{enumerate}

Together, the above two considerations result in an improvement of
\cref{thm:almost:MDS} to what we record below.  We omit the proof
details as they involve a straightforward re-derivation of the
analysis in \cref{sec:codes} when the two considerations above are
applied.

\begin{thm} \label{thm:almost:MDS:improve} Assume that extractors are
  (strongly) $(\gamma_1, O(1))$-attainable (in particular, one can
  take $\gamma_1=4$).  Fix any $\delta \in [0,1)$. For a parameter
  $\eta > 0$, there is an $N_0 = \poly(1/\eta)$ and
  $Q = 2^{O(\log(1/\eta)/\eta^{2\gamma_1})}$ (that is a power of $q$)
  such that for all $N \geq N_0$, there is a (strongly) explicit
  construction of an $\F_q$-linear code over $\F_Q$ with relative
  distance larger than $\delta$ and rate at least $1-\delta-\eta$.
  Furthermore, the code can be encoded and erasure decoded (against
  any $\delta$ fraction of erasures) in time $\tilde{O}(N \log
  Q)$. \qedhere \qed
\end{thm}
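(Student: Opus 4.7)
The plan is to implement in detail the two simplifications sketched in the text following \cref{thm:almost:MDS}. The construction uses three ingredients: (i) a strong $(\log N - \Delta, \nu)$-extractor $\Ext\colon [N] \times [D] \to [M]$ with $\Delta = -\log(1-\delta)$ and $\nu = \Theta(\eta^2)$, which, by the $(\gamma_1, O(1))$-attainability hypothesis, exists with $D = O(1/\nu^{\gamma_1}) = O(1/\eta^{2\gamma_1})$; (ii) a single Reed--Solomon inner code $\Ci \subseteq \F_{q'}^L$ with $L = \Theta(1/\eta)$, $q' = \Theta(L)$, and rate $R_i \geq 1-\delta-O(\eta)$, so that $\Ci$ corrects any $(\delta + O(\eta))$-fraction of erasures; and (iii) a single outer nearly-MDS $\F_q$-linear code $\Co$ over $\F_{Q'}$ (with $Q' = q'^{R_i L}$) of block length $MD$, rate at least $1-\eta$, and relative distance greater than $\eta$, provided by \cref{coro:alpha}. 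I would set $N = ML$.

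I would encode a message by first applying $\Co$ to obtain $c = (c_{i,z})_{(i,z)\in[M]\times[D]} \in \F_{Q'}^{MD}$, then applying $\Ci$ to each $c_{i,z}$ to obtain $c'_{i,z} \in \F_{q'}^L$. For each seed $z \in [D]$, the symbols of $c'_{i,z}$ are placed at the positions $S_i^z \coloneq \Ext_z^{-1}(i) \subseteq [N]$, using the freezing/truncation convention of \cref{sec:code:constr}, which produces a per-seed string $C^z \in \F_{q'}^N$. Finally, the $D$ per-seed symbols at each column position $p \in [N]$ are bundled into a single $\F_Q$ symbol with $Q = q'^D$, yielding a final codeword of length $N$ over $\F_Q$.

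For decoding against $\delta N$ column erasures whose complement is the non-erased set $S \subseteq [N]$, the key step is a \emph{single} averaging argument in place of the two-layer Markov step of \cref{lem:code:analysis}. The strong extractor guarantee implies that $(Z, \Ext(U_{S}, Z))$ is $\nu$-close in $\ell_1$ to $U_{[D] \times [M]}$, which unfolds to $\sum_{(i,z)} \bigl|\,|S_i^z \cap S| - |S|/M\,\bigr| \leq 2\nu D |S|$. A direct Markov step then shows that for all but an $O(\nu/\eta) = O(\eta)$ fraction of pairs $(i,z) \in [M] \times [D]$, one has $|S_i^z \cap S| \geq (1-\delta-O(\eta))L$; each such good pair is recoverable by the inner Reed--Solomon decoder, and the outer code's relative distance $> \eta$ recovers the remaining bad outer symbols. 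The resulting rate is $R_o R_i \geq 1 - \delta - O(\eta)$ (rescaling $\eta$ by a constant absorbs the $O(\cdot)$), and the alphabet size is $Q = q'^D = (O(1/\eta))^{O(1/\eta^{2\gamma_1})} = 2^{O(\log(1/\eta)/\eta^{2\gamma_1})}$. Quasi-linear encoding/decoding and strong explicitness inherit from the three ingredients.

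The hard part will be justifying that the joint $\ell_1$-closeness of $(Z, \Ext(U_{S}, Z))$ to uniform supports a single-Markov step with only linear dependence on $1/\eta$, which is exactly what drives the exponent $2\gamma_1$ (rather than $3\gamma_1$) in the alphabet-size bound. Concretely, the two-layer \cref{prop:balance:general} forces $\eps_1\eps_2\eps_3 \geq 2\nu$ and would require $\nu = O(\eta^3)$; by instead averaging $|S_i^z \cap S|$ over $(i,z)$ jointly and using the single-Markov conclusion that $|S_i^z \cap S| \geq (1-\eta)|S|/M$ for all but an $O(\nu/\eta)$ fraction of pairs, one can afford $\nu = O(\eta^2)$. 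Once this is handled, the rest is a straightforward re-derivation of \cref{lem:code:analysis} and \cref{thm:almost:MDS} with the single inner MDS code and the single global outer code replacing the respective ensembles.
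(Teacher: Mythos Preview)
Your proposal is correct and follows exactly the approach sketched in the paper (which omits the proof details): replace the inner erasure-code family by a single Reed--Solomon code over $\F_{q'}$, replace the per-seed outer code and row code by one global outer code of block length $MD$, and use a single Markov step over the joint distribution $(Z,\Ext(U_S,Z))$ to show that all but an $O(\nu/\eta)$ fraction of the $MD$ pairs $(i,z)$ are recoverable, allowing $\nu=\Theta(\eta^2)$ and hence $D=O(1/\eta^{2\gamma_1})$.

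One small correction that does not affect the headline bound: your choice $L=\Theta(1/\eta)$ is too optimistic. The extractor's entropy loss already forces $L \geq \Omega(1/\nu^{\gamma_2}) = \Omega(1/\eta^{2\gamma_2})$ (cf.\ \cref{defn:ext:constr}), and the outer code from \cref{coro:alpha} may require $\log_q Q' = R_i L \log_q q' \geq \Omega(1/\eta^\beta)$ for some constant $\beta>1$. Both constraints are satisfied by taking $L=\poly(1/\eta)$, and since $\log q' = O(\log L) = O(\log(1/\eta))$ regardless, the alphabet size $Q=q'^D=2^{O(\log(1/\eta)/\eta^{2\gamma_1})}$ is unaffected. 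You should also note (as in the proof of \cref{lem:code:analysis}) that the single Markov argument must be applied to \emph{both} $U_S$ and $U_{[N]}$, so that good pairs $(i,z)$ satisfy both $|S_i^z\cap S|\geq(1-\eta)|S|/M$ and $|S_i^z|\in(1\pm\eta)L$; otherwise the freezing/truncation could defeat the inner decoder even when $|S_i^z\cap S|$ is large.
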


This matches the alphabet size obtained by \cite{AEL95}; i.e.,
$Q = 2^{O(\log(1/\eta)/\eta^4)}$, assuming explicit constructions of
nearly optimal extractors (more precisely, when $\gamma_1 = 2$).

\section{Codes on Non-Bipartite Graphs}
\label{sec:symmetric}
In this section, we provide a strongly explicit construction of linear
graph codes achieving rates $R\ge(1-\sqrt{\delta})^4-o(1)$ for any
erasure ratio $\delta \in [0,1)$. We present our main theorem below.

\begin{thm}\label{thm:graph:code:explicit}
  Fix any $\delta\in[0,1)$ and a prime power $q$. For any $\eta>0$,
  there is an $N_0=(1/\eta)^{O(1)}$ such that the following holds. For
  every $N\ge N_0$, there is a strongly explicit construction of a
  linear $[N,\delta]_q$-graph code achieving rate at least
  $(1-\sqrt{\delta})^4-\eta$. Furthermore, after a one-time
  pre-processing time of $\exp{(\eta^{-O(1)})}$, the code can be
  encoded and decoded against any $\delta N$ row and column erasures
  in quasi-linear time in the block length $\binom{N}{2}$.
\end{thm}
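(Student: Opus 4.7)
Set $\sigma \coloneq \sqrt\delta$ and fix a block size $m = \poly(1/\eta)$, partitioning $[N]$ into $N' = N/m$ blocks of size $m$ (trivial padding handles divisibility), so that the $N \times N$ codeword is seen as an $N' \times N'$ array of $m \times m$ blocks. Following the outline in \cref{sec:overview}, I use a single concatenation. The \emph{outer} code is the zero-diagonal symmetric tensor of two copies of a strongly explicit nearly-MDS code $C \subseteq \F_Q^{N'}$ (from \cref{thm:almost:MDS:improve} and \cref{coro:alpha}), of relative distance $>\sigma$, rate $\ge 1-\sigma-O(\eta)$, and constant alphabet $\F_Q$. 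The \emph{inner} code is the strongly explicit bipartite graph code $\Ci$ on $\F_q^{m\times m}$ from \cref{coro:matrix:explicit} of rate $(1-\sigma)^2 - O(\eta)$ correcting $\sigma$-fractions of row and column erasures; $m$ is chosen so that $\Ci$'s message-space size equals $Q$. The two rates (each close to $(1-\sigma)^2$) multiply to the target $(1-\sqrt\delta)^4 - O(\eta)$.

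For encoding, let $G \in \F_Q^{k \times N'}$ be a generator matrix of $C$ with $k \ge (1-\sigma-O(\eta))N'$. Outer codewords are the symmetric matrices $Y = G^\top X G$ for symmetric $X \in \F_Q^{k \times k}$ satisfying the $N'$ linear conditions $g_i^\top X g_i = 0$ (forcing $Y_{ii}=0$), with $g_i$ the $i$-th column of $G$; these constraints drop the outer dimension by at most $N'$, to at least $\binom{k+1}{2} - N'$. Given $X$, compute $Y$; for each $i<j$ inner-encode $Y_{ij}$ via $\Ci$ into an $m \times m$ block at block position $(i,j)$, with its transpose at $(j,i)$; set all diagonal blocks to zero. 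The resulting code is $\F_q$-linear, symmetric, and has zero diagonal. For decoding against an erasure set $E \subseteq [N]$ with $|E| \le \delta N$, let $E_i \subseteq [m]$ be the within-block erasures and $B \coloneq \{i : |E_i| > \sigma m\}$; averaging $\sum_i|E_i|\le\delta N$ gives $|B| \le \sigma N'$. For every off-diagonal block $(i,j)$ with $i, j \notin B$ the inner code recovers $Y_{ij}$, leaving unknowns in $Y$ only within the block rows and block columns indexed by $B$, while the block diagonal is known (zero). The standard two-pass tensor decoder for $C \otimes C$---first $C$-decode each block column $j \notin B$ from its at most $|B|$ unknowns, then $C$-decode each block row from its at most $|B|$ unknowns---recovers $Y$ (and hence $X$) because $C$'s distance exceeds $\sigma N' \ge |B|$.

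The total $\F_q$-dimension is at least $(\binom{k+1}{2} - N') \log_q Q \approx \tfrac12 (1-\sigma)^2 (N')^2 \cdot (1-\sigma)^2 m^2$, yielding rate $(1-\sqrt\delta)^4 - O(\eta)$ after dividing by $\binom{N}{2} \approx N^2/2$ and rescaling $\eta$. Strong explicitness and quasi-linear encoding/decoding are inherited directly from $C$ and $\Ci$, and the one-time $\exp(\eta^{-O(1)})$ pre-processing comes from \cref{coro:matrix:explicit}. The main conceptual obstacle is handling the zero-diagonal requirement of non-bipartite graph codes compatibly with a single concatenation: if one used a non-bipartite graph inner code on diagonal blocks (to transmit a nonzero $Y_{ii}$), rate matching would force $\log_q Q \le (1-\sigma)^2 \binom{m}{2}$, and the off-diagonal bipartite inner code would be used at only half its optimal rate, collapsing the final rate to $(1-\sqrt\delta)^4/2$. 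Enforcing $Y_{ii} = 0$ at the outer level (a negligible $O(1/N')=O(\eta)$ loss in outer dimension) sidesteps this and lets the bipartite inner code operate at its full $(1-\sigma)^2$ rate on every off-diagonal block. The remaining verifications---that the two-pass tensor decoder correctly handles the combined erasure-plus-known-diagonal pattern described above, and that transposition propagates the inner encoding consistently across antidiagonal block pairs---are routine.
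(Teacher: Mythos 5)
Your construction is structurally the same as the paper's: set $\sigma=\sqrt\delta$, build the outer graph code as the (zero-diagonal) symmetric tensor square of a strongly explicit nearly-MDS code from \cref{thm:almost:MDS:improve}, concatenate once with the bipartite graph code of \cref{coro:matrix:explicit}, and use averaging over blocks to reduce $\delta$-fraction row/column erasures to $\sigma$-fraction block-row/block-column erasures plus $\sigma$-fraction erasures within each good block.

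There is one genuine gap: you write ``let $G \in \F_Q^{k \times N'}$ be a generator matrix of $C$'' and form $Y = G^\top X G$ over $\F_Q$, but the code supplied by \cref{thm:almost:MDS:improve} is only $\F_q$-\emph{linear} over the alphabet $\F_Q$, not $\F_Q$-linear --- it has no $\F_Q$-generator matrix, so the displayed tensor product does not type-check. This is precisely the subtlety the paper calls out in \cref{rem:kps}. The fix (which the paper carries out in \cref{lem:outer:matrix:code}) is to work with an $\F_q$-generator matrix $A \in \F_q^{R_0 \ell n \times \ell n}$, form $A^\top M A$ with $M$ a symmetric $\F_q$-matrix, and treat each $\ell\times\ell$ $\F_q$-block as an alphabet symbol; the price is that the block at position $(j,i)$ is the \emph{transpose} of the block at $(i,j)$ rather than literally the same $\F_Q$-element, which is why the concatenation map in \eqref{eq:concat} applies $\EncIn(\cdot)^\top$ on the lower triangle. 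Your passing remark ``with its transpose at $(j,i)$'' shows you sensed this, but it is in tension with the $\F_Q$-linear encoding you wrote down; the two need to be reconciled explicitly. Relatedly, once you work over $\F_q$, your ``$N'$ linear conditions'' for $Y_{ii}=0$ become $N'\binom{\ell+1}{2}$ $\F_q$-conditions; still a negligible rate loss, but the bookkeeping changes.

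Two of your choices genuinely differ from the paper's and both are sound. (1) You enforce the zero diagonal by restricting to the subcode $\{X : g_i^\top X g_i = 0\,\forall i\}$, which costs a small dimension deficit but keeps every row/column of $Y$ a genuine codeword of $C$; the paper instead \emph{projects} diagonal blocks to zero (losing no dimension) and must then invoke distance $\ge \delta' n + 2$ to show the projection is injective and that erasure correction still goes through. Your route avoids that ``$+2$'' annoyance at a negligible rate cost. (2) You give a constructive two-pass tensor erasure decoder, whereas the paper argues purely combinatorially that a nonzero entry survives in $C|_{\overline{E},\overline{F}}$; these are equivalent in strength (and the two-pass decoder is what makes the quasi-linear time claim concrete), but the paper's distance argument also handles the stronger setting of distinct row- and column-erasure sets $E \neq F$, which your write-up restricts to $E = F$.
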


In general, our strongly explicit construction follows the matrix
concatenation framework of \cite{KPS24}. Similarly to the framework of
\cite{KPS24}, we first choose a nearly-MDS code $\mathcal{C}_0$ with
relative distance $\sqrt{\delta}$ and rate
$1-\sqrt{\delta}-O(\eta)$. Then, we take a symmetric tensor product of
two copies of $\mathcal{C}_0$ as an outer
$[O(N), \sqrt{\delta}]_{Q}$-graph code over a large alphabet
$Q = q^{\poly(1/\eta)}$. Finally, in order to reduce the alphabet size
down to $q$, we concatenate this outer graph code with an inner code,
which is an optimal (bipartite)
$[\poly(1/\eta),\poly(1/\eta),\sqrt{\delta},\sqrt{\delta}]_q$-graph
code. The final concatenated code is an $[N,\delta]_q$-graph code with
rate at least $(1-\sqrt{\delta})^4-\eta$. There are three main
differences from \cite{KPS24}, listed below, that allow us to achieve
improved results.

\begin{enumerate}
\item While \cite{KPS24} uses a tensor product of Reed-Solomon codes
  as the outer code, we instead use our code from
  \cref{thm:almost:MDS:improve}. In order to achieve strongly explicit
  constructions, we are not able to use other nearly-MDS constructions
  such as \cite{AEL95} or algebraic geometry codes (see
  \cref{sec:overview} for a detailed discussion).

\item Our inner bipartite graph codes are the explicit codes that we
  construct in \cref{coro:matrix:explicit}, rather than those found by
  exhaustive search as in \cite{KPS24}.

\item In order to construct strongly explicit codes, \cite{KPS24}
  needs to perform concatenation three times. However, our choice of
  the outer code allows us to perform a single round of code
  concatenation, and thereby achieve improved rates.
\end{enumerate}

\begin{remark}[The Choice of the Outer Graph Code] \label{rem:kps} The
  result in \cite{KPS24} uses Reed-Solomon codes to first build
  symmetric tensor codes with zeros on the diagonal, which are then
  used as the outer graph code over large alphabets. This construction
  requires an outer code of alphabet size $O(\log^2{N})$, preventing
  the use of an exhaustive search to find a suitable inner graph
  code. Therefore, multiple layers of code concatenation are applied
  to bring down the alphabet size, resulting in a worse rate-distance
  tradeoff.  We could have started from the same outer code
  construction here as well, considering that our strongly explicit
  construction of optimal bipartite graph codes in \cref{sec:graphs}
  (namely, \cref{coro:matrix:explicit}) can accommodate any desired
  block length and leads to a single-layered code concatenation
  regardless.  However, in \cref{lem:outer:matrix:code} below we
  re-derive the argument for the more general case where merely an
  $\F_q$-linear code over a larger alphabet is available.  In this
  case, a direct tensor product over the code's actual alphabet would
  not automatically provide the required symmetry structure; instead
  requiring a tensor product over the base field $\F_q$, which makes
  the argument slightly more subtle. We choose to provide the more
  general framework that for future applications may be found
  worthwhile.
\end{remark}

\begin{proof}[Proof of \cref{thm:graph:code:explicit}]
  We set $\eps=\eta/3$ and $\delta'=\sqrt{\delta}$ to be the gap to
  capacity and the relative erasure tolerance of our inner and outer
  codewords, respectively. First, we construct a strongly explicit
  outer graph code with a relative erasure correction of $\delta'$,
  albeit over a large alphabet.

\begin{lem}[Outer Graph Code]\label{lem:outer:matrix:code}
  There are parameters $\ell'=\poly(1/\eps)$ and $n_0=\poly(1/\eps)$
  such that the following holds. For all $n\ge n_0$ and
  $\ell'\leq \ell\leq \poly(\log n)$, there is a strongly explicit
  $\F_q$-linear $[n,\delta']_{Q}$-graph code
  $\Co\subseteq \F_{Q}^{n\times n}$, where $Q=q^{\ell^2}$, with
  $\log_Q |\cC|/\binom{n+1}{2} \geq (1-\delta')^2-\eps$, implying in
  particular a rate lower bound of $\Ro\ge (1-\delta')^2-\eps$.
\end{lem}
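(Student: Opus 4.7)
The plan is to execute a symmetric tensor product construction tailored to handle an $\F_q$-linear (not necessarily $\F_{q^\ell}$-linear) nearly-MDS code. First, I would invoke \cref{thm:almost:MDS:improve} with relative distance $\delta'$ and gap $\eps/3$ to obtain a strongly explicit $\F_q$-linear nearly-MDS code $C_0 \subseteq \F_{q^\ell}^n$ of relative distance exceeding $\delta'$ and rate $R_0 \ge 1 - \delta' - \eps/3$; this determines the minimum alphabet exponent $\ell' = \poly(1/\eps)$ and minimum length $n_0 = \poly(1/\eps)$.

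Next, I would define $\Co \subseteq \F_Q^{n\times n}$ where $Q = q^{\ell^2}$ by identifying $\F_Q$ with $\F_q^{\ell \times \ell}$ via a fixed $\F_q$-linear isomorphism, and taking $\Co$ to consist of all matrices $M$ whose block representation $(B_{ij})_{i,j\in[n]}$, with $B_{ij} \in \F_q^{\ell \times \ell}$, satisfies: (i) $B_{ij} = B_{ji}$ for all $i, j$ (which is exactly $M_{ij} = M_{ji}$ at the $\F_Q$-level); (ii) $B_{ii} = 0$ for all $i$ (zero diagonal); and (iii) for each $i \in [n]$ and each $\alpha \in [\ell]$, the length-$n$ vector over $\F_{q^\ell}$ whose $j$-th entry is $B_{ij}[\alpha,\cdot] \in \F_q^\ell \cong \F_{q^\ell}$ is a codeword of $C_0$. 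Linearity over $\F_q$ and strong explicitness of $\Co$ follow directly from the corresponding properties of $C_0$.

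The erasure correction proceeds in two stages, given any $S \subseteq [n]$ with $|S| \le \delta' n$ indicating the erased block-rows and columns. First, for each $i \notin S$ and each $\alpha \in [\ell]$, the corresponding $\F_{q^\ell}$-vector in (iii) has at most $|S| \le \delta' n$ erased positions (one per erased block column) and hence can be decoded by $C_0$, recovering all $B_{ij}$ with $i \notin S$; block symmetry (i) then yields all $B_{ij}$ with $i \notin S$ or $j \notin S$. Second, for each $i \in S$ and $\alpha$, the vector in (iii) now has its non-erased portions (blocks with $j \notin S$) known, and $C_0$ recovers the remaining positions with $j \in S$, completing the reconstruction.

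For the rate, I would decompose $\Co$ across the inner row index $\alpha$: for each fixed $\alpha$, the relevant data is a symmetric zero-diagonal matrix $Y_\alpha \in \F_{q^\ell}^{n\times n}$ whose rows lie in $C_0$, and the conditions for different $\alpha$'s decouple, so $\Co \cong \mathcal{D}^\ell$ as $\F_q$-linear spaces, where $\mathcal{D}$ is the $\F_q$-linear space of such $Y_\alpha$'s. A symmetric tensor square argument then yields $\dim_{\F_q} \mathcal{D} \ge \ell [\binom{n R_0 + 1}{2} - n]$, giving $\log_Q |\Co| / \binom{n+1}{2} \ge R_0^2 - O(1/n) \ge (1-\delta')^2 - \eps$ for $n$ sufficiently large. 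The main obstacle is establishing this symmetric tensor dimension bound when $C_0$ is merely $\F_q$-linear --- the ``slightly more subtle'' aspect flagged in \cref{rem:kps} --- which requires either extracting an $\F_{q^\ell}$-linear substructure from the concatenated code produced by \cref{thm:almost:MDS:improve}, or substituting a strongly explicit Reed-Solomon $C_0$ in the subrange $\ell \ge \log_q n$ of the allowed range $[\ell', \poly(\log n)]$, where $\F_{q^\ell}$-linearity is automatic.
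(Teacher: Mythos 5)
Your high-level plan (symmetric tensor of a nearly-MDS code, zero out the diagonal, then argue rate and erasure correction) matches the paper's, but the rate argument has a genuine gap that the paper avoids by a structurally different construction, and you correctly flag the gap without closing it.

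The decomposition $\Co\cong\mathcal{D}^\ell$ is where things break. Your block-level symmetry condition (i), namely $B_{ij}=B_{ji}$, is what makes the conditions decouple across the inner index $\alpha$. The paper's code instead has $B_{ij}=B_{ji}^\top$ (genuine $\F_q$-symmetry of the $\ell n\times\ell n$ matrix $A^\top M A$), which does \emph{not} decouple across $\alpha$. This is not a mere convention: the reason the paper gets a clean dimension formula is precisely that its code is parameterized as $\{A^\top M A : M\in\F_q^{R_0\ell n\times R_0\ell n}\text{ symmetric}\}$, where $A$ is an $\F_q$-generator matrix for $C_0$ viewed inside $\F_q^{\ell n}$; injectivity of $M\mapsto A^\top M A$ immediately gives $\dim_{\F_q}=\binom{R_0\ell n+1}{2}$. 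Under your symmetry convention, $\mathcal{D}$ is the space of $\F_{q^\ell}$-symmetric $n\times n$ matrices over $\F_{q^\ell}$ with rows in a merely $\F_q$-linear $C_0$, and the usual symmetric-tensor dimension count simply does not apply. Your proposed fallback of substituting Reed--Solomon when $\ell\ge\log_q n$ does not rescue the lemma for the intended application: \cref{thm:graph:code:explicit} needs $\ell=\poly(1/\eps)$ to be a constant, and \cref{rem:kps} explains that requiring $\ell$ to grow with $n$ is exactly what forced \cite{KPS24} into iterated concatenation. Your other suggestion (``extracting an $\F_{q^\ell}$-linear substructure'') is not substantiated.

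A second, smaller divergence: you impose $B_{ii}=0$ as a linear \emph{constraint} in the code definition, whereas the paper \emph{truncates} the diagonal blocks of $\mathcal{C}'$ after construction and argues (using distance $\ge\delta' n+2$, not just $>\delta' n$) that this truncation has trivial kernel, hence no dimension is lost. Imposing the constraint and projecting are different operations and can give codes of different dimension; the paper's route cleanly preserves $\binom{R_0\ell n+1}{2}$. Your erasure-decoding sketch is correct for your variant of the code, but the lemma remains unproved without a dimension lower bound for $\mathcal{D}$, and the route you propose to obtain it does not work in the regime the lemma must cover. To close the gap you should drop the block-wise symmetry and the $\mathcal{D}^\ell$ factorization entirely, build the symmetric tensor square over $\F_q$ as $A^\top M A$, and handle the diagonal by truncation rather than constraint.
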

\begin{proof}[Proof of \cref{lem:outer:matrix:code}]
  By \cref{thm:almost:MDS:improve}, there are $n_0=\poly(1/\eps)$ and
  $\ell'=\poly(1/\eps)$ such that for any $n\ge n_0$ and
  $\ell'\leq \ell\leq \poly(\log n)$, there is a strongly explicit
  $\F_q$-linear code $\mathcal{C}_0\subseteq \F_{q^\ell}^{n}$ with
  distance at least $(\delta'+\eps/4) n\ge \delta' n+2$, block length
  $n\ge n_0$, and rate $R_0\ge1-\delta'-\eps/3$. Denote by
  $A\in\F_q^{R_0\ell n\times\ell n}$ a generator matrix for
  $\mathcal{C}_0$ (as a linear code over $\F_q$).  We first consider
  an $\F_q$-linear code on symmetric matrices defined as follows
  \[
    \mathcal{C}'\coloneq\Bigl\{A^\top MA\in \F_q^{\ell n\times\ell
      n}\colon M\in \F_q^{R_0\ell n\times R_0\ell n}, \text{ where $M$
      is symmetric}\Bigl\}.
  \]
  Every codeword $C\in\mathcal{C}'$ is a symmetric matrix since
  $C^\top=(A^\top M A)^\top=A^\top M^\top A=A^\top MA=C$ for any
  symmetric $M$. Moreover, $\mathcal{C}'$ is the set of all symmetric
  matrices such that every row and column is a codeword in
  $\mathcal{C}_0$. Since $\mathrm{rank}(A)=R_0\ell n$, it follows that
  $\mathcal{C}'$ is an $\F_q$-linear code of dimension
  $\binom{1+R_0\ell n}{2}$ whose message space (i.e., the space of
  choices of $M$ in the above presentation) consists of all symmetric
  matrices having $R_0\ell n$ rows and columns. This is a symmetric
  tensor product of two $\F_q$-linear codes over the larger alphabet
  $\F_{q^\ell}$. Since the nearly-MDS code $\mathcal{C}_0$ that we use
  is only $\F_q$-linear rather than $\F_{q^{\ell}}$-linear, we have to
  define the tensor product over $\F_q$ rather than the actual
  alphabet $\F_{q^\ell}$.

  In order to obtain the graph code, we have to guarantee that each
  codeword has a zero diagonal. We can ensure this by simply
  disregarding all block-diagonal positions in the codewords of
  $\mathcal{C'}$.  Namely, since $\mathcal{C}'$ is a code over
  $\F^{\ell n\times \ell n}_q$, we can use $[n]\times[\ell]$ to index
  its rows and columns (and $([n]\times[\ell])^2$ to index its
  entries). Then, we define all positions of the form
  $\bigl((i,x),(i,y)\bigl), i\in[n],x,y\in[\ell]$ to be zeros. This
  has the effect of truncating all diagonal blocks out of the
  codewords of $\mathcal{C}'$ (and replacing them with zeros).  We
  note that this truncation does not incur any loss in rate. To see
  this, it suffices to show that the truncation cannot map a non-zero
  codeword to zero (i.e., it has a trivial kernel).  Take a non-zero
  row in any non-zero codeword of $\mathcal{C}'$, and recall that this
  is a non-zero codeword of $\mathcal{C}_0$ when interpreted as a
  vector in $(\F^{\ell}_q)^n$. Since the truncation only affects one
  of the $n$ blocks that this row contains, from the large distance of
  $\mathcal{C}_0$ we know that the row remains non-zero after the
  truncation.

  We now show that $\Co$ recovers from any $\delta' n$ ``block
  erasures'' of rows and columns. More precisely, given a non-zero
  $C \in \Co$ and any erasure sets $S,T\subseteq[n]$ of size bounded
  as $|S|,|T|\leq \delta' n$, denote
  $\overline{S} \coloneq [n] \setminus S$ and
  $\overline{T} \coloneq [n] \setminus T$. The goal is to show that
  the sub-matrix of $C$ consisting of the blocks of rows and columns
  picked by $\overline{S}$ and $\overline{T}$; in notation,
  $C_{\overline{S},\overline{T}}$, must be non-zero.  This is
  equivalent to our original assertion that $\Co$ is an $\F_q$-linear
  $[n,\delta']_{Q}$-graph code, for $Q=q^{\ell^2}$, by considering
  each codeword $C\in\F^{\ell n\times\ell n}_q$ as a symmetric matrix
  in $\F^{n\times n}_{Q}$ with a zero diagonal.

  We use $[n]\times[\ell]$ to index the rows and columns and use
  $C[(i_1,i_2),(j_1,j_2)]$ to denote the entry of $C$ indexed by
  $\bigl((i_1,i_2),(j_1,j_2)\bigl)\in([n]\times[\ell])\times([n]\times[\ell])$. Given
  a non-zero $C$, pick any non-zero row indexed by $(i_1, i_2)$.
  Recall that each row and column of $C$ are codewords of the
  $\F_q$-linear code $\cC_0$.  The erasure correction of this code
  (which is only slightly affected by the truncation; already
  accounted for) implies the existence of a non-zero entry
  $C[(i_1,i_2),(j_1,j_2)]$ where $j_1 \in \overline{T}$.  Now, we can
  use a similar argument over the non-zero column of $C$ indexed by
  $(j_1, j_2)$ to demonstrate a non-zero entry
  $C[(i'_1,i'_2),(j_1,j_2)]$, where $i'_1 \in \overline{S}$. We have
  found a non-zero entry in $C_{\overline{S},\overline{T}}$.

  To calculate the size of the code, we consider
  \[
    \frac{\log_Q |\Co|}{\binom{n+1}{2}}=\frac{\log_q
      |\cC'|}{\binom{n+1}{2} \ell^2}>\frac{\binom{1+R_0\ell
        n}{2}}{\binom{n+1}{2}\ell^2}=\frac{R_0\ell n(R_0\ell
      n+1)}{\ell^2 n(n+1)}\ge R_0^2\ge(1-\delta'-\eps/3)^2\ge
    (1-\delta')^2-\eps,\] which, in particular, results in the desired
  rate lower bound.

  Finally, since $\ell\leq \poly(\log n)$, by the strong explicitness
  of $\mathcal{C}_0$, each entry of $A$ can be computed in
  $\poly(\log n)$ time. Therefore, each entry of a generator matrix of
  $\Co$, as an $\F_q$-linear code, can also be computed in
  $\poly(\log n)$ time. This ensures that our construction is strongly
  explicit.
\end{proof}

We now concatenate the code constructed by
\cref{lem:outer:matrix:code} with a small, optimal, inner bipartite
graph code over $\F_q$ to get the final graph code $\mathcal{C}$. By
\cref{coro:matrix:explicit}, we can choose an appropriate
$\ell=\eps^{-\Theta(1)}$ and construct a linear
$[\ell/\sqrt{\Ri},\ell/\sqrt{\Ri},\delta',\delta']_q$-graph code
$\Ci\subseteq \F^{\ell/\sqrt{\Ri}\times \ell/\sqrt{\Ri}}_q$ that can
recover any $\delta'$ fraction of row and column erasures at rate
$\Ri\ge (1-\delta')^2-\eps$.  By choosing an appropriate parameter
$\ell$, we can assume that the matrix dimension
$D \coloneqq \ell/\sqrt{\Ri}$ of the inner graph code is a power of
two (as needed by \cref{coro:matrix:explicit}). The concatenation is
between the graph code $\Co$ from \cref{lem:outer:matrix:code}, as the
outer code, and $\Ci$, as the inner code.

\newcommand{\EncIn}{\mathsf{Enc}_{\mathsf{in}}}
\newcommand{\Enc}{\mathsf{Enc}} We formally describe the code
concatenation as follows. Let $\Co\subseteq\F_{Q}^{n\times n}$, where
$Q=q^{\ell^2}$, be the $\F_q$-linear $[n,\delta']_{Q}$-graph code
defined by \cref{lem:outer:matrix:code}. For any outer codeword
$C\in\Co$, we consider each entry $(i,j)\in[n]\times [n]$ of $C$ as a
matrix $C[i,j]\in\F^{\ell\times\ell}_q$. Recall that
$\Ci\subseteq \F_q^{D\times D}$ denotes a linear
$[D,D,\delta',\delta']_q$-graph code with $|\Ci|=q^{\ell^2}$. We
denote by $\EncIn\colon\F_q^{\ell\times\ell}\to \F_q^{D\times D}$ any
$\F_q$-linear encoder for $\Ci$ (defined by fixing some generator
matrix). For any outer codeword $C \in \Co$, we define
$\Enc(C)\in(\F_q)^{nD\times nD}$ as follows, using $[n]\times[D]$ to
index the rows and columns of $\Enc(C)$.
\begin{equation}\label{eq:concat}
  \Enc(C)|_{(i\times [D]),(j\times[D])}\coloneq
  \begin{cases}\EncIn(C[i,j])\quad &\text{when }i\leq j\\
    \EncIn(C[i,j]^\top)^\top \quad &\text{when $i>j$}
  \end{cases}\quad\forall (i,j)\in[n]\times[n].
\end{equation}
Observe that $\Enc(C)$ is a symmetric matrix with an all-zeros
diagonal (in fact, it has an all-zeros block diagonal).  We then
define the concatenated code $\Co\circ\Ci$ to be the graph code
$\mathcal{C} \coloneqq \{\Enc(C)\colon C\in \Co\}\subseteq
\F^{nD\times nD}_q$. Recall that $\eta=3\eps$ and $\delta=\delta'^2$,
and that $\Co$ has rate $\Ro\ge (1-\delta')^2-\eps$. Denoting
$N \coloneqq nD$, we show that $\mathcal{C}$ is the desired linear
$[N,\delta]_q$-graph code, thus completing\footnote{We note a slight
  technicality that this requires the final dimension parameter $N$ to
  be an integer multiple of the constant $D$. However, by trivially
  padding $N \times N$ matrices with additional zero rows and columns,
  $N$ can be taken to be any (large enough) integer without
  significantly affecting the rate.} the proof of
\cref{thm:graph:code:explicit}.

The rate $R$ of $\mathcal{C}$ is nearly lower bounded by the product
of the rates of the inner and outer codes, as in standard code
concatenation.  To be precise, the size of the concatenated code is
equal to the size of the outer code which, using
\cref{lem:outer:matrix:code}, leads to the rate lower bound
\begin{align}
  R &= \frac{\log_q |\Co|}{\binom{N}{2}}
      = \frac{\ell^2 \log_Q |\Co|}{\binom{N}{2}}
      = \frac{\Ri D^2 \log_Q |\Co|}{\binom{N}{2}} \label{eqn:rate:a} \\
    &\geq \frac{D^2((1-\delta')^2-\eps)^2\binom{n+1}{2}}{\binom{nD}{2}} 
      \label{eqn:rate:b} \\
    &\geq ((1-\delta')^2-\eps)^2 
      \label{eqn:rate:c} \\ 
    &\geq 
      (1-\sqrt{\delta})^4 - \eta.
      \label{eqn:rate:d}
\end{align}
Here, \cref{eqn:rate:a} uses the definition of rate, \cref{eqn:rate:b}
follows from \cref{lem:outer:matrix:code} and the bound designed for
the rate of the inner code, \cref{eqn:rate:c} follows from a simple
manipulation, and \cref{eqn:rate:d} follows from the choice of
$\delta'$ and $\eps$.  Additionally, since $\Co$ is strongly explicit
and $\Ci$ has block length $D^2=O(\ell^2)\leq\poly(\log N)$, the
concatenated code $\mathcal{C}$ is also strongly explicit.

In order to show the erasure correction guarantee, it suffices to
prove that for any non-zero codeword $C\in\mathcal{C}$ and row and
column erasure sets $E,F\subseteq [n]\times[D]$ where
$|E|,|F|\leq \delta nD$, the matrix $C|_{\overline{E},\overline{F}}$
is non-zero. Here, $\overline{E}$ and $\overline{F}$ denote
$([n]\times[D])\setminus E$ and $([n]\times[D])\setminus F$,
respectively.  Note that the definition of graph codes
(\cref{def:graph:code}) only requires recovery against matching row
and column erasure sets (i.e., when $E=F$). However, we are able to
provide a stronger guarantee of recovery from possibly distinct
erasure sets $E$ and $F$ as well.

Let $E_0\subseteq [n]$ denote the set
$\{i\in [n] \colon |\{E\cap (i\times[D])\}|> \delta' D \}$, and define
$F_0$ similarly for the column indices.  By an averaging argument, it
follows that $\max\{|E_0|,|F_0|\}<\delta' n$.  Let $C'\in\Co$ be the
outer codeword such that $\Enc(C')=C$; that is, the underlying outer
codeword from which we obtain $C$. Then, as guaranteed by
\cref{lem:outer:matrix:code}, there must exist an
$(i,j)\in([n]\setminus E_0)\times([n]\setminus F_0)$ such that
$C'[i,j]$ is non-zero. Therefore, from \cref{eq:concat}, we know that
the corresponding inner codeword $C|_{i\times[D],j\times [D]}\in\Ci$
(or its transpose) must be a non-zero codeword of $\Ci$.

Recall that $(i,j)\in ([n]\setminus E_0)\times([n]\setminus
F_0)$. Consider the non-zero submatrix
$C_{i,j} \coloneqq C|_{i\times[D],j\times[D]}$. At most $\delta' D$
rows and $\delta' D$ columns of $C_{i,j}$ are erased by the erasure
sets $E$ and $F$. Since $\Ci$ is a linear
$[D,D,\delta',\delta']_q$-graph code, it follows that $C_{i,j}$ is
non-zero even after the erasures indicated by $E$ and $F$. Thus, we
conclude that $C|_{\overline{E},\overline{F}}$ is non-zero.

Our construction requires a one-time pre-processing time of
$\exp(\eta^{-O(1)})$ for the construction of the inner bipartite graph
code (namely, \cref{coro:matrix:explicit}).  To confirm the running
time of the encoder and decoder, recall that the outer graph code is
encodable and decodable in quasi-linear time as a consequence of the
outer code underlying the tensor-based construction being encodable
and decodable in quasi-linear time. The latter is the case by
\cref{thm:almost:MDS:improve}.  Since the inner bipartite graph code
from \cref{coro:matrix:explicit} also allows quasi-linear time
encoding and erasure decoding, we conclude that our concatenated
construction can be encoded and erasure decoded in quasi-linear time
in the block length $\binom{N}{2}$.
\end{proof}

\section{Concluding Remarks}

This work studies two paradigms for achieving near-Singleton-bound
guarantees for erasure codes over constant-sized alphabets. The first
paradigm is to introduce a small amount of randomness in the code
construction (equivalently, resorting to a small \emph{family} of
codes over a fixed, such as binary, alphabet such that any erasure
pattern can be corrected by almost all codes). The second is to
increase the alphabet size to a large constant that can only depend on
the gap to capacity. While the latter paradigm has been extensively
studied, including by the celebrated work of Alon, Edmonds, and Luby
\cite{AEL95} (referred to as the AEL construction), the former has
received much less attention (e.g., \cite{Che09} is among the examples
that explicitly studies this notion). In this work, we have shown that
codes in the former paradigm imply codes in the latter
(\cref{sec:AEL}). A natural question would be to study whether the
reverse could also be true at least for specifically structured
constructions.

On a related note, we observe striking similarities between our
constructions of erasure code families \cref{sec:codes} and the AEL
construction. Both constructions essentially concatenate a
constant-sized object of the kind being constructed with an outer code
that is capable of correcting any lingering erasures (a small
fraction). While AEL deterministically rearranges the bits from
different packets (outer code symbols) into large packets using an
off-the-shelf expander graph to construct the final code, our
construction pseudo-randomly reshuffles (essentially permutes) all
bits using a randomness extractor.  Other than this broad view
similarity, the analyses for why each construction works appear
disconnected.

Nevertheless, it does appear that for the particular structure of each
construction, the underlying pseudorandom object (edge-expander graphs
of Ramanujan-type for AEL and strong extractors for ours) is
\emph{necessary} and sufficient. Both constructions (AEL and the
result of \cref{sec:AEL}) achieve a comparable alphabet size of
$\exp(\tilde{O}(1/\eta^{4}))$ when the optimal pseudorandom objects
(Ramanujan graphs for AEL and optimal strong extractors for ours) are
used.

Furthermore, and curiously, all explicit constructions known to us of
extractors for high-entropy sources that achieve seed lengths only
depending on the entropy deficiency of the source fundamentally
utilize high-quality expander graphs.  The construction \cite{RVW01}
that we have used, as well as \cite{CRVW02}, are based on the zig-zag
product constructions of expander graphs directly adapted to provide
an analogous product for extractor-type objects, but do not use
expander graphs as a black box. The closest extractor construction
resembling what AEL does is \cite{GW97}. However, this construction
additionally needs a universal family of hash functions (or a generic
strong extractor) combined with an off-the-shelf expander graphs and
achieves guarantees that are far from optimal even if optimal
(Ramanujan) expanders are used.  There are other extractor
constructions that can use off-the-shelf expander graphs, but they
utilize random walks on expander graphs, which is conceptually
different from the \emph{one-shot} bundling approach of AEL and also
do not lead to strong extractors \cite{AB09}*{Section~21.5.6}.  We
note that a formal correspondence between extractors and expander
graphs is known (\cites{Sha04,Vad10}). However, this works for
bipartite (unbalanced) vertex expanders and in a different parameter
regime (large, growing, degree) than is of interest to us and does not
appear to shed light on our inquiry.

Our work revisits the question of improved alphabet size for AEL-type
constructions. As we have shown, there are non-explicit erasure code
families that, if used in our framework, can lead to an alphabet size
$\exp(O(1/\eta^2))$ (quadratically better in the exponent than AEL),
and this motivates a continued study of erasure code families (over
binary or fixed alphabets) with improved parameters.  On the other
hand, random codes on the Gilbert-Varshamov bound achieve an alphabet
size $\exp(O(1/\eta))$ and any improvement achieving this (or beyond)
using only combinatorial tools would be considered a major
breakthrough.  Our work motivates and leaves open the question of
explicit construction of nearly-MDS codes encodable and
erasure-decodable in quasi-linear time that achieve an alphabet size
better than $\exp(\tilde{\Theta}(1/\eta^4))$ for gap to capacity
$\eta >0$.  Related to our framework, we ask for explicit strong
extractors for the high-entropy regime that extract almost all entropy
and achieve seed length $2\log(1/\eps) + f(\Delta)$ for error $\eps$
and some function $f$ of the entropy deficiency $\Delta$.

Finally, it remains an interesting open problem to construct
$[N,\delta]_q$-graph codes achieving the optimal rate
$R=(1-\delta)^2-o(1)$. We have resolved this problem for bipartite
graph codes, but the question for the non-bipartite case remains open.

\Omit{
  \section*{Acknowledgments}
  The authors thank Alexander Barg, Venkatesan Guruswami, Swastik
  Kopparty, Salil Vadhan, Chaoping Xing, Eitan Yaakobi, and David
  Zuckerman for discussions on the literature related to bounds on
  codes, codes on graphs, and extractor constructions. This research
  was partially supported by the National Science Foundation under
  Grant No.\ CCF-2236931.}

\bibliographystyle{alpha} \bibliography{references.bib}

\end{document}